\def\shuffle{\sqcup\mathchoice{\mkern-7mu}{\mkern-7mu}{\mkern-3.2mu}{\mkern-3.8mu}\sqcup}
\newcommand\fs@spaceruled{\def\@fs@cfont{\bfseries}\let\@fs@capt\floatc@ruled
  \def\@fs@pre{\vspace{5\baselineskip}\hrule height.8pt depth0pt \kern2pt}%
  \def\@fs@post{\kern2pt\hrule\relax}%
  \def\@fs@mid{\kern2pt\hrule\kern2pt}%
  \let\@fs@iftopcapt\iftrue}
\providecommand{\customgenericname}{}
\newcommand{\newcustomtheorem}[2]{%
  \newenvironment{#1}[1]
  {%
   \renewcommand\customgenericname{#2}%
   \renewcommand\theinnercustomgeneric{##1}%
   \innercustomgeneric
  }
  {\endinnercustomgeneric}
}
\newtheorem{theorem}{Theorem}[section]
\newtheorem{proposition}[theorem]{Proposition}
\newtheorem{corollary}[theorem]{Corollary}
\newtheorem{lemma}[theorem]{Lemma}
\newtheorem{definition}[theorem]{Definition}
\theoremstyle{remark}
\newtheorem{remark}[theorem]{Remark}
\newtheorem{example}[theorem]{Example}
\def\word#1{{\color{blue}\mathbf{#1}}}
\date{\today}
\title{Nonparametric pricing and hedging\\of exotic derivatives\footnote{Opinions expressed in this paper are those of the authors, and do not necessarily reflect the view of JP Morgan.\newline
The authors would like to thank Samuel Cohen for his helpful insights towards the improvement of this paper.\newline
This work was supported by The Alan Turing Institute under the EPSRC grant EP/N510129/1.}}
\author[1,2]{Terry Lyons}
\author[1,2]{Sina Nejad}
\author[1,2,3]{Imanol Perez Arribas}
\affil[1]{{\em Mathematical Institute, University of Oxford}}
\affil[2]{{\em The Alan Turing Institute, London}}
\affil[3]{{\em J.P. Morgan, London}}
\theoremstyle{definition}
\def\gap{\; \; \;}
\newcommand{\RR}{\mathbb{R}}
\begin{document}
{\setstretch{1}
\maketitle
}

\begin{abstract}
In the spirit of Arrow-Debreu, we introduce a family of financial derivatives that act as primitive securities in that exotic derivatives can be approximated by their linear combinations. We call these financial derivatives signature payoffs. We show that signature payoffs can be used to nonparametrically price and hedge exotic derivatives in the scenario where one has access to price data for other exotic payoffs. The methodology leads to a computationally tractable and accurate algorithm for pricing and hedging using market prices of a basket of exotic derivatives that has been tested on real and simulated market prices, obtaining good results.

\end{abstract}


\section{Introduction}

Arrow-Debreu securities \cite{arrow, debreu} are \textit{idealised} and \textit{basic} securities that pay one unit of numeraire for a particular market state at a specific future time, and pay nothing otherwise. These securities are \textit{primitive} in the sense that the cash flow of any derivative can be approximately written in terms of Arrow-Debreu securities. In this paper, we identify a family of primitive securities for path-dependent exotic derivatives. Analogously to Arrow-Debreu securities, cash flows of exotic derivatives can be approximated by linear combinations of these primitive securities.

When one has only limited access to market prices for a class of financial products, one may be interested in studying if knowledge of these prices can be leveraged to price other financial products in an arbitrage-free market \cite{arrowdebreuempirical, localvolempirical, lo}. For example, if one is able to price zero-coupon bonds, one could deduce the price of other coupon-bearing bonds by writing the cash flows as combinations of zero-coupon bonds. Similarly, if prices of European call and put options are observable in the market, it was shown in \cite{putcall} that this information is enough to price any European contingent claim, by writing such contingent claims in terms of put and call options.

In this paper, we take this idea a step further by showing that knowledge of prices of enough {\em exotic} derivatives is sufficient to accurately derive prices and hedging strategies of other {\em exotic} derivatives in a nonparametric manner. First, in the spirit of Arrow-Debreu \cite{arrow, debreu}, we approximate exotic derivatives in terms of simpler payoffs called \textit{signature payoffs} (Definition \ref{def:signature payoff}). Then, we infer a certain quantity from the market: the {\em implied expected signature} (introduced in Section \ref{subsec:implied ES}). This procedure, which is model-free in nature, is empirically demonstrated in Section \ref{sec:market data}.

Signature payoffs are a family of path-dependent derivatives defined in terms of certain iterated integrals and, because of this, signature derivatives contain a lot of information about all possible dynamic trading strategies.

When one buys or sells a financial derivative on one or several assets, one is immediately exposed to certain risks. If this risk is unwanted, one may be interested in offsetting it by trading the underlying assets. This problem, known as \textit{hedging}, is a classical problem in mathematical finance (\cite{bs, merton, foellmer}).

In idealised markets that are complete and frictionless, by definition it is possible to perfectly hedge such financial derivatives or payoffs. Therefore, in these cases the risk can theoretically be completely eliminated by following the hedging strategy.

However, transaction costs and other market frictions make real markets incomplete and hence it is not possible, in general, to perfectly hedge any given payoff. On top of that, market frictions such as transaction costs or liquidity constraints reduce the trader's ability to hedge. In these cases, one can try to find a hedging strategy that is optimal in the sense that it minimises a certain cost function. This cost function would be chosen by the trader, depending on her risk preferences.

An example of such optimisation problems is the mean-variance problem, where one wants to minimise the $L^2$ norm of the profits and losses (P\&L) of the trading strategy (\cite{meanvariance1, meanvariance2, meanvariance3, meanvariance4}). This risk measure penalises any differences between the payoff and the corresponding hedging strategy. In particular, this risk preference penalises profits as well as losses. If one doesn't wish to penalise profits, the exponential utility function $x\mapsto \exp(-\lambda x)$ can be used instead, where $\lambda>0$ is the risk-tolerance parameter (\cite{entropic1, entropic2}).

An open problem with obvious practical applications is how one could find a minimiser (or minimising sequence) for these optimal hedging problems. The paper \cite{levy} addresses this question for the mean-variance problem for vanilla options on L\'evy processes, where the authors give a semi-explicit solution. This was later extended in \cite{goutte}, where the authors provide an algorithm for mean-variance hedging of vanilla options. In \cite{bsde}, on the other hand, the authors use BSDEs to try to get the optimal hedge for the mean-variance problem in the frictionless framework. Nevertheless, in general the optimal hedge seems to be difficult to find in practice. In \cite{lo}, on the other hand, the authors use neural networks to price and hedge European options. In \cite{deephedging} this was extended to exotic derivatives, where the authors try to solve the optimisation problem by approximating the optimal hedging strategy with deep learning. However, this approach will, in general, only find local minima and not the global minima. Moreover, the training process can be computationally expensive.

In this paper, we address the optimal hedging problem of minimising the expectation of a polynomial on the P\&L. In particular, when the polynomial is chosen to be $x^2$, the classical mean-variance optimal hedging problem is retrieved. Stating the problem for general polynomials allows us to address the optimal hedging problem for the exponential utility function as well.

We make use of signatures from rough path theory (\cite{lyonsbook}) to reduce the general optimal hedging problem to a finite-dimensional optimisation problem that is computationally solvable. This is done in Section \ref{sec:optimal hedging} by solving a linearised version of the problem first (see Theorem \ref{th:sig hedging reduced}) and showing then that it suffices to solve this linearised problem to solve the original one (see Theorem \ref{th:putting things together}). This method is then described in Algorithm \ref{algo:optimal hedge}. We assume that the price process $X$ and volatility $\langle X \rangle$ of the underlying asset are such that $\mathbb X^{LL} := ((X, X), \langle X \rangle)$ is a 2-dimensional geometric rough path \cite{lyonsbook} which we call the lead-lag price path. It is shown in \cite{guy} that almost all sample paths $X$ of continuous semimartingales, together with the quadratic variation $\langle X \rangle$, have this geometric rough path property. However, we do not impose any model on the dynamics of the asset, so that our approach is model-free. Moreover, we show in Section \ref{sec:market data} that our methodology can be applied from market data without making modelling assumptions.

The signature of a path is a transformation of path space that, in certain ways, behaves similarly to the Taylor expansion. Real-valued continuous functions on $\mathbb{R}^d$ are well approximated by linear functions on some polynomial basis. Similarly, real-valued continuous functions on some path space are well-approximated by linear functions on signatures (Lemma \ref{prop:density sig payoffs}). This was already leveraged in the context of finance in \cite{signature_pricing}, where payoffs were priced by writing them as linear functions on signatures. Signatures are concise and informative feature sets for paths, and as a consequence they have been used in machine learning in contexts other than finance, such as in mental health, handwriting recognition and gesture recognition (\cite{ml1, ml2, ml3, ml4, ml5, ml6}).

In Section \ref{sec:extensions} we solve several extensions of the original problem. For instance, in Sections \ref{subsec:transaction costs} and \ref{subsec:liquidity} we solve the problem with market frictions -- namely transaction costs and liquidity constraints. In Section \ref{subsec:semistatic}, on the other hand, we study the semi-static hedging problem where the trader has access to a basket of derivatives for static hedging. Finally, in Section \ref{subsec:delayed hedging} we show how the optimal hedge can be found when the agent starts trading at a positive time after inception of the derivative.

Our approach is model-free in the sense that no model is assumed for the market dynamics. We show in Section \ref{sec:market data} how our methodology can be applied using market data, without attempting to model the underlying asset's price process. This is done using the \textit{implied expected signature}, an object that is discussed in Section \ref{subsec:implied ES}. We first use the implied expected signature to predict market prices of exotic payoffs in Section \ref{subsec:pricing IES}, and then to hedge these payoffs in Section \ref{subsec:hedging IES}.

In certain cases, however, one is interested in modelling the price path with a certain stochastic process. This setting is studied in Section \ref{sec:numerical experiments}, where we carry out some numerical experiments for a variety of payoffs and market models, obtaining good results.

\section{Signatures}

In this section we will introduce the rough path theory tools that will be need in this paper. A full introduction to rough path theory, however, is beyond the scope of this paper -- we refer to \cite{lyonsbook} for a detailed review of rough path theory.

\subsection{Tensor algebra}\label{subsec:tensor algebra}

Signatures take value on a certain graded space: the tensor algebra. We will now define this space and its algebraic structure.

\begin{definition}
Let $d\geq 1$. We define the extended tensor algebra over $\mathbb R^d$ by
$$T((\mathbb R^d)) := \{\mathbf a = (a_0, a_1, \ldots, a_n, \ldots) \;|\;a_n\in (\mathbb R^d)^{\otimes n}\}.$$
Similarly, we define truncated tensor algebra of order $N\in \mathbb N$ and the tensor algebra, denoted by $T^{(N)}(\mathbb R^d)$ and $T(\mathbb R^d)$ respectively, by
$$T^{(N)}(\mathbb R^d) := \{\mathbf a = (a_n)_{n=0}^\infty\;|\;a_n\in (\mathbb R^d)^{\otimes n}\mbox{ and }a_n=0\,\forall n\geq N\}\subset T((\mathbb R^d)),$$
$$T(\mathbb R^d) := \bigcup_{n\geq 0} T^{(n)}(\mathbb R^d)\subset T((\mathbb R^d)).$$
\end{definition}

Intuitively, the extended tensor algebra $T((\mathbb R^d))$ is the space of all sequences of tensors, $T(\mathbb R^d)$ is the space of all finite sequences of tensors and $T^{(N)}(\mathbb R^d)$ is the space of all sequences of length $N$ of tensors.

We equip $T((\mathbb R^d))$ with two operations: a sum $+$ and a product $\otimes$. These are defined, for $\mathbf a=(a_i)_{i=0}^\infty,\mathbf b=(b_i)_{i=0}^\infty\in T((\mathbb R^d))$, by:
\begin{align*}
&\mathbf a + \mathbf b := (a_i + b_i)_{i=0}^\infty,\\
&\mathbf a \otimes \mathbf b := \left (\sum_{k=0}^i a_k\otimes b_{i-k}\right )_{i=0}^\infty.
\end{align*}
We also define the action on $\mathbb R$ given by $\lambda \mathbf a := (\lambda a_i)_{i=0}^\infty$ for all $\lambda\in \mathbb R$. These operations induce analogous operations on $T(\mathbb R^d)$ and $T^N(\mathbb R^d)$.

Let $\{e_1, \ldots, e_d\}\subset \mathbb R^d$ be a basis for $\mathbb R^d$, and let $\{e_1^\ast, \ldots, e_d^\ast\}\subset (\mathbb R^d)^\ast$ be the associated dual basis for the dual space $(\mathbb R^d)^\ast$. This induces a basis for $(\mathbb R^d)^{\otimes n}$:
$$\{e_{i_1} \otimes \ldots \otimes e_{i_n} \;\mid\;i_j\in \{1, \ldots, d\} \mbox{ for }j=1, \ldots, n\}$$
and a basis of $((\mathbb R^d)^\ast)^{\otimes n}$:
$$\{e_{i_1}^\ast \otimes \ldots \otimes e_{i_n}^\ast \;\mid\;i_j\in \{1, \ldots, d\} \mbox{ for }j=1, \ldots, n\}.$$
Bases for $T((\mathbb R^d))$ and $T((\mathbb R^d)^\ast)$ are then canonically constructed from the bases for $(\mathbb R^d)^{\otimes n}$ and $((\mathbb R^d)^\ast)^{\otimes n}$, respectively.

We will identify the dual space $T((\mathbb R^d)^\ast)$ with the space of all \textit{words}. Consider the alphabet $\mathcal A_d:=\{\word 1, \word 2, \ldots, \word d\}$, which consists of $d$ letters. We make the following identification:
$$e_{i_1}^\ast \otimes \ldots e_{i_n}^\ast \in T((\mathbb R^d)^\ast)\longleftrightarrow \word{i_1}\ldots\word{i_n}\in \mathcal W(\mathcal A_d)$$
where $\mathcal W(\mathcal A_d)$ is the real vector space of all words with alphabet $\mathcal A_d$. The empty word will be denoted by $\word \varnothing \in \mathcal W(\mathcal A_d)$. We then have the identification $T((\mathbb R^d)^\ast)\cong \mathcal W(\mathcal A_d)$.

\begin{example}
We will now include a few examples in $\mathbb R^2$. In this case, the alphabet is given by $\mathcal A_2 = \{\word 1, \word 2\}$.
\begin{enumerate}
\item Set $\mathbf a := 2 + e_1 - e_2 \otimes e_1\in T((\mathbb R^2))$. Then, $\langle \word \varnothing, \mathbf a \rangle = 2$.
\item Set $\mathbf a := e_1\otimes e_2 - e_2 \otimes e_1\in T((\mathbb R^2))$. Then, $\langle \word{12} + \word{21}, \mathbf a \rangle = 1 - 1 = 0$.
\item Set $\mathbf a := -1 + 3e_1^{\otimes 3}\in T((\mathbb R^2))$. Then, $\langle 2\cdot \word \varnothing + \word{2} + \word{111}, \mathbf a \rangle = 2\cdot (-1) + 0 +3 = 1$.
\end{enumerate}
\end{example}

Two important algebraic operations on words are the sum and concatenation. The sum of two words $\word w, \word v \in \mathcal W(\mathcal A_d)$ is just the formal sum $\word w + \word v \in \mathcal W(\mathcal A_d)$. The concatenation of $\word w=\word {i_1} \ldots \word{i_n}, \word v=\word{j_1} \ldots \word{j_m}\in \mathcal W(\mathcal A_d)$ is defined by
$$\word{wv} := \word{i_1}\ldots\word{i_n j_1}\ldots \word{j_k}\in \mathcal W(\mathcal A_d).$$
This operation is extended by bilinearity to all of $\mathcal W(\mathcal A_d)$. With some abuse of notation, we will use concatenation on $\mathcal W(\mathcal A_d)$ and $T((\mathbb R^d)^\ast)$ interchangeably, in the sense that we will sometimes write $\ell \word w \in T((\mathbb R^d)^\ast)$ for $\ell\in T((\mathbb R^d)^\ast)$, $\word w\in \mathcal W(\mathcal A_d)$ to denote the concatenation of the element in $\mathcal W(\mathcal A_d)$ associated to $\ell$ and the word $\word w$.

\begin{example}
Take the alphabet $\mathcal A_3=\{\word 1,\word 2, \word 3\}$.
\begin{enumerate}
\item Let $\word w = \word{132}$ and $\word v = \word{133}$. Then, $\word{wv}=\word{132133}$.
\item Take $\word w = \word{312}$, $\word v = \word 2$ and $\word u=\word{23}$. Then, $(\word w + \word v)\word u=\word{31223} + \word{223}$.
\end{enumerate}
\end{example}

Another operation one can define on words, which will be key in this paper, is the shuffle product:

\begin{definition}[Shuffle product]\label{def:shuffleproduct}
The shuffle product $\phantom{ }\shuffle\phantom{ }:\mathcal W(\mathcal A_d)\times \mathcal W(\mathcal A_d)\to \mathcal W(\mathcal A_d)$ is defined inductively by $$\word{ua}\shuffle \word{vb}=(\word u\shuffle \word{vb})\word{a}+(\word{ua}\shuffle \word v)\word b,$$ $$\word w \shuffle \word \varnothing = \word \varnothing \shuffle \word w = \word w$$ for all words $\word{u},\word{v}$ and letters $\word a, \word b\in \mathcal A_d$, which is then extended by bilinearity to $\mathcal W(\mathcal A_d)$. With some abuse of notation, the shuffle product on $T((\mathbb R^d)^\ast)$ induced by the shuffle product on words will also be denoted by $\shuffle\phantom{ }$.
\end{definition}

The shuffle product gets its name from riffle shuffling of cards. If one wants to shuffle two piles of cards $\word{w}$ and $\word{v}$, then $\word{w}\shuffle \word{v}$ is the sum of all possible outcomes from riffle shuffling.

\begin{example}
For the alphabet $\mathcal A_4=\{\word 1, \word 2, \word 3, \word 4\}$,
\begin{enumerate}
\item $\word{12}\shuffle \word{3} = \word{123} + \word{132} + \word{312}$.
\item $\word{12}\shuffle\word{34} = \word{1234} + \word{1324} + \word{1342} + \word{3124} + \word{3142} + \word{3412}$.
\end{enumerate}
\end{example}

\begin{definition}\label{def:polynomial shuffle}
Let $P=a_0+a_1x + \ldots + a_nx^n\in \mathbb R[x]$ be a polynomial on one variable. Then, $P$ induces a map $P^{\shuffle}$ defined by
$$P^{\shuffle} (\ell) := a_0 + a_1 \shuffle + a_2 \shuffle^{\shuffle 2} + \ldots + a_n\ell \ell^{\shuffle n} \quad \forall \ell\in T((\mathbb R^d)^\ast)$$
where $\ell^{\shuffle i } := \underbrace{\ell\shuffle \ldots \shuffle \ell}_{i}$ for each $i\in \mathbb N$.
\end{definition}

\subsection{Signatures}
We will now define the signature of a smooth path, together with the notion of a \textit{geometric rough path}, first introduced in \cite{lyonsoriginal}. In our framework, we will model the price path of an asset by a geometric rough path. Given that semimartingales are geometric rough paths (\cite{lyonsoriginal}), our framework will in particular include all semimartingales, so that for simplicity the reader may want to have that important example in mind when results for geometric rough paths are stated. Working with geometric rough paths will allow us to consider a more general, model-free framework.

\begin{definition}[Signature]\label{def:sig}
Let $Z:[0, T]\to \mathbb R^d$ be smooth. The signature of $Z$ is defined by
\begin{align*}
\mathbb Z^{<\infty}:[0,T]^2 & \to T((\mathbb R^d))\\
(s, t) &\mapsto \mathbb Z_{s,t}^{<\infty} := (1, \mathbb Z_{s,t}^1, \ldots, \mathbb Z_{s,t}^n, \ldots)
\end{align*} where
$$\mathbb Z^n_{s,t} := \int_{s<u_1<\ldots<u_k<t} dZ_{u_1}\otimes \ldots \otimes dZ_{u_k}\in (\mathbb R^d)^{\otimes n}.$$
Similarly, the truncated signature of order $N\in \mathbb N$ is defined by
\begin{align*}
\mathbb Z^{\leq N}:[0,T]^2 & \to T^{(N)}(\mathbb R^d)\\
(s, t) &\mapsto \mathbb Z_{s,t}^{\leq N} := (1, \mathbb Z_{s,t}^1, \ldots, \mathbb Z_{s,t}^N).
\end{align*} If we do not specify the interval $[s,t]$ and just mention the signature of $Z$, we will implicitly be referring to $\mathbb Z_{0,T}^{<\infty}$.
\end{definition}

\begin{definition}[Geometric rough path]\label{def:geometric rough path}
$\mathbb Z^{\leq 2}:[0,T]^2\to T^{(2)}(\mathbb R^d)$ is said to be a geometric rough path  (\cite{lyonsbook}) if it is the limit (under the $p$-variation distance, \cite[Definition 1.5]{lyonsbook}) of truncated signatures of order 2 of smooth paths. The space of all geometric rough paths will be denoted by $G\Omega([0,T]; \mathbb R^d)$. A geometric rough path $\mathbb Z^{\leq 2}\in G\Omega([0,T]; \mathbb R^d)$ can be (uniquely) extended to $\mathbb Z^{<\infty}:[0,T]^2\to T((\mathbb R^d))$, which will be called its signature (\cite[Theorem 3.7]{lyonsbook}).
\end{definition}

\begin{example}\label{ex:semimartingales}
Semimartingales are geometric rough paths almost surely. Given a continuous semimartingale $Z:[0,T]\to \mathbb R^d$, its signature is given by
\begin{align*}
\mathbb Z^{<\infty}:[0,T]^2 & \to T((\mathbb R^d))\\
(s, t) &\mapsto \mathbb Z_{s,t}^{<\infty} := (1, \mathbb Z_{s,t}^1, \ldots, \mathbb Z_{s,t}^n, \ldots)
\end{align*} where
$$\mathbb Z^n_{s,t} := \int_{0<u_1<\ldots<u_k<T} \circ dZ_{u_1}\otimes \ldots \otimes \circ dZ_{u_k}\in (\mathbb R^d)^{\otimes n},$$
with the integrals understood in the sense of Stratonovich. Similarly, the truncated signature of order $N\in \mathbb N$ is defined by
\begin{align*}
\mathbb Z^{\leq N}:[0,T]^2 & \to T^{(N)}(\mathbb R^d)\\
(s, t) &\mapsto \mathbb Z_{s,t}^{\leq N} := (1, \mathbb Z_{s,t}^1, \ldots, \mathbb Z_{s,t}^N).
\end{align*}
\end{example}

We will now include a few examples to provide an intuition about the iterated integrals that define signatures.

\begin{example}
Let $Z=(Z^1, Z^2)$ be a continuous semimartingale on $\mathbb R^2$. Recalling the notation of \textit{words} introduced in the previous section, we have:

\begin{enumerate}
\item $\langle \word \varnothing, \mathbb Z_{0,T}^{<\infty}\rangle = 1$.
\item $\langle \word 1, \mathbb Z_{0,T}^{<\infty}\rangle = \int_0^T \circ dZ_t^1 = Z_T^1 - Z_T^1$.
\item $\langle \word{22}, \mathbb Z_{0,T}^{<\infty}\rangle = \int_0^T \int_0^t \circ dZ_s^2\circ dZ_t^2 = \int_0^t (Z_t^2-Z_0^2)\circ dZ_t^2 = \frac{1}{2}(Z_T-Z_0)^2$.
\item $\langle \word{12}, \mathbb Z_{0,T}^{<\infty}\rangle = \int_0^T \int_0^t\circ  dZ_s^1\circ  dZ_t^2 = \int_0^T (Z_t^1-Z_0^1)\circ dZ_t^2$.
\item Take $\ell \in T((\mathbb R^2)^\ast)$. Then, $\langle \ell \word 1, \mathbb Z_{0,T}^{<\infty}\rangle = \int_0^T \langle \ell, \mathbb Z_{0,t}^{<\infty}\rangle \circ dZ_t^1$.
\end{enumerate}
\end{example}

We will now state two properties of signatures that will have a crucial role in this paper. The first property, the \textit{shuffle product property}, states that the product of two linear functions on the signature is a new linear function on the signature. The second property is a uniqueness result: the signature of a path is unique.

\begin{lemma}[Shuffle product property, \cite{lyonsbook}]\label{lemma:shuffle product property}
Let $\mathbb Z^{\leq 2}\in G\Omega([0,T]; \mathbb R^d)$ be a geometric rough path, and let $\ell_1,\ell_2\in T((\mathbb R^d)^\ast)$ be two linear functionals. Then,
$$\langle \ell_1, \mathbb Z^{<\infty} \rangle \langle \ell_2, \mathbb Z^{<\infty}\rangle = \langle \ell_1 \shuffle \ell_2, \mathbb Z^{<\infty}\rangle.$$
\end{lemma}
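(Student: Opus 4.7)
The plan is a standard two-step argument: prove the identity first for smooth paths via integration by parts and induction on word length, then extend to geometric rough paths by continuity using the Lyons extension theorem.

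First, by bilinearity of both sides in $\ell_1$ and $\ell_2$, I would reduce the statement to the case where $\ell_1$ and $\ell_2$ are two words $\word w,\word v\in \mathcal W(\mathcal A_d)$. The goal then becomes
\[
\langle \word w, \mathbb Z^{<\infty}\rangle \langle \word v, \mathbb Z^{<\infty}\rangle \;=\; \langle \word w \shuffle \word v, \mathbb Z^{<\infty}\rangle.
\]

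Next, I would establish this identity when $Z:[0,T]\to \mathbb R^d$ is smooth, by induction on the combined length $|\word w|+|\word v|$. The base case where one of the words is empty is immediate from $\word w\shuffle \word\varnothing = \word w$ and $\langle \word\varnothing,\mathbb Z^{<\infty}\rangle = 1$. For the inductive step, I would write $\word w = \word u \word a$ and $\word v = \word t \word b$ for letters $\word a,\word b\in \mathcal A_d$ and use the recursive formula for iterated integrals, namely $\langle \word u \word a, \mathbb Z^{<\infty}_{0,t}\rangle = \int_0^t \langle \word u, \mathbb Z^{<\infty}_{0,s}\rangle\,dZ^{\word a}_s$. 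Applying integration by parts (valid for smooth $Z$) to the product $\langle \word u \word a,\mathbb Z_{0,T}\rangle \langle \word t \word b,\mathbb Z_{0,T}\rangle$ gives
\[
\int_0^T \langle \word u,\mathbb Z_{0,t}\rangle \langle \word t\word b,\mathbb Z_{0,t}\rangle\,dZ^{\word a}_t \;+\; \int_0^T \langle \word u\word a,\mathbb Z_{0,t}\rangle \langle \word t,\mathbb Z_{0,t}\rangle\,dZ^{\word b}_t.
\]
Applying the inductive hypothesis to the two products inside the integrals converts them into linear functionals against the signature, and then rewriting the integrals as iterated integrals turns the expression into $\langle (\word u\shuffle \word t\word b)\word a + (\word u\word a\shuffle \word t)\word b,\mathbb Z^{<\infty}\rangle$, which by Definition \ref{def:shuffleproduct} is exactly $\langle \word w\shuffle \word v,\mathbb Z^{<\infty}\rangle$.

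Finally, I would extend the identity from smooth paths to arbitrary $\mathbb Z^{\leq 2}\in G\Omega([0,T];\mathbb R^d)$ by continuity. By Definition \ref{def:geometric rough path}, any such $\mathbb Z^{\leq 2}$ is a $p$-variation limit of truncated signatures of smooth paths $Z^{(n)}$, and by the Lyons extension theorem (\cite[Theorem 3.7]{lyonsbook}) the full signatures $(\mathbb Z^{(n)})^{<\infty}$ converge to $\mathbb Z^{<\infty}$ level by level. Since for any fixed word $\word w$ the pairing $\mathbb X\mapsto \langle \word w,\mathbb X\rangle$ depends only on finitely many tensor levels and is continuous in the relevant topology, and since $\word w\shuffle \word v$ also involves only finitely many levels, passing to the limit in the smooth-path identity gives the result for $\mathbb Z^{<\infty}$.

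The main obstacle is step two, making the integration-by-parts recursion align cleanly with the inductive definition of $\shuffle$; once the bookkeeping is done the other two steps are routine. The extension in step three is essentially bookkeeping once one notes that the identity at word-length $n+m$ only requires control of the signature up to level $n+m$, which is provided by the extension theorem.
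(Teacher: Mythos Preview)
Your argument is correct and is precisely the standard textbook proof of the shuffle identity: reduce to words by bilinearity, prove the smooth case by induction on $|\word w|+|\word v|$ via integration by parts matched against the recursive definition of $\shuffle$, then pass to the geometric rough path limit using the continuity of the Lyons extension. Note that the paper does not actually supply a proof of this lemma; it is quoted as a known result from \cite{lyonsbook}, and the argument given there follows exactly the route you outline.
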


\begin{lemma}[Uniqueness of signatures, \cite{horatio}]
Let $\mathbb Z^{\leq 2} \in G\Omega([0,T]; \mathbb R^d)$ be a geometric rough path. Its signature over $[0,T]$, $\mathbb Z_{0,T}^{<\infty}$, uniquely determines $\mathbb Z^{\leq 2}$, up to tree-like equivalences (see \cite[Definition 1.1]{horatio} for a definition).
\end{lemma}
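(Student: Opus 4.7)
The plan is to reduce the statement to the claim that the only geometric rough paths with trivial signature are tree-like, and then to prove that reduced claim by a hyperbolic development argument. Throughout, I will lean heavily on the group-like structure of signatures and on Chen's identity, which tells us that if $\mathbb Z$ is obtained by concatenating $\mathbb X$ on $[0,T]$ with $\mathbb Y$ on $[T,2T]$ then $\mathbb Z_{0,2T}^{<\infty}=\mathbb X_{0,T}^{<\infty}\otimes \mathbb Y_{T,2T}^{<\infty}$, and on the fact that the reversed path $\overleftarrow{\mathbb Z}$ has signature $(\mathbb Z^{<\infty}_{0,T})^{-1}$ in the (extended) tensor algebra.

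The first step is the reduction. Suppose $\mathbb X^{\leq 2},\mathbb Y^{\leq 2}\in G\Omega([0,T];\mathbb R^d)$ satisfy $\mathbb X_{0,T}^{<\infty}=\mathbb Y_{0,T}^{<\infty}$. Form the concatenation $\mathbb Z:=\mathbb X\ast \overleftarrow{\mathbb Y}$, which is again a geometric rough path on a doubled interval. By Chen's identity and the inversion formula,
\[
\mathbb Z_{0,2T}^{<\infty} \;=\; \mathbb X_{0,T}^{<\infty}\otimes (\mathbb Y_{0,T}^{<\infty})^{-1} \;=\; \mathbf 1,
\]
the unit of the tensor algebra. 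Thus it suffices to prove that any geometric rough path with trivial signature is tree-like in the sense of \cite{horatio}, for then $\mathbb Z$ is tree-like and hence $\mathbb X^{\leq 2}$ is tree-like equivalent to $\mathbb Y^{\leq 2}$.

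The second step is to pin down information from trivial signature. By the shuffle product property (Lemma~\ref{lemma:shuffle product property}), the collection of numbers $\{\langle \ell, \mathbb Z^{<\infty}_{0,T}\rangle\}_{\ell \in T((\mathbb R^d)^\ast)}$ forms a character on the shuffle algebra. Trivial signature means every such linear functional (except the empty word) vanishes, so every iterated integral of $Z$ along any word is zero. This rigidity is the engine of what follows.

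The third and main step is to translate this algebraic rigidity into geometric rigidity via hyperbolic development. The idea, going back to Hambly--Lyons in the bounded variation case and extended in \cite{horatio} to the rough path regime, is to construct a development $\mathbb H(\mathbb Z)$ of $\mathbb Z^{\leq 2}$ into a sufficiently high-dimensional hyperbolic space, governed by a rough differential equation driven by $\mathbb Z$ whose vector fields are the infinitesimal generators of hyperbolic isometries. Trivial signature forces the development to remain in a bounded region of hyperbolic space; but the exponential growth of volume in hyperbolic geometry, combined with controlled variation estimates on the development, forces any such bounded trajectory to backtrack on itself, which is exactly what it means to be tree-like. Carrying this rigorous argument out in the rough path category requires approximating $\mathbb Z^{\leq 2}$ by piecewise smooth paths in the $p$-variation topology, running the hyperbolic development for the approximants (where the classical Hambly--Lyons argument applies), and taking a careful limit, which is the main obstacle.

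The hardest step is exactly this last one: passing from the classical smooth statement to rough paths while preserving tree-likeness in the limit. The shuffle and concatenation reductions are essentially formal, and the reduction to the trivial-signature case is a one-line application of Chen's identity; but controlling the hyperbolic development of a genuine rough path, and showing that boundedness of the development rules out anything but tree-like behaviour, is the delicate analytic heart of the argument, for which we invoke the full result of \cite{horatio}.
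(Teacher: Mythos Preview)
The paper does not supply a proof of this lemma at all: it is stated as a citation of \cite{horatio} and used as a black box (in particular, to justify that signature functionals separate points in the Stone--Weierstrass arguments of Appendix~\ref{appendix:densities}). So there is no ``paper's own proof'' to compare against; the correct answer here is simply to cite the result.

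Your outline is a faithful sketch of the strategy that actually underlies the cited reference: reduce to the case of trivial signature via Chen's identity and path reversal, and then show that trivial signature forces tree-like behaviour through a hyperbolic development. That reduction is correct and standard. However, your final paragraph explicitly defers the analytic core---boundedness of the hyperbolic development of a genuine rough path and the passage from bounded development to tree-likeness---back to \cite{horatio}. That makes the argument circular as a self-contained proof of the lemma: you are invoking the very result you set out to prove. As a proof \emph{plan} it is accurate in identifying where the difficulty lies, but it does not go beyond what the paper already does, namely citing \cite{horatio}.
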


\begin{corollary}\label{cor:uniqueness signature}
Let $\in \mathbb Z^{\leq 2}\in G\Omega([0,T]; \mathbb R^d)$ be a geometric rough path. Assume there exists a linear function $\ell\in T^{(2)}((\mathbb R^d)^\ast)$ such that $t \mapsto \mathbb \langle \ell, \mathbb Z_{0,t}^{\leq 2}\rangle$ is strictly monotone. Then, the signature $\mathbb Z_{0,T}^{<\infty}$ uniquely determines $\mathbb Z^{\leq 2}$.
\end{corollary}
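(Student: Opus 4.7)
The plan is to reduce the corollary to the preceding uniqueness lemma by showing that the monotonicity hypothesis forces $\mathbb Z^{\leq 2}$ to be tree-reduced. Recall that Hambly-Lyons only determines $\mathbb Z^{\leq 2}$ from its signature up to tree-like equivalence, and a standard consequence is that each tree-like equivalence class contains a unique tree-reduced representative (up to reparameterization). A path is tree-reduced precisely when it has no non-trivial tree-like sub-interval, i.e.\ no $[a,b]\subset[0,T]$ with $a<b$ on which the signature is trivial. So it suffices to show that, under the given hypothesis, $\mathbb Z^{\leq 2}$ has no such sub-interval.

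Suppose for contradiction that there exist $0\leq a<b\leq T$ with $\mathbb Z_{a,b}^{\leq 2}=\mathbf{1}$, the identity element of $T^{(2)}(\mathbb R^d)$. By Chen's identity,
$$\mathbb Z_{0,b}^{\leq 2}=\mathbb Z_{0,a}^{\leq 2}\otimes \mathbb Z_{a,b}^{\leq 2}=\mathbb Z_{0,a}^{\leq 2},$$
so applying the linear functional $\ell$ gives $\langle \ell,\mathbb Z_{0,a}^{\leq 2}\rangle=\langle \ell,\mathbb Z_{0,b}^{\leq 2}\rangle$, contradicting the strict monotonicity of $t\mapsto\langle \ell, \mathbb Z_{0,t}^{\leq 2}\rangle$. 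Hence $\mathbb Z^{\leq 2}$ admits no tree-like sub-interval, is tree-reduced, and is therefore the unique representative of its tree-like equivalence class. Combined with the preceding lemma, this shows that $\mathbb Z_{0,T}^{<\infty}$ determines $\mathbb Z^{\leq 2}$.

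The substantive content of the argument sits in Hambly-Lyons; the only point that deserves real attention is the equivalence between being tree-reduced and having non-trivial signature on every proper sub-interval, which I would quote from the Hambly-Lyons uniqueness framework rather than re-derive. Beyond that, the proof is essentially a one-line application of Chen's identity together with strict monotonicity.
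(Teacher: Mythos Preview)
The paper does not supply a proof of this corollary at all; it is stated immediately after the Hambly--Lyons uniqueness lemma and left as a direct consequence. Your argument is the natural way to fill in that gap: use strict monotonicity together with Chen's identity to rule out any sub-interval with trivial signature, hence the path is tree-reduced, hence it is the canonical representative of its tree-like class. This is exactly the reasoning one would expect the authors had in mind.

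Two small points worth tightening. First, when you write ``suppose $\mathbb Z_{a,b}^{\leq 2}=\mathbf 1$'', what you actually need to assume is that the \emph{full} signature $\mathbb Z_{a,b}^{<\infty}$ is trivial (this is what ``tree-like sub-interval'' means); of course that implies the level-$2$ truncation is trivial, which is all your Chen argument uses, so the logic is fine --- just make the implication explicit. Second, as you already note, tree-reducedness only pins down the path up to reparameterisation. In the paper's applications (paths augmented by running time, $\widehat X_t=(t,X_t)$) the monotone coordinate is literally the parameter, so reparameterisation is automatically fixed; but the corollary as stated for a general $\ell$ really only yields uniqueness up to reparameterisation, and it is worth saying so rather than leaving ``uniquely determines'' unqualified.
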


\subsection{Lead-lag path}\label{subsec:lead-lag}

In our framework, prices are going to be given by a path $Z:[0, T]\to \mathbb R^d$, which will be assumed to be a geometric rough path, and a volatility $\langle Z \rangle:[0,T]\to \mathbb R^{d\times d}$. For simplicity, the reader may want to think of $Z$ being a semimartingale and $\langle Z \rangle$ the quadratic variation of $Z$, as this is included in our framework. 

\begin{definition}[Lead-lag path]\label{def:leadlag}
A lead-lag path is a pair $(\mathbb Z^{\leq 2}, \langle Z \rangle)$ with \linebreak $\mathbb Z^{\leq 2}\in G\Omega([0,T]; \mathbb R^d)$ a geometric rough path (Definition \ref{def:geometric rough path}) and $\langle Z \rangle : [0,T]^2 \to \mathbb R^{d\times d}$ such that $\langle Z \rangle$ is symmetric and
$$\mathbb Z^{LL, \leq 2} := \left (1, (\mathbb Z^1, \mathbb Z^1), \begin{pmatrix}\mathbb Z^2&\mathbb Z^2-\frac{1}{2}\langle Z\rangle\\ \mathbb Z^2+\frac{1}{2}\langle Z \rangle & \mathbb Z^2\end{pmatrix}\right )\in G\Omega([0,T]; \mathbb R^{2d})$$
is a geometric rough path on $\mathbb R^{2d}$. $\mathbb Z^{LL, <\infty}$ will be called the signature of the lead-lag path $(\mathbb Z^{\leq 2}, \langle Z \rangle)$.
\end{definition}

Appendix \ref{appendix:leadlag} includes a discussion on how to obtain in practice the lead-lag path associated with a semimartingale or discrete data, as well as its signature.

\begin{example}\label{ex:semimartingales ll}
A continuous semimartingale $Z:[0,T]\to \mathbb R^d$ induces a geometric rough path $\mathbb Z^{\leq 2}$, as shown in Example \ref{ex:semimartingales}. This geometric rough path is given by certain Stratonovich iterated integrals. $\mathbb Z^{\leq 2}$, together with the quadratic variation $\langle Z \rangle_{s,t}$ of $Z$ over $[s, t]$, induces the lead-lag path $(\mathbb Z^{\leq 2}, \langle Z \rangle)$. Such lead-lag paths were considered in \cite{guy}. As we will see in Lemma \ref{lemma:ito-stratonovich}, if the price process is a semimartingale certain It\^o integrals against the semimartingale can be written as integrals against the lead-lag process.
\end{example}

\section{Framework}

\subsection{The market}

For simplicity, we will consider the case where there is only a single underlying risky asset. However, the authors would like to emphasise that all the results in this paper can be readily extended to the multi-asset case. In the sequel, we will model the (discounted) price path of the underlying asset by a continuous curve in $\mathbb R$, $X:[0, T] \to \mathbb R$. We will denote the \textit{augmentation} of $X$ (as in \cite{signature_pricing}) by $\widehat X_t := (t, X_t)\in \mathbb R^{2}$. Without loss of generality, we will assume that the initial price of the asset is given by $X_0=1$. In our framework, the market will be given by the price path $\widehat X:[0, T]\to \mathbb R^2$, together with a volatility process $\langle \widehat X \rangle:[0, T]\to \mathbb R^{2\times 2}$. Almost all paths of a semimartingale $X:[0,T]\to \mathbb R$ are included in this framework, in which case the volatility process $\langle \widehat X\rangle_t$ is just the quadratic variation of $\widehat X$. Tick-data is also included in this framework, as it induces a lead-lag path (\cite{guy}). However, our approach is model-free in the sense that we do not impose any model on the price path, nor do we assume it is a realisation of a semimartingale. Nevertheless, for simplicity the reader may think of $X$ as a semimartingale. We will now introduce the precise definition of our market price paths.

\begin{definition}[Market price paths]
Define the space of market price paths,
$$\widehat \Omega_T := \overline{\{\widehat{\mathbb X}^{<\infty} : X:[0,T]\to \mathbb R\mbox{ is smooth and }X_0=1\}}^{d_{p-var}}\subset T((\mathbb R^2)),$$
where $\widehat X_t:=(t, X_t)$ denotes the augmentation of $X$, $\widehat{\mathbb X}^{<\infty}$ is the signature of $\widehat X$ and the closure is taken under the $p$-variation distance, \cite[Definition 1.5]{lyonsbook}. The space of lead-lag market price paths is defined by
$$\widehat \Omega_T^{LL} := \{\widehat{\mathbb X}^{LL, <\infty} : (\widehat{\mathbb X}^{\leq 2}, \langle \widehat X\rangle)\mbox{ is a lead-lag path and }X_0=1\}\subset T((\mathbb R^4)),$$
where $\widehat{\mathbb X}^{LL, <\infty}$ denotes the signature of the lead-lag path associated to $(\widehat{\mathbb X}^{\leq 2}, \langle \widehat X\rangle)$, as defined in Definition \ref{def:leadlag}. Given $\widehat{\mathbb X}^{LL,<\infty}\in \widehat \Omega_T^{LL}$, we denote by $\widehat{\mathbb X}^{<\infty} \in \widehat \Omega_T$ the projection of $\widehat{\mathbb X}^{LL,<\infty}$ to $\widehat \Omega_T$.
\end{definition}

So far, we have not imposed any probability measure on the market. We have only introduced the space of paths that will form the market, $\widehat \Omega_T^{LL}$. In Section \ref{sec:optimal hedging} we will evaluate certain trading strategies by their performance in the market, for which we will use a probability measure. For this purpose, we will now define a probability space on the market paths. Most of the results in this paper, however, are not dependent on the probability measure.

\begin{definition}
Consider the Borel $\sigma$-algebra $\mathcal B(\widehat \Omega_T^{LL})$, and define by $\mathbb F=\{\mathcal F_t\}_{t\in [0,T]}$ the filtration generated by the price path $X$. Let $\mathbb P$ be a probability measure on $(\widehat \Omega^{LL}, \mathcal B(\widehat \Omega^{LL}))$ such that $\mathbb E[\widehat{\mathbb X}^{LL, \leq N}]$ is finite for all $N\in \mathbb N$. We will then consider the completed filtered probability space $(\widehat \Omega_T^{LL}, \mathcal B(\widehat \Omega_T^{LL}), \mathbb F, \mathbb P)$.
\end{definition}

We will not assume any particular model on the price -- our approach is, in that sense, \textit{model-free}. One could be interested, however, in imposing a particular model on the market, such as a certain semimartingale $X:[0, T]\to \mathbb R$. As discussed in Example \ref{ex:semimartingales ll}, we can associate the semimartingale with the lead-lag path $(\widehat{\mathbb X}^{\leq 2}, \langle \widehat X \rangle)$ where $\langle X \rangle$ is the quadratic variation of $X$ and $\widehat{\mathbb X}^{\leq 2}$ is the level-2 signature of $\widehat X$ (introduced in Example \ref{ex:semimartingales}). Then, we would consider the probability space $(\widehat \Omega_T^{LL}, \mathcal B(\widehat \Omega_T^{LL}), \mathbb F, \mathbb P)$ under which the coordinate process $X:[0, T]\to \mathbb R$ is a semimartingale with volatility given by the quadratic variation of $\widehat X$, i.e. $\langle \widehat X \rangle$.

\begin{remark}
The assumption that  $\mathbb E[\widehat{\mathbb X}_{0,T}^{LL, \leq N}]$ exists for all $N\in \mathbb N$ is very mild, and it is an infinite-dimensional version of the ``moments of all order exist'' statement for finite-dimensional random variables.
\end{remark}

\subsection{Payoff functions}

Financial derivatives are given in terms of a payoff function that depends on the underlying asset(s). We will now make a precise definition of a \textit{payoff function}.

\begin{definition}[Payoff function]
A payoff is defined as a Borel-measurable function $\widehat \Omega_T^{LL}\to \mathbb R$. A payoff $F:\widehat \Omega_T^{LL}\to \mathbb R$ is said to be an $L^q$-payoff for $q\geq 1$ if $\mathbb E[|F|^q]<\infty$.
\end{definition}

The definition above essentially defines a payoff function as any $\mathcal F_T$-measurable random variable. The financial interpretation is that, given a realisation of the price path, the holder of the derivative with payoff $F:\widehat \Omega_T^{LL} \to \mathbb R$ is paid $F(\widehat{\mathbb X}^{LL, <\infty})$ at time $T$.

\begin{example}
Examples of payoff function include European options, American options, Asian options, lookback options, barrier options, futures, variance swaps, cliquet options, etc.
\end{example}

An important class of payoff functions, that will be used extensively in this paper, are \textit{linear signature payoff functions} (\cite{signature_pricing}):

\begin{definition}[Linear signature payoff]\label{def:signature payoff}
We say that a payoff $F:\widehat \Omega_T^{LL}\to \mathbb R$ is a linear signature payoff function is there exists a linear functional $f\in T((\mathbb R^4)^\ast)$ such that
$$F(\widehat {\mathbb X}^{LL, <\infty}) = \langle f, \widehat{\mathbb X}_{0,T}^{LL, <\infty}\rangle.$$
\end{definition}

These signature payoffs will play a similar role to Arrow-Debreu primitive securities. As we will see, path-dependent exotic payoffs can be well-approximated by these linear signature payoffs. Notice that because the signature is defined as certain iterated integrals against the path, linear signature payoffs effectively contain in particular the P\&L of all dynamic hedging strategies. Therefore, in a way, it is unsurprising that the class of linear signature payoffs is \textit{big} and that they form a family of primitive securities.

\begin{example}
We will now give a few examples of payoffs that can be written exactly as linear signature payoffs. Recall the word notation introduced in Section \ref{subsec:tensor algebra}.

\begin{enumerate}
\item Let $K\in \mathbb R$, and set $f=(1 - K)\word \varnothing + \word 2$. Then, $\langle f, \widehat{\mathbb X}_{0,T}^{LL, <\infty}\rangle = 1 - K + X_T - X_0 = X_T - K$. In other words, the signature payoff is a forward with delivery price $K$.
\item Let $K\in \mathbb R$. Set $f = (1 - K)\word \varnothing + \frac{1}{T}\word{21}$. Then, $\langle f, \widehat{\mathbb X}_{0,T}^{LL, <\infty}\rangle = 1 - K + \frac{1}{T}\int_0^T (X_s - X_0)ds = \frac{1}{T}\int_0^T X_sds - K$. Therefore, Asian forwards are also signature payoffs.
\end{enumerate}
\end{example}

\subsection{Trading strategies}\label{subsec:trading strategies}

Intuitively, a trading strategy specifies the position that must be held by the trader at each time, given the observation of the price path up to that time. Moreover, this must be done in a non-anticipative way -- in other words, traders are allowed to trade based on the past, but not the future. This idea is captured in the definition of trading strategies below.

\begin{definition}\label{def:trading strategy}
Define $\Lambda_T := \bigcup_{t\in [0,T]} \widehat \Omega_t$, which is a metric space for a certain distance. The space of trading strategies is defined by $\mathcal T(\Lambda_T) := C(\Lambda_T; \mathbb R)$. We also denote by $\mathcal T^q(\Lambda_T)$ the space of trading strategies with the following integrability condition:
$$\mathcal{T}^q(\Lambda_T) := \left \{ \theta \in \mathcal{T}(\Lambda_T) : \mathbb{E} \left [ \left | \int_0^T \theta(\widehat {\mathbb X}|_{[0, t]}^{<\infty}) dX_t \right | ^q \right ] < \infty\right \}.$$
\end{definition}

\begin{remark}
The space $\Lambda_T$ is the space of signatures of all \textit{stopped paths}. A similar space was discussed in \cite{rama2, rama1, bookstopped, rama3} and in \cite{dupire, promel, riga} in the context of finance.

Intuitively, the space of trading strategies from Definition \ref{def:trading strategy} essentially consists of all non-anticipative processes with respect to the filtration generated by $X$. Again, this emphasises the crucial condition in finance that one is only allowed to trade based on the past.
\end{remark}

We will now define an important subspace of the space of trading strategies -- namely, the space of \textit{linear signature} trading strategies.

\begin{definition}[Linear signature trading strategies]
The space of linear signature trading strategies is given by
$$\mathcal T_{sig}(\Lambda_T) := \{\theta \in \mathcal T(\Lambda_T) \;|\; \exists \ell\in T((\mathbb R^2)^\ast)\mbox{ such that }\theta(\widehat{\mathbb X}|_{[0,t]}^{<\infty}) = \langle \ell, \widehat{\mathbb X}_{0,t}^{<\infty}\rangle \;\forall\,\widehat{\mathbb X}|_{[0,t]}^{<\infty}\in \Lambda_T\}.$$
\end{definition}

It turns out that, in some sense, trading strategies can be approximated arbitrarily well by signature trading strategies. Therefore, if one is looking for an optimal trading strategy in $\mathcal T(\Lambda_T)$, one could look for an optimal trading strategy in $\mathcal T_{sig}(\Lambda_T)$ instead. This will be made more precise later on.

Given a trading strategy $\theta\in \mathcal T$, the profits and losses (P\&L) associated to it is given by the rough path integral (see \cite{lyonsbook}) given by $\int_0^T \theta(\widehat {\mathbb X}|_{[0, t]}^{<\infty}) dX_t$. In the particular case of semimartingales, this integral agrees with the classical It\^o integral -- see \cite{guy}. For instance, we have the following lemma, according to which It\^o integrals of semimartingales of linear signature trading strategies are linear functions on the signature of the lead-lag path.

\begin{lemma}\label{lemma:ito-stratonovich}
Let $X$ be a $d$-dimensional continuous semimartingale. Let $\ell\in T((\mathbb R^2)^\ast)$. Then, we have:
$$\int_0^T \langle \ell, \widehat{\mathbb X}_{0,t}^{<\infty}\rangle dX_t = \langle \ell \word 4, \widehat{\mathbb X}_{0,T}^{LL,<\infty}\rangle,$$
where the integral is in the sense of It\^o, the notation $\ell \word 4\in T((\mathbb R^4)^\ast)$ means the concatenation of the word associated to $\ell$ with the letter $\word 4$ (introduced in Section \ref{subsec:tensor algebra}) and $\widehat{\mathbb X}_{0,T}^{LL,<\infty}$ is the signature of the (4-dimensional) lead-lag process, as defined in Definition \ref{def:leadlag}.
\end{lemma}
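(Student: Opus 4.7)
My plan is to recognise the Itô integral on the left as a Stratonovich integral against the lag copy of $X$ inside the lead-lag process, and then identify that Stratonovich iterated integral as the $\ell \word 4$ entry of the lead-lag signature.

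First I would unpack the right-hand side via the recursive definition of the signature. The lead-lag signature is a geometric-rough-path signature, hence of Stratonovich type, so appending the letter $\word 4$ — which by Definition~\ref{def:leadlag} labels the lag copy of the price coordinate — to $\ell$ yields
$$\langle \ell \word 4, \widehat{\mathbb X}_{0,T}^{LL, <\infty}\rangle = \int_0^T \langle \ell, \widehat{\mathbb X}_{0,t}^{LL, <\infty}\rangle \circ dX_t^{\mathrm{lag}}.$$
Next I would argue that if $\ell \in T((\mathbb R^2)^*)$ only uses the lead letters $\{\word 1, \word 2\}$ then $\langle \ell, \widehat{\mathbb X}_{0,t}^{LL, <\infty}\rangle = \langle \ell, \widehat{\mathbb X}_{0,t}^{<\infty}\rangle$. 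This is spelled out at level two in Definition~\ref{def:leadlag} (the upper-left diagonal block is $\widehat{\mathbb X}^2$) and propagates to all higher orders by approximating $\widehat X$ with smooth paths whose lead-lag enhancement, computed with a vanishing time offset, has the property that lead-only iterated integrals reduce to the iterated integrals of the original smooth path; passing to the $p$-variation limit using Definition~\ref{def:geometric rough path} and the continuity of iterated integration transfers this to the geometric-rough-path setting.

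Finally I would invoke the key property of the lead-lag lift proved in \cite{guy}: for a continuous semimartingale $X$ and any continuous adapted integrand $\varphi$, the rough (Stratonovich) integral against the lag reproduces the Itô integral against $X$,
$$\int_0^T \varphi_t \circ dX_t^{\mathrm{lag}} = \int_0^T \varphi_t \, dX_t.$$
Specialising to $\varphi_t = \langle \ell, \widehat{\mathbb X}_{0,t}^{<\infty}\rangle$, which is continuous and adapted, chains these displays into the desired identity. The main obstacle is the level-by-level lead-projection identity, since Definition~\ref{def:leadlag} only makes it explicit through level two; once that identity is established by the $p$-variation approximation sketched above, the rest of the argument reduces to citing the rough-path machinery of \cite{lyonsbook} and the lead-lag theorem of \cite{guy}.
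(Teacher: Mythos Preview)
Your route is genuinely different from the paper's, and the overall plan is sound, but two points need attention. First, the labelling is reversed: in Definition~\ref{def:leadlag} and the appendix the first block is the backward/lag component and the second the forward/lead, so $\word 4$ is the \emph{lead} coordinate and $\{\word 1,\word 2\}$ are the lag letters. This is cosmetic. Second, and more importantly, the step you outsource to \cite{guy} --- that for an arbitrary continuous adapted $\varphi$ the rough integral against the lead component reproduces the It\^o integral --- is essentially the content of the lemma itself; \cite{guy} supplies the lead-lag rough-path structure and its discrete approximation, not a ready-made It\^o-recovery theorem for general integrands. You can close this gap (e.g.\ via controlled rough paths: the Gubinelli derivative of $\langle \ell,\widehat{\mathbb X}_{0,\cdot}^{<\infty}\rangle$ only activates indices $\word 1,\word 2$, and the $(\word j,\word 4)$ blocks of the level-two lead-lag lift carry exactly the It\^o correction), but that is the heart of the argument, not the projection identity you flag as the main obstacle.

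By contrast, the paper's proof is computational rather than structural. It writes the It\^o integral via the It\^o--Stratonovich conversion $\int \langle \ell,\widehat{\mathbb X}^{\leq N}\rangle\, dX = \langle \ell\word 4,\overline{\widehat{\mathbb X}}^{\leq N+1}\rangle - \tfrac12\langle \ell\word 4,\int \overline{\widehat{\mathbb X}}^{\leq N-1}\otimes d[\overline{\widehat X}]\rangle$, where $\overline{\widehat X}=(\widehat X,\widehat X)$, and then proves by induction on $N$ that this combination equals $\langle \ell\word 4,\widehat{\mathbb X}^{LL,\leq N+1}\rangle$. The inductive step rests on an explicit expansion (their Lemma on signatures of level-two perturbed rough paths) of $\widehat{\mathbb X}^{LL,N+2}$ as a sum of mixed iterated integrals of $\overline{\widehat X}$ and the antisymmetric perturbation $\psi=\pm\tfrac12[\widehat X]$, from which only two terms survive the pairing with $\ell\word 4$. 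Your approach is cleaner and explains \emph{why} the identity holds (It\^o arises because lag integrands see the lead increment through the It\^o block of the lift); the paper's approach is more self-contained and does not rely on a general rough-integral It\^o-recovery statement.
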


Therefore, the profit of a trading strategy, defined with a rough path integral, agrees with the classical definition in terms of an It\^o integral in the case of semimartingales (see \cite{guy}) and in the particular case of linear signature trading strategies, the previous lemma states that the profits and losses -- defined as an It\^o integral against the semimartingale -- is a linear functional of the signature of the lead-lag path.

\section{Optimal hedging}\label{sec:optimal hedging}

In this section we study the following optimal polynomial hedging problem.

\begin{definition}[Optimal polynomial hedging problem]
Let $P\in \mathbb R[x]$ be a polynomial of degree $q\in \mathbb N$. Let $F$ be an $L^q$-payoff that pays at terminal time $T$ an amount of $F(\widehat{\mathbb X}^{LL, <\infty})$ with $\widehat{\mathbb X}^{LL, <\infty}\in \widehat \Omega_T^{LL}$. Let $p_0\in \mathbb R$ the initial capital. The associated optimal polynomial hedging problem (PHP) is to find a minimising sequence for:

\begin{equation}\label{eq:optimal hedging}
\inf_{\theta \in \mathcal{T}^q(\Lambda_T)} \mathbb{E} \left [ P \left ( F(\widehat{\mathbb{X}}^{LL, <\infty}) - p_0 - \int_0^T \theta(\widehat {\mathbb X}|_{[0, t]}^{<\infty}) dX_t\right )\right ].\tag{PHP}
\end{equation}
\end{definition}

In the particular case where $P(x):=x^2$, \eqref{eq:optimal hedging} is of the form of the well-studied mean-variance optimal hedging problem (\cite{meanvariance1, meanvariance2, meanvariance3, meanvariance4}). Writing the optimal control in terms of a polynomial $P$ will allow us to extend \eqref{eq:optimal hedging} to the exponential utility function as well.

Our objective will be to provide a numerical approach to finding a minimising sequence for the optimal hedging problem. We will tackle \eqref{eq:optimal hedging} by studying a linearised version of the problem. As we will see in Section \ref{subsec:general optimal hedging}, solving this sub-problem will be sufficient to solve the original hedging problem \eqref{eq:optimal hedging}.

\subsection{A signature linearisation of the problem}\label{subsec:linearisation}

Problem \eqref{eq:optimal hedging} will be solved by solving the following optimisation sub-problem instead:

\begin{definition}[Optimal linear signature hedging problem] Let $P\in \mathbb R[x]$ be a polynomial of degree $q\in \mathbb N$. Let $f\in T((\mathbb R^4)^\ast)$ and consider the associated linear $f$-signature payoff (Definition \ref{def:signature payoff}). Define the optimal linear signature hedging problem (LSHP) as finding a minimising sequence for
\begin{equation}\label{eq:sig version}
\inf_{\ell \in T((\mathbb R^2)^\ast)} \mathbb E  \left [ P\left ( \langle f, \widehat{\mathbb{X}}_{0, T}^{LL, <\infty} \rangle - p_0 - \int_0^T \langle \ell, \widehat{\mathbb{X}}_{0, t}^{<\infty} \rangle dX_t\right )\right ].\tag{LSHP}
\end{equation}
\end{definition}
As we will see in Section \ref{subsec:general optimal hedging}, being able to solve \eqref{eq:sig version} will be sufficient to solve \eqref{eq:optimal hedging}. \eqref{eq:sig version}, on the other hand, can be rewritten as a simpler optimisation problem that is numerically easier to solve:

\begin{theorem}\label{th:sig hedging reduced}
Let $f\in T((\mathbb R^4)^\ast)$ and $p_0\in \mathbb R$. Let $P\in \mathbb R[x]$ be a polynomial of one variable. Then, the solution of the optimal linear signature hedging problem \eqref{eq:sig version} is given by the solution of the following polynomial optimisation problem:
\begin{equation}\label{eq:reduced}
\inf_{\ell \in T((\mathbb R^2)^\ast)} \left \langle P^{\shuffle} (f - p_0\word \varnothing - \ell \word 4), \mathbb E \left [ \widehat{\mathbb X}_{0,T}^{LL,<\infty}\right ]\right \rangle.
\end{equation}
\end{theorem}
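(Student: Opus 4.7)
The plan is to rewrite the argument of $P$ inside the expectation as a single linear functional evaluated on the lead-lag signature, then use the shuffle property to turn the polynomial $P$ applied to this number into a single linear functional evaluated on the signature, and finally push the expectation through that linear functional.

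First, I would collect the three terms appearing inside $P(\cdots)$ as evaluations of linear functionals on the lead-lag signature $\widehat{\mathbb X}_{0,T}^{LL,<\infty}$. The payoff term is already of this form: $\langle f, \widehat{\mathbb X}_{0,T}^{LL,<\infty}\rangle$. The initial capital can be written $p_0 = \langle p_0\,\word\varnothing, \widehat{\mathbb X}_{0,T}^{LL,<\infty}\rangle$ since $\langle \word\varnothing,\widehat{\mathbb X}_{0,T}^{LL,<\infty}\rangle = 1$. The hedging integral is handled by Lemma \ref{lemma:ito-stratonovich}, which gives
\[
\int_0^T \langle \ell, \widehat{\mathbb X}_{0,t}^{<\infty}\rangle\, dX_t = \langle \ell\word 4, \widehat{\mathbb X}_{0,T}^{LL,<\infty}\rangle.
\]
Setting $g_\ell := f - p_0\,\word\varnothing - \ell\word 4 \in T((\mathbb R^4)^\ast)$, the quantity inside $P$ is precisely $\langle g_\ell, \widehat{\mathbb X}_{0,T}^{LL,<\infty}\rangle$.

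Next, write $P(x) = \sum_{k=0}^q a_k x^k$. Applying the shuffle product property (Lemma \ref{lemma:shuffle product property}) inductively, $\langle g_\ell,\widehat{\mathbb X}_{0,T}^{LL,<\infty}\rangle^k = \langle g_\ell^{\shuffle k}, \widehat{\mathbb X}_{0,T}^{LL,<\infty}\rangle$ for every $k\geq 0$ (with $g_\ell^{\shuffle 0} := \word\varnothing$). By linearity in the first argument of $\langle\cdot,\cdot\rangle$, summing these identities with weights $a_k$ yields
\[
P\bigl(\langle g_\ell, \widehat{\mathbb X}_{0,T}^{LL,<\infty}\rangle\bigr) = \bigl\langle P^{\shuffle}(g_\ell),\, \widehat{\mathbb X}_{0,T}^{LL,<\infty}\bigr\rangle,
\]
which is the key identity and an instance of the general principle from Definition \ref{def:polynomial shuffle}.

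Finally, I would take expectations on both sides. Since $P^{\shuffle}(g_\ell)$ is a fixed element of the tensor algebra involving only finitely many basis words (all of length at most $qN_0$, where $N_0$ bounds the word lengths in $f$ and $\ell$), and since by assumption $\mathbb E[\widehat{\mathbb X}_{0,T}^{LL,\leq N}]$ exists for every $N$, the pairing commutes with expectation:
\[
\mathbb E\bigl[\langle P^{\shuffle}(g_\ell), \widehat{\mathbb X}_{0,T}^{LL,<\infty}\rangle\bigr] = \bigl\langle P^{\shuffle}(g_\ell),\, \mathbb E[\widehat{\mathbb X}_{0,T}^{LL,<\infty}]\bigr\rangle.
\]
Taking the infimum over $\ell \in T((\mathbb R^2)^\ast)$ on both sides gives exactly the equivalence of \eqref{eq:sig version} and \eqref{eq:reduced}. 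The only nontrivial step is confirming that the integrability of $\langle \ell,\widehat{\mathbb X}_{0,t}^{<\infty}\rangle\,dX_t$ needed for \eqref{eq:sig version} to be well-posed follows from the moment assumption on the expected signature, which is immediate since $\langle \ell\word 4,\widehat{\mathbb X}_{0,T}^{LL,<\infty}\rangle$ lives at a finite level of the signature.
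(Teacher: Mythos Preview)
Your proof is correct and follows essentially the same approach as the paper: use Lemma~\ref{lemma:ito-stratonovich} to write the stochastic integral as $\langle \ell\word 4,\widehat{\mathbb X}_{0,T}^{LL,<\infty}\rangle$, combine the three terms into a single linear functional, apply the shuffle product property to convert $P$ into $P^{\shuffle}$, and then pull the expectation inside the pairing. Your version is in fact slightly more explicit than the paper's in justifying why the expectation commutes with the pairing (finitely many words, moment assumption on the expected signature).
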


The optimisation problem \eqref{eq:reduced} has two components. First, there is a linear functional, which depends the control $\ell \in T((\mathbb R^2))$ over which we are optimising, but does not depend on the price path $X$. The second component is the expected signature of the lead-lag process $\mathbb E \left [ \widehat{\mathbb X}_{0,T}^{LL,<\infty}\right ]$, which clearly depends on the price path $X$, but does not depend on the control. Notice that if a risk-neutral measure is used instead of the real-world probability measure, knowing the expected signature is equivalent of knowing the prices of all signature payoffs.

So far we haven't made any assumptions on the price path $X$ nor the volatility $\langle X\rangle$. Hence, the only information one needs about the process to optimally hedge an exotic derivative is the expected signature of the lead-lag path corresponding to the underlying asset. In Section \ref{subsec:implied ES} we will see that one can infer the \textit{implied expected signature} from market prices of exotic derivatives in a model-free way -- therefore, no model has to be imposed on the price path. However, if one wants to assume a particular model for the price process (such as a particular It\^o process or semimartingale), the expected signature can be computed either using Monte Carlo methods or, in certain cases, by solving a PDE. A more detailed discussions about this will be made in Section \ref{sec:solving}.

A consequence of Theorem \ref{th:sig hedging reduced} is the following corollary, which gives sufficient conditions for a linear signature payoff to be \textit{attainable} or \textit{replicable}.

\begin{corollary}
Let $f\in T((\mathbb R^4)^\ast)$. Assume that $f$ is of the form
$$f = p_0\word{\varnothing} + \sum_{n=0}^N \sum_{\substack{\word w=\word{i_1}\ldots\word{i_n}\\ i_j\in \{1, 2\}}} \lambda_{\word w} \word{i_1}\ldots\word{i_n 4}$$
where $N\in \mathbb N$ and $p_0, \lambda_{\word w}\in \mathbb R$. Then, $f$ is attainable and the optimal hedging strategy is given by the linear signature trading strategy
$$t \mapsto \sum_{n=0}^N \sum_{\substack{\word w=\word{i_1}\ldots\word{i_n}\\ i_j\in \{1, 2\}}} \lambda_{\word w} \left \langle \word{i_1}\ldots\word{i_n}, \widehat{\mathbb X}_{0,t}^{<\infty}\right \rangle.$$
\end{corollary}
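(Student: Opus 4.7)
The plan is to exhibit a hedging strategy that achieves perfect replication, so that the replication error vanishes identically $\mathbb P$-a.s. Define
$$\ell := \sum_{n=0}^N \sum_{\substack{\word w=\word{i_1}\ldots\word{i_n}\\ i_j\in \{1,2\}}} \lambda_{\word w}\,\word{i_1}\ldots\word{i_n}\in T((\mathbb R^2)^\ast),$$
so that the assumed form of $f$ reads $f = p_0\word{\varnothing} + \ell\,\word 4$ under the concatenation convention introduced in Section \ref{subsec:tensor algebra}. The candidate hedge is the linear signature trading strategy $\theta\in \mathcal T_{sig}(\Lambda_T)$ given by $\theta(\widehat{\mathbb X}|_{[0,t]}^{<\infty}) := \langle \ell, \widehat{\mathbb X}_{0,t}^{<\infty}\rangle$, which is exactly the strategy appearing in the statement.

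Next, I would apply Lemma \ref{lemma:ito-stratonovich} together with the linearity of the integral to obtain
$$\int_0^T \theta(\widehat{\mathbb X}|_{[0,t]}^{<\infty})\,dX_t \;=\; \langle \ell\,\word 4, \widehat{\mathbb X}_{0,T}^{LL,<\infty}\rangle \;=\; \sum_{n,\word w} \lambda_{\word w}\,\langle \word{i_1}\ldots\word{i_n 4}, \widehat{\mathbb X}_{0,T}^{LL,<\infty}\rangle.$$
Adding the initial capital $p_0 = p_0\langle \word{\varnothing}, \widehat{\mathbb X}_{0,T}^{LL,<\infty}\rangle$ and using the assumed representation of $f$ produces the pathwise identity
$$p_0 + \int_0^T \theta(\widehat{\mathbb X}|_{[0,t]}^{<\infty})\,dX_t \;=\; \langle p_0\word{\varnothing} + \ell\,\word 4, \widehat{\mathbb X}_{0,T}^{LL,<\infty}\rangle \;=\; \langle f, \widehat{\mathbb X}_{0,T}^{LL,<\infty}\rangle.$$
In particular the replication error is identically zero, which is precisely the statement that $f$ is attainable with initial capital $p_0$ and hedging strategy $\theta$.

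To conclude optimality in the sense of \eqref{eq:sig version}, I would invoke Theorem \ref{th:sig hedging reduced}: with this specific $\ell$, the element $f - p_0\word{\varnothing} - \ell\,\word 4$ is zero, so $P^{\shuffle}(0) = P(0)\,\word{\varnothing}$, and pairing against $\mathbb E[\widehat{\mathbb X}_{0,T}^{LL,<\infty}]$ yields the objective value $P(0)$; since the replication error is pathwise zero, no other admissible control can improve on this, so $\theta$ is optimal. The only substantive ingredient is Lemma \ref{lemma:ito-stratonovich}; the rest is linear-algebraic bookkeeping of the concatenation-with-$\word 4$ operation on the lead-lag signature, and no genuine obstacle is anticipated.
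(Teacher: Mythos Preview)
Your argument is correct and matches the paper's intended reasoning: the paper states this corollary without proof as an immediate consequence of Theorem~\ref{th:sig hedging reduced}, and your write-up simply unpacks that consequence via Lemma~\ref{lemma:ito-stratonovich} (the identity $\int_0^T \langle \ell,\widehat{\mathbb X}_{0,t}^{<\infty}\rangle\,dX_t=\langle \ell\word 4,\widehat{\mathbb X}_{0,T}^{LL,<\infty}\rangle$), which is precisely the step used inside the proof of Theorem~\ref{th:sig hedging reduced}. One small caveat: your final clause ``no other admissible control can improve on this'' tacitly assumes $P$ is minimised at $0$ (e.g.\ $P(x)=x^2$); the attainability statement and the identification of the replicating strategy are what the corollary is really asserting, and those you have established cleanly.
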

\subsection{The general optimal hedging problem}\label{subsec:general optimal hedging}

The optimisation problem \eqref{eq:reduced} offers an implementable way of computing the optimal hedge in \eqref{eq:sig version} -- for example, in the case of mean-variance hedging where $P(x):=x^2$, the optimisation problem \eqref{eq:reduced} is reduced to finding the global minimum of a high-dimensional quadratic polynomial, which in turn can be easily found by solving a certain system of linear equation (see Section \ref{sec:solving} for a discussion on numerically solving \eqref{eq:sig version} in practice).

However, the ultimate goal is to solve the optimal hedging problem shown in \eqref{eq:optimal hedging}. In what follows, we justify why we may replace a general payoff function $F:\widehat \Omega_T^{LL} \to \mathbb R$ with a signature payoff, and why we may restrict the class of trading strategies from $\mathcal T^q(\Lambda_T)$ to $\mathcal T_{sig}(\Lambda_T)$. This section culminates in Theorem \ref{th:putting things together}, which allows us to consider the tractable optimal hedging problem \eqref{eq:sig version} (and hence \eqref{eq:reduced}) instead of the original nonlinear, hard-to-solve problem \eqref{eq:optimal hedging}.

See Appendix \ref{appendix:densities} for the detailed justification of why solving \eqref{eq:sig version} is enough to solve \eqref{eq:optimal hedging}.

\subsubsection*{From payoffs to signature payoffs}

Arrow-Debreu securities are securities that pay 1 if a certain state of the market occurs, and nothing otherwise. Hence, exotic derivatives can be decomposed as linear combinations of such securities -- Arrow-Debreu securities are, in other words, \textit{primitive securities} from which all other securities are built.

In a similar fashion, the iterated integrals that define signatures (Definition \ref{def:sig}) are primitive securities in the sense that other exotic, path-dependent payoffs are well-approximated by linear combinations of such iterated integrals. They are, effectively, basic securities from which other securities are built. Moreover, given that signature payoffs are defined as linear combinations of certain iterated integrals, they include a lot of information about the P\&L of all possible dynamic trading strategies.

This is made precise in the following proposition (\cite[Theorem 4.1]{signature_pricing}).

\begin{proposition}\label{prop:density sig payoffs}
Let $F:\widehat \Omega_T^{LL} \to \mathbb R$ be a continuous payoff. Given any $\varepsilon>0$, there exists a compact set $\mathcal K_\varepsilon\subset \widehat \Omega_T$ (which does not depend on $F$) and $f\in T((\mathbb R^4)^\ast)$ such that:
\begin{enumerate}
\item $\mathbb P[\mathcal K_\varepsilon] > 1-\varepsilon$,
\item $|F(\widehat{\mathbb X}^{LL, <\infty}) - \langle f, \widehat{\mathbb X}_{0,T}^{LL, <\infty}\rangle |<\varepsilon\quad \forall\, \widehat{\mathbb X}^{LL, <\infty}\in \mathcal K_\varepsilon.$
\end{enumerate}
\end{proposition}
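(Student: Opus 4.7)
The plan is to combine two classical ingredients: tightness of the probability measure $\mathbb P$ to produce a compact set $\mathcal K_\varepsilon$ that depends only on $\mathbb P$ and $\varepsilon$ (not on $F$), and a Stone-Weierstrass argument on that compact set to produce the approximating linear functional $f \in T((\mathbb R^4)^\ast)$. This is the rough-path analogue of the standard proof that polynomials are dense in continuous functions on compact subsets of $\mathbb R^d$.

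First I would observe that $\widehat \Omega_T^{LL}$, equipped with the $p$-variation distance, is a complete separable metric space, so every Borel probability measure on it is Radon and hence inner-regular. In particular $\mathbb P$ is tight, so for any $\varepsilon > 0$ we can pick a compact set $\mathcal K_\varepsilon$ with $\mathbb P[\mathcal K_\varepsilon] > 1 - \varepsilon$. This $\mathcal K_\varepsilon$ serves simultaneously as the compact set in the conclusion and as the domain on which Stone-Weierstrass will be applied; crucially, it is selected before $F$ enters the argument.

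Next I would verify the hypotheses of Stone-Weierstrass for the family $\mathcal A := \{\widehat{\mathbb X}^{LL, <\infty} \mapsto \langle \ell, \widehat{\mathbb X}_{0,T}^{LL, <\infty}\rangle : \ell \in T((\mathbb R^4)^\ast)\}$ viewed inside $C(\mathcal K_\varepsilon; \mathbb R)$. (i) $\mathcal A$ is a subalgebra: it is obviously a vector space, and the shuffle product property (Lemma \ref{lemma:shuffle product property}) gives $\langle \ell_1, \cdot\rangle\langle \ell_2, \cdot\rangle = \langle \ell_1 \shuffle \ell_2, \cdot\rangle$, so $\mathcal A$ is closed under pointwise multiplication. (ii) $\mathcal A$ contains the constants, since $\langle c \word\varnothing, \widehat{\mathbb X}_{0,T}^{LL, <\infty}\rangle = c$. (iii) $\mathcal A$ separates points: by construction $\widehat X_t = (t, X_t)$, so $t \mapsto \langle \word 1, \widehat{\mathbb X}_{0,t}^{LL, \leq 2}\rangle = t$ is strictly monotone, and Corollary \ref{cor:uniqueness signature} tells us that $\widehat{\mathbb X}_{0,T}^{LL, <\infty}$ uniquely determines the underlying lead-lag rough path; hence two distinct points of $\mathcal K_\varepsilon$ must differ in some coordinate $\langle \word w, \cdot\rangle$. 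Stone-Weierstrass then yields $f$ with $\sup_{\mathcal K_\varepsilon} |F(\cdot) - \langle f, \cdot\rangle| < \varepsilon$, which is exactly condition (2).

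The main technical subtlety I expect is point separation, and specifically the role of the time augmentation. The uniqueness theorem of \cite{horatio} only identifies paths up to tree-like equivalence, so without the $t$-coordinate in $\widehat X$ one could have two distinct price paths with identical signature and $\mathcal A$ would fail to separate them. The augmentation is precisely what makes Corollary \ref{cor:uniqueness signature} applicable and converts the abstract Stone-Weierstrass hypothesis into a genuine property of the market path space. Once this is in place the remaining steps are routine.
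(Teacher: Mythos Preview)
Your proposal is correct and follows essentially the same route as the paper's own proof: tightness of $\mathbb P$ on a Polish space to obtain $\mathcal K_\varepsilon$, followed by Stone--Weierstrass using the shuffle product (Lemma \ref{lemma:shuffle product property}) for the algebra property and Corollary \ref{cor:uniqueness signature} for point separation. Your write-up is in fact more careful than the paper's, particularly in spelling out why the time augmentation is needed to upgrade the uniqueness-up-to-tree-like-equivalence result of \cite{horatio} into genuine point separation.
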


In other words, there exists a large compact set -- large in the sense that with very high probability, all price paths one observes lie on that compact set -- such that on the compact set, all continuous payoffs look like signature payoffs.

The authors want to emphasise that the linear functional $f$ from Proposition \ref{prop:density sig payoffs} does not depend on any model for the underlying assets. Indeed, it is a pathwise and a model-free density result that does not require any probability structure. The only role of the probability measure $\mathbb P$ in Proposition \ref{prop:density sig payoffs} is providing a notion of \textit{big} compact sets -- i.e. point 1. in Proposition \ref{prop:density sig payoffs}.

\subsubsection*{From trading strategies to signature trading strategies}

In Section \ref{subsec:trading strategies}, we defined the space of trading strategies $\mathcal T$ -- which intuitively consists of all adapted processes -- as well as the subspace of signature trading strategies $\mathcal T_{sig}\subset \mathcal T$. Similarly to signature payoffs, the space of signature trading strategies is big in the sense that arbitrary trading strategies can be well-approximated by them:

\begin{proposition}\label{prop:density sig strategies}
Let $\mathcal K\subset \Lambda_T$ be a compact set. Then, given any trading strategy $\theta\in \mathcal T$, there exists $\ell \in T((\mathbb R^2)^\ast)$ such that
$$|\theta(\widehat{\mathbb X}|_{[0,t]}^{<\infty}) - \langle \ell, \widehat{\mathbb X}_{0,t}^{<\infty}\rangle | < \varepsilon \quad \forall\,\widehat{\mathbb X}|_{[0,t]}^{<\infty} \in \mathcal K.$$
\end{proposition}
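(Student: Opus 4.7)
The natural approach is a Stone--Weierstrass argument applied to the family
$$\mathcal{A} := \left\{\widehat{\mathbb X}|_{[0,t]}^{<\infty} \mapsto \langle \ell, \widehat{\mathbb X}_{0,t}^{<\infty}\rangle : \ell \in T((\mathbb R^2)^\ast)\right\}$$
viewed as a subalgebra of $C(\mathcal K; \mathbb R)$. Since any $\theta \in \mathcal T = C(\Lambda_T; \mathbb R)$ restricts to a continuous function on the compact set $\mathcal K$, uniform density of $\mathcal A$ on $\mathcal K$ will yield the conclusion.

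First I would confirm $\mathcal A \subset C(\mathcal K; \mathbb R)$: each element $\ell \in T((\mathbb R^2)^\ast)$ is supported on finitely many tensor levels, and the associated truncated signature is continuous in the $p$-variation topology on $\Lambda_T$ used in Definition \ref{def:trading strategy}. Next, I would verify the algebra structure. Closure under addition and scalar multiplication is immediate from linearity of $\ell \mapsto \langle \ell, \cdot \rangle$, while closure under pointwise products follows from the shuffle product property (Lemma \ref{lemma:shuffle product property}):
$$\langle \ell_1, \widehat{\mathbb X}_{0,t}^{<\infty}\rangle \langle \ell_2, \widehat{\mathbb X}_{0,t}^{<\infty}\rangle = \langle \ell_1 \shuffle \ell_2, \widehat{\mathbb X}_{0,t}^{<\infty}\rangle.$$
Constants are in $\mathcal A$ by taking $\ell = \word \varnothing$.

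The separation of points is where the time augmentation $\widehat X_t = (t, X_t)$ plays its role. Two elements of $\Lambda_T$ are themselves tensor series, so if they are distinct as elements of $T((\mathbb R^2))$ then they disagree in some coordinate, and the corresponding dual basis element immediately separates them. What must additionally be ruled out is the possibility that stopped paths with different stopping times (or, more generally, tree-like equivalent stopped paths) collapse to the same signature; this is precisely the content of Corollary \ref{cor:uniqueness signature}, since $t \mapsto \langle \word 1, \widehat{\mathbb X}_{0,t}^{<\infty}\rangle = t$ is strictly monotone, so the stopping time is recoverable from the level-one term and the underlying stopped rough path is uniquely determined.

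With constants, point separation, and the algebra property in hand, Stone--Weierstrass gives that $\mathcal A$ is uniformly dense in $C(\mathcal K; \mathbb R)$, so $\theta|_{\mathcal K}$ can be approximated within $\varepsilon$ by some $\langle \ell, \cdot \rangle$. The main obstacle I expect is not the algebraic structure (which is essentially automatic from shuffles) but rather pinning down the correct topology on $\Lambda_T$ so that all signature-linear functionals are genuinely continuous and so that $\mathcal K$ is compact in a way compatible with the $p$-variation estimates on the signature; once this topological bookkeeping is settled, the Stone--Weierstrass machinery closes the argument cleanly.
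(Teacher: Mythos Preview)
Your proposal is correct and follows essentially the same Stone--Weierstrass argument as the paper: the algebra property via the shuffle identity (Lemma~\ref{lemma:shuffle product property}), constants via $\word\varnothing$, and point separation from uniqueness of the signature. The only cosmetic difference is that the paper cites the uniqueness result of \cite{horatio} directly for separation, whereas you route through Corollary~\ref{cor:uniqueness signature} and the strict monotonicity of the time coordinate; since $\Lambda_T$ is already defined as a subset of $T((\mathbb R^2))$, your observation that distinct elements differ in some finite tensor coordinate is in fact the most direct way to see separation, and the tree-like discussion is a helpful sanity check rather than a necessary step.
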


\subsubsection*{Putting everything together: from optimal hedging to optimal signature hedging}

A consequence of Proposition \ref{prop:density sig payoffs} and Proposition \ref{prop:density sig strategies} is the following theorem, which justifies why we can consider the optimal linear signature problem \eqref{eq:sig version} instead of the original optimal hedging problem \eqref{eq:sig version}.

\begin{theorem}\label{th:putting things together}
Let
$$a:=\inf_{\theta \in \mathcal{T}^q(\Lambda_T)} \mathbb{E} \left [ P \left ( F(\widehat{\mathbb{X}}^{LL,<\infty}) - p_0 - \int_0^T \theta(\widehat {\mathbb X}|_{[0, t]}^{<\infty}) dX_t\right )\right ]$$
be the infimum of the optimal polynomial hedging problem \eqref{eq:optimal hedging}. Given any $\varepsilon>0$, there exists a compact set $\mathcal K_\varepsilon\subset \widehat \Omega_T$, a linear signature payoff given by $f\in T((\mathbb R^4)^\ast)$ and a linear signature trading strategy given by $\ell\in T((\mathbb R^2)^\ast)$ such that:
\begin{enumerate}
\item $\mathbb P[\mathcal K_\varepsilon] > 1-\varepsilon$,
\item $|F(\widehat{\mathbb X}^{LL,<\infty}) - \langle f, \widehat{\mathbb X}_{0,T}^{LL, <\infty}\rangle |<\varepsilon\quad \forall\, \widehat{\mathbb X}^{LL,<\infty}\in \mathcal K_\varepsilon$,
\item $|\theta(\widehat{\mathbb X}|_{[0,t]}^{<\infty}) - \langle \ell, \widehat{\mathbb X}_{0,t}^{<\infty}\rangle | < \varepsilon \quad \forall\,\widehat{\mathbb X}^{<\infty} \in \mathcal K_\varepsilon$ and $t\in [0, T]$,
\item $|a_\varepsilon - a| \leq \varepsilon$, where
$$a_\varepsilon := \mathbb E \left [ P\left (\langle f, \widehat{\mathbb X}_{0,T}^{LL, <\infty}\rangle - p_0 - \int_0^T \langle \ell, \widehat{\mathbb X}_{0,t}^{<\infty}\rangle dX_t\right )\;;\;\mathcal K_\varepsilon\right ].$$
\end{enumerate}
\end{theorem}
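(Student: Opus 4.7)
The plan is to chain three approximations, each contributing at most $\varepsilon/3$ to the final error: first replace $a$ by the value of \eqref{eq:optimal hedging} at a near-minimizer $\theta\in \mathcal T^q(\Lambda_T)$; next truncate that expectation to a large compact set; and finally replace $F$ by a signature payoff $\langle f,\cdot\rangle$ and $\theta$ by a signature strategy $\langle \ell,\cdot\rangle$ on that compact set. The density results Proposition \ref{prop:density sig payoffs} and Proposition \ref{prop:density sig strategies} supply the pointwise approximations, and the $L^q$-integrability built into \eqref{eq:optimal hedging} supplies the uniform-integrability estimate for the truncation step.

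Concretely, I would first pick $\theta\in \mathcal T^q(\Lambda_T)$ whose value in \eqref{eq:optimal hedging} lies within $\varepsilon/3$ of $a$, and set $Y:=P\bigl(F(\widehat{\mathbb X}^{LL,<\infty})-p_0-\int_0^T \theta(\widehat{\mathbb X}|_{[0,t]}^{<\infty})\,dX_t\bigr)$. Since $F$ and $\int_0^T\theta\,dX$ lie in $L^q$, we have $Y\in L^1$, so there exists a compact set $\mathcal K_1\subset \widehat \Omega_T$ with $\mathbb P[\mathcal K_1]>1-\eta$ and $|\mathbb E[Y;\mathcal K_1^c]|<\varepsilon/3$ once $\eta$ is small enough. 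Applying Proposition \ref{prop:density sig payoffs} with tolerance $\delta>0$ produces a compact set $\mathcal K_2\subset \widehat\Omega_T$ with $\mathbb P[\mathcal K_2]>1-\eta$ and $f\in T((\mathbb R^4)^\ast)$ such that $|F(\widehat{\mathbb X}^{LL,<\infty})-\langle f,\widehat{\mathbb X}_{0,T}^{LL,<\infty}\rangle|<\delta$ for every lead-lag signature whose projection lies in $\mathcal K_2$. Set $\mathcal K_\varepsilon := \mathcal K_1\cap \mathcal K_2$, and lift it to the compact set $\widetilde{\mathcal K}\subset \Lambda_T$ of all stopped-path signatures whose full-time extensions project into $\mathcal K_\varepsilon$; compactness follows from continuity of the stopped-signature map $t\mapsto \widehat{\mathbb X}|_{[0,t]}^{<\infty}$ on $[0,T]$. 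Proposition \ref{prop:density sig strategies} applied on $\widetilde{\mathcal K}$ then yields $\ell\in T((\mathbb R^2)^\ast)$ with $|\theta(\widehat{\mathbb X}|_{[0,t]}^{<\infty})-\langle \ell,\widehat{\mathbb X}_{0,t}^{<\infty}\rangle|<\delta$ uniformly on $\widetilde{\mathcal K}$, delivering (1), (2) and (3).

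For (4), on $\mathcal K_\varepsilon$ the argument of $P$ inside $a_\varepsilon$ differs from the argument of $Y$ by at most $\delta + |\int_0^T (\theta(\widehat{\mathbb X}|_{[0,t]}^{<\infty}) - \langle \ell,\widehat{\mathbb X}_{0,t}^{<\infty}\rangle)\,dX_t|$. After possibly further restricting $\mathcal K_\varepsilon$ to a subset on which this stochastic integral error is also bounded, both arguments of $P$ lie in a fixed compact interval on which $P$ is Lipschitz, so the total error becomes uniformly small as $\delta\to 0$. Combining the truncation error ($<\varepsilon/3$), the signature-substitution error on $\mathcal K_\varepsilon$ ($<\varepsilon/3$) and the near-optimality gap of $\theta$ ($<\varepsilon/3$) yields $|a_\varepsilon - a|\leq \varepsilon$, as required.

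The main technical obstacle is converting the pathwise uniform closeness $|\theta-\langle\ell,\cdot\rangle|<\delta$ into a corresponding smallness estimate for the stochastic integral $\int_0^T(\theta-\langle\ell,\cdot\rangle)\,dX_t$, since $X$ is not of bounded variation and the integrand-to-integral map is not Lipschitz in the supremum norm. In the semimartingale case this is handled via It\^o isometry (or Burkholder--Davis--Gundy for $q>2$) together with integrability of $\langle X\rangle$; in the general rough-path setting one invokes continuity of the rough integral with respect to its integrand on the compact set $\widetilde{\mathcal K}$. A secondary technicality is verifying that the stopped-signature map is continuous enough for $\widetilde{\mathcal K}$ to be compact in $\Lambda_T$, which is standard in $p$-variation topology.
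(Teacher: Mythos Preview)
Your proposal is correct and follows essentially the same route as the paper, whose own proof is the single sentence ``Follows from Proposition \ref{prop:density payoffs appendix} and Proposition \ref{prop:density sig strategies}, and the triangle inequality.'' You have supplied the details the paper omits, and in particular you correctly flag the one nontrivial analytic step the paper's one-liner elides---passing from uniform closeness of $\theta$ and $\langle\ell,\cdot\rangle$ on a compact set to control of $\int_0^T(\theta-\langle\ell,\cdot\rangle)\,dX_t$---together with the natural remedies.
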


That is, being able to solve the optimal linear signature hedging problem \eqref{eq:sig version} provides a numerically feasible way of finding a minimising sequence of the optimal polynomial hedging problem \eqref{eq:optimal hedging}.

\subsection{Solving the optimal linear signature hedging problem}\label{sec:solving}

\begin{algorithm}[t]
	\SetAlgoLined
	\SetKwInOut{Parameters}{Parameters}
	\Parameters           {
		$T>0$: terminal time.\\
		$P\in \mathbb{R}[x]$: polynomial of degree $q\in \mathbb{N}$.\\
		$F$: an $L^q$ payoff.  \\
		$p_0\in \mathbb{R}$: initial capital.\\
		$N\geq 2$: order of the signature.\\
		$\widehat {\mathbb X}^{LL, <\infty}$: market price path.
		 }
	\KwOut{An estimation $\ell\in T^{(\lfloor{N/q}\rfloor)}((\mathbb{R}^{2})^\ast)$ of the optimal hedge.}
	\BlankLine
	
	Take a finite dataset $\mathcal{D}\subset \widehat \Omega_T$.\\
	Transform $\mathcal D$ into a dataset of truncated signatures of order $N$, i.e. $\mathcal{D}_N:=\{\widehat{\mathbb{Y}}_{0, T}^{LL, \leq N}: \widehat{\mathbb{Y}}^{LL, <\infty}\in \mathcal{D}\}$.\\
	Compute the payoffs $F(\mathcal{D})\subset \mathbb{R}$.\\
	Apply linear regression to $\mathcal{D}_N$ against $F(\mathcal{D})$ to find $f\in T^N((\mathbb{R}^{4})^\ast)$ such that $\langle f, \widehat{\mathbb{Y}}_{0, T}^{LL, \leq N} \rangle \approx F(\widehat{\mathbb{Y}}^{LL, <\infty})$ for each $\widehat{\mathbb{Y}}^{LL, <\infty}\in \mathcal{D}$.\\
	Estimate the expected signature $\mathbb{E}^{\mathbb{P}}\left [\widehat{\mathbb{X}}_{0, T}^{LL,\leq N}\right ]$.\\
	Find a minimiser $\ell$ of the optimisation problem \begin{align*}&\mbox{minimise }\left \langle P^{\shuffle} \left (f - p_0\word \varnothing - \ell \word 4\right ), \mathbb{E} \left [\widehat{\mathbb{X}}_{0, T}^{\mathrm{LL}, \leq N}\right ] \right \rangle\\
&\mbox{over}\quad\ell\in T^{(\lfloor N/q\rfloor)}((\mathbb R^2)^\ast).\end{align*}
	
	\Return{$\ell$.}
	\caption{Estimating the optimal hedge.}
	\label{algo:optimal hedge}
\end{algorithm}

Theorem \ref{th:sig hedging reduced}, together with Theorem \ref{th:putting things together}, offers an implementable approach to numerically approximating the optimal hedging strategy for a polynomial $P$ of degree $q$, an $L^q$-payoff $F$ and initial cash $p_0$.

The only information that is needed about the process is its expected signature. The expected signature plays a role similar to the moments of a real-valued random variable, but on path space -- i.e. under certain growth assumptions, the expected signature of a process determines the law of the process \cite{ilya}. Therefore, the fact that the optimal hedge depends on the expected signature essentially means that it depends on the whole law of the dynamics of the price path.

For obvious computational reasons, one cannot work with the whole expected signature -- one has to begin by fixing a signature order $N\in \mathbb N$ and considering the corresponding truncated signature of the lead-lag process $\mathbb E[\widehat{\mathbb X}_{0,T}^{LL, \leq N}]$. We will show in Section \ref{sec:market data} that, if one has access to market prices of enough exotics, it is possible to infer the market expected signature -- i.e. the \textit{implied expected signature}. In other words, it is possible to solve the optimal linear signature hedging problem \eqref{eq:sig version} in a \textit{model-free} way, without imposing any model on the price dynamics of the underlying. If one wishes to impose a particular model on the market, however, the expected signature can be computed using Monte Carlo methods or even by solving a certain PDE -- see \cite{hao}. 

\begin{remark}
The shuffle product of a word of length $n$ and a word of length $m$ is a sum of words of lengths $m+n$. Therefore, when considering the truncated signature $\mathbb E[\widehat{\mathbb X}_{0,T}^{LL, \leq N}]$ and a polynomial $P$ of degree $q$, \eqref{eq:reduced} has to be minimised over $\ell\in T^{(\lfloor N/q\rfloor)}((\mathbb R^2)^\ast)$ rather than $\ell\in T((\mathbb R^2)^\ast)$.
\end{remark}

Once \eqref{eq:reduced} is restricted to a signature order $N\in \mathbb N$, the optimisation problem \eqref{eq:reduced} is reduced to finding the minimum of a high-dimensional polynomial of degree $q$. In particular, in the mean-variance optimal hedging problem where the polynomial is given by $P(x):=x^2$, solving \eqref{eq:reduced} consists of solving a system of linear equations.

Algorithm \ref{algo:optimal hedge} describes the proposed algorithm. We make some practical remarks. Signatures can be computed using the publicly available software \texttt{esig}\footnote{\url{https://pypi.org/project/esig/}}. An alternative package is \texttt{iisignature}\footnote{\url{https://github.com/bottler/iisignature}}. Finally, the recursive definition of the shuffle product in Definition \ref{def:shuffleproduct} allows for an easy and efficient implementation.

Once we have found a minimising linear functional $\ell \in  T^{(\lfloor N/q\rfloor)}((\mathbb R^2)^\ast)$, the hedging strategy would be given by $$t \mapsto \langle \ell, \widehat{\mathbb X}_{0,t}^{\leq N}\rangle \quad \forall t\in [0,T].$$

\section{Extensions}\label{sec:extensions}
In this section we will discuss a few extensions of our original framework. Needless to say, these extensions can be combined depending on the features one wishes the optimal hedging problem to have.

\subsection{Exponential hedging}\label{subsec:exponential hedging}

Often, one is interested in only hedging unfavourable differences between the hedged derivative and the hedging strategy. In other words, one wants to penalise losses and reward profits. This can be accomplished by considering the exponential hedging problem, where instead of minimising the expectation of a polynomial on the P\&L, one replaces the polynomial by an exponential function $x\mapsto \exp(-\lambda x)$ for some risk parameter $\lambda>0$.

We will begin by introducing the space of trading strategies that we will consider admissible.

\begin{definition}[Admissible trading strategy for exponential hedging]
Define the space of admissible trading strategies
$$\mathcal T^{\infty}(\Lambda_T) := \left \{ \theta\in \mathcal T(\Lambda_T) : \int_0^T \theta(\widehat{\mathbb X}|^{<\infty}_{[0,u]} dX_u\mbox{ is bounded a.s.}\right \}.$$
\end{definition}

We may now define the optimal exponential hedging problem as follows:

\begin{definition}[Optimal exponential hedging problem]
Let $\lambda>0$. Let $F$ be a payoff that is bounded a.s. Let $p_0\in \mathbb R$. The associated optimal exponential problem is:
\begin{equation}\label{eq:exp hedging}
\inf_{\theta\in \mathcal T^\infty(\Lambda_T)} \mathbb{E} \left [ \exp \left (-\lambda \left ( p_0 + \int_0^T \theta(\widehat {\mathbb X}|_{[0, t]}^{<\infty}) dX_t - F(\widehat{\mathbb{X}}^{LL,<\infty})\right )\right)\right ]
\end{equation}
\end{definition}

The parameter $\lambda>0$ measures the risk tolerance of the trader: the greater it is, the less tolerant the trader is with respect to losses.

The following proposition shows that we can tackle the optimal hedging problem \eqref{eq:exp hedging} by reducing it to an optimal hedging problem of the form \eqref{eq:sig version}, which was solved in Section \eqref{eq:optimal hedging}.

\begin{proposition}\label{prop:exponential hedging}
Let
$$a:=\inf_{\theta \in \mathcal{T}^q(\Lambda_T)} \mathbb{E} \left [ \exp \left (-\lambda \left ( p_0 + \int_0^T \theta(\widehat {\mathbb X}|_{[0, t]}^{<\infty}) dX_t - F(\widehat{\mathbb{X}}^{LL, <\infty}) \right )\right )\right ]$$
be the infimum of the optimal exponential hedging problem. Given any $\varepsilon>0$, there exists a polynomial $P_\varepsilon \in \mathbb R[x]$, a compact set $\mathcal K_\varepsilon\subset \widehat \Omega_T$, a linear signature payoff given by $f\in T((\mathbb R^4)^\ast)$ and a linear signature trading strategy given by $\ell\in T((\mathbb R^2)^\ast)$ such that:
\begin{enumerate}
\item $P_\varepsilon \xrightarrow{\varepsilon \rightarrow 0} \exp(-\lambda\, \cdot)$ uniformly on compacts,
\item $\mathbb P[\mathcal K_\varepsilon] > 1-\varepsilon$,
\item $|F(\widehat{\mathbb X}^{LL,<\infty}) - \langle f, \widehat{\mathbb X}_{0,T}^{LL,<\infty}\rangle |<\varepsilon\quad \forall\, \widehat{\mathbb X}^{LL, <\infty}\in \mathcal K_\varepsilon$,
\item $|\theta(\widehat{\mathbb X}|_{[0,t]}^{<\infty}) - \langle \ell, \widehat{\mathbb X}_{0,t}^{<\infty}\rangle | < \varepsilon \quad \forall\,\widehat{\mathbb X}^{<\infty} \in \mathcal K_\varepsilon$ and $t\in [0, T]$,
\item $|a_\varepsilon - a| \leq \varepsilon$, where
$$a_\varepsilon := \mathbb E \left [ P_\varepsilon\left (p_0 + \int_0^T \langle \ell, \widehat{\mathbb X}_{0,t}^{<\infty}\rangle dX_t - \langle f, \widehat{\mathbb X}_{0,T}^{LL,<\infty}\rangle\right )\;;\;\mathcal K_\varepsilon\right ].$$
\end{enumerate}
\end{proposition}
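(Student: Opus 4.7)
The plan is to mirror the proof of Theorem \ref{th:putting things together} and combine it with a Weierstrass approximation of the exponential by polynomials on a bounded interval. The key observation that makes this work is that admissibility provides an a.s. bound on $\int_0^T \theta \, dX_t$, so that together with the boundedness of $F$ we may localise the argument of $\exp(-\lambda\,\cdot)$ to a compact interval.

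First I would fix $\varepsilon>0$ and choose a nearly optimal admissible strategy $\theta^\ast\in \mathcal T^\infty(\Lambda_T)$ with expected exponential loss within $\varepsilon/4$ of $a$. Because $\theta^\ast$ is admissible and $F$ is bounded, there is $M>0$ such that the quantity $Y:=p_0+\int_0^T \theta^\ast(\widehat{\mathbb X}|_{[0,t]}^{<\infty})\,dX_t - F(\widehat{\mathbb X}^{LL,<\infty})$ lies in $[-M,M]$ almost surely. By the Weierstrass theorem, there is a polynomial $P_\varepsilon$ with $\sup_{|x|\le M+1}|P_\varepsilon(x)-\exp(-\lambda x)|<\varepsilon/4$, and by choosing a sequence $\varepsilon\to 0$ one gets $P_\varepsilon\to \exp(-\lambda\,\cdot)$ uniformly on compacts, yielding item 1.

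Next I would apply Proposition \ref{prop:density sig payoffs} to $F$ and Proposition \ref{prop:density sig strategies} to $\theta^\ast$, exactly as in the proof of Theorem \ref{th:putting things together}, to obtain a compact set $\mathcal K_\varepsilon\subset \widehat \Omega_T$ with $\mathbb P[\mathcal K_\varepsilon]>1-\varepsilon$ together with $f\in T((\mathbb R^4)^\ast)$ and $\ell\in T((\mathbb R^2)^\ast)$ satisfying items 2--4. The work of Theorem \ref{th:putting things together} also upgrades the pointwise approximation of $\theta^\ast$ by $\langle \ell,\widehat{\mathbb X}_{0,t}^{<\infty}\rangle$ to a uniform approximation of the rough path integrals on $\mathcal K_\varepsilon$, via the continuity of rough integration on bounded-$p$-variation sets; I would invoke this directly. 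Writing $Y_\ell := p_0+\int_0^T \langle \ell,\widehat{\mathbb X}_{0,t}^{<\infty}\rangle dX_t - \langle f, \widehat{\mathbb X}_{0,T}^{LL,<\infty}\rangle$, on $\mathcal K_\varepsilon$ we then have $|Y-Y_\ell|<C\varepsilon$ and both $Y, Y_\ell\in [-M-1,M+1]$ (for $\varepsilon$ small).

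Finally, on $\mathcal K_\varepsilon$ the difference $|P_\varepsilon(Y_\ell)-\exp(-\lambda Y)|$ is bounded by $|P_\varepsilon(Y_\ell)-\exp(-\lambda Y_\ell)|+|\exp(-\lambda Y_\ell)-\exp(-\lambda Y)|\le \varepsilon/4 + \lambda e^{\lambda(M+1)}C\varepsilon$, which can be made smaller than $\varepsilon/2$ by rescaling the internal $\varepsilon$. Taking expectations restricted to $\mathcal K_\varepsilon$ and using that the tail contribution from $\mathcal K_\varepsilon^c$ to $\mathbb E[\exp(-\lambda Y)]$ is at most $e^{\lambda M}\mathbb P[\mathcal K_\varepsilon^c]\le \varepsilon\, e^{\lambda M}$, one obtains $|a_\varepsilon - a|\le \varepsilon$ after relabelling constants, giving item 5. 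The main obstacle, as in Theorem \ref{th:putting things together}, is promoting the uniform approximation of trading strategies on $\mathcal K_\varepsilon$ to a uniform approximation of the P\&L integrals; this is where one needs $\mathcal K_\varepsilon$ compact in a rough-path-type topology so that the Young/rough integral depends continuously on the integrand. The presence of the exponential itself is handled cleanly once the integrand is known to be uniformly bounded, which is exactly what the admissibility assumption $\mathcal T^\infty$ guarantees.
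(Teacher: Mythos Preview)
Your proposal is correct and follows essentially the same approach as the paper's proof: localise the argument of the exponential to a compact interval using the a.s.\ boundedness coming from $\mathcal T^\infty$ and the boundedness of $F$, approximate $\exp(-\lambda\,\cdot)$ by a polynomial on that interval, and then invoke Propositions~\ref{prop:density sig payoffs} and~\ref{prop:density sig strategies} exactly as in Theorem~\ref{th:putting things together}. The only cosmetic differences are that the paper takes $P_\varepsilon$ to be a Taylor polynomial of $\exp(-\lambda\,\cdot)$ (rather than invoking Weierstrass) and splits the final triangle inequality as $|P_\varepsilon(Y_\ell)-P_\varepsilon(Y)|+|P_\varepsilon(Y)-\exp(-\lambda Y)|$ instead of your $|P_\varepsilon(Y_\ell)-\exp(-\lambda Y_\ell)|+|\exp(-\lambda Y_\ell)-\exp(-\lambda Y)|$; if anything, your version is more explicit about first fixing a near-optimal $\theta^\ast$ and about the passage from uniform approximation of the integrand to uniform approximation of the integral on $\mathcal K_\varepsilon$, a point the paper leaves implicit.
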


\subsection{Semi-static hedging}\label{subsec:semistatic}
In certain situations, one wants to hedge a payoff $F:\widehat \Omega_T^{LL}\to \mathbb R$, for which one has access to a (finite) basket of derivatives $\mathcal B=\{G_i\}_{i=0}^k$ that are allowed for static hedging, as well as the underlying asset $X$ that can be used for dynamic hedging. In other words, at inception $t=0$ the trader has to form a portfolio on the basket $\mathcal B$ and then the trader can dynamically trade on the underlying $X$. For example, the payoff $F$ we wish to hedge could be an exotic option, and the basket $\mathcal B$ could be a basket of simpler vanilla options. We will assume that the payoff $F$, as well as each payoff $G_i\in \mathcal B$, are $L^q$ payoffs. Moreover, following Section \ref{subsec:linearisation} and Section \ref{subsec:general optimal hedging}, we will assume that $F$ is a $f$-signature payoff and $G_i$ is a $g_i$-signature payoff, with $f,g_i\in T((\mathbb R^4)^\ast)$.

If we allow semi-static hedging, the optimal hedging problem is then defined as:
$$\inf_{\substack{\ell\in T((\mathbb R^2)^\ast)\\ (\beta_i)_{i=1}^k\in \Gamma}} \mathbb E \left [ P\left (\langle f, \widehat{\mathbb X}_{0,T}^{LL,<\infty}\rangle - p_0 - \sum_{i=1}^k \beta_i\langle g_i, \widehat{\mathbb X}_{0,T}^{LL, <\infty}\rangle - \int_0^T \langle \ell, \widehat{\mathbb X}_{0,t}^{<\infty}\rangle dX_t\right ) \right ]$$ where $\Gamma\subset \mathbb R^k$ determines the region of admissible strategies on $\mathcal B$. For example, if no constraints are imposed, one could choose $\Gamma=\mathbb R^k$. If no short-selling is allowed, on the other hand, one would choose $\Gamma=\mathbb R_+^k$. Other choices are also allowed, which can include liquidity constraints or other more complex, inter-connected constraints on $\mathcal B$.

Then, the semi-static optimal hedging problem is then reduced to the following.

\begin{corollary}
Given a basket of signature payoffs $\mathcal B=\{g_i\}_{i=1}^k$ and a region $\Gamma\subset \mathbb R^k$, the solution of the semi-static optimal hedging problem is given by the solution of
$$\inf_{\substack{\ell\in T((\mathbb R^2)^\ast)\\ (\beta_i)_{i=1}^k\in \Gamma}} \left \langle P^{\shuffle} \left (f - \sum_{i=1}^k \beta_ig_i - p_0\word \varnothing - \ell \word 4\right ), \mathbb E \left [ \widehat{\mathbb X}_{0,T}^{LL,<\infty}\right ]\right \rangle.$$
\end{corollary}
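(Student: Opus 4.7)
The plan is to reduce the semi-static problem to the already-solved linear signature hedging problem \eqref{eq:sig version} by absorbing the static-basket contribution into an ``effective'' signature payoff. For fixed $(\beta_i)_{i=1}^k \in \Gamma$, define
$$\tilde f_\beta := f - \sum_{i=1}^k \beta_i g_i \in T((\mathbb R^4)^\ast).$$
Since the pairing with the signature is linear in the first argument, we have
$$\langle f, \widehat{\mathbb X}_{0,T}^{LL,<\infty}\rangle - \sum_{i=1}^k \beta_i \langle g_i, \widehat{\mathbb X}_{0,T}^{LL,<\infty}\rangle = \langle \tilde f_\beta, \widehat{\mathbb X}_{0,T}^{LL,<\infty}\rangle,$$
so the objective of the semi-static problem becomes, for each fixed $\beta$,
$$\inf_{\ell \in T((\mathbb R^2)^\ast)} \mathbb E\!\left[ P\!\left(\langle \tilde f_\beta, \widehat{\mathbb X}_{0,T}^{LL,<\infty}\rangle - p_0 - \int_0^T \langle \ell, \widehat{\mathbb X}_{0,t}^{<\infty}\rangle\, dX_t\right)\right],$$
which is exactly an instance of \eqref{eq:sig version} with the payoff functional $\tilde f_\beta$.

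Next I would apply Theorem~\ref{th:sig hedging reduced} to this inner problem. This gives that, for each fixed $\beta$, the inner infimum equals
$$\inf_{\ell \in T((\mathbb R^2)^\ast)} \left\langle P^{\shuffle}\!\left(\tilde f_\beta - p_0 \word\varnothing - \ell\word 4\right),\, \mathbb E\!\left[\widehat{\mathbb X}_{0,T}^{LL,<\infty}\right]\right\rangle.$$
Substituting back the definition of $\tilde f_\beta$ and then taking the infimum over $(\beta_i) \in \Gamma$ jointly with $\ell$ yields the claimed expression. The interchange is legitimate because for any function $H(\ell,\beta)$ we have $\inf_\beta \inf_\ell H = \inf_{(\ell,\beta)} H$, so no delicate measurability or continuity issue arises at this stage.

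The only technical ingredients used are (i) the reformulation of the It\^o integral of a linear signature strategy as $\langle \ell \word 4, \widehat{\mathbb X}_{0,T}^{LL,<\infty}\rangle$ (Lemma~\ref{lemma:ito-stratonovich}), needed inside the proof of Theorem~\ref{th:sig hedging reduced}, and (ii) the shuffle product property (Lemma~\ref{lemma:shuffle product property}), which allows any polynomial of a linear functional on the signature to be rewritten as a single linear functional via $P^{\shuffle}$. Both are already invoked in the proof of Theorem~\ref{th:sig hedging reduced}, so no new rough-path machinery is required. The main conceptual point, and the only place where care is needed, is simply to recognise that the static-hedging term depends on the underlying only through $\widehat{\mathbb X}_{0,T}^{LL,<\infty}$ (it is evaluated at terminal time $T$) and therefore can be combined linearly with $f$ before invoking the shuffle reduction; if the basket were allowed to be dynamically rebalanced this absorption step would fail, and the corollary would require a genuinely different argument.
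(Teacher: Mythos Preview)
Your proposal is correct and follows exactly the approach implicit in the paper: the paper states this as an immediate corollary of Theorem~\ref{th:sig hedging reduced} without separate proof, and the only step needed is precisely the observation you make, namely that for fixed $(\beta_i)$ the static-basket term $\sum_i \beta_i \langle g_i, \widehat{\mathbb X}_{0,T}^{LL,<\infty}\rangle$ can be absorbed into an effective signature payoff $\tilde f_\beta = f - \sum_i \beta_i g_i$, after which Theorem~\ref{th:sig hedging reduced} applies verbatim and one takes the joint infimum over $(\ell,\beta)$.
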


\subsection{Adding transaction costs}\label{subsec:transaction costs}

It will turn out, unsurprisingly, that linear signature trading strategies with wild oscillations will incur infinite transaction costs. This could be avoided by considering linear signature trading strategies that have a well-defined \textit{speed of trading}:
\begin{align*}
\mathcal T_{speed}(\Lambda_T) :&= \left \{ \Lambda_T \ni \widehat{\mathbb X}|_{[0,t]}^{<\infty} \mapsto \int_0^t \langle v, \widehat{\mathbb X}_{0,u}^{<\infty}\rangle du \;|\; v\in T((\mathbb R^2)^\ast)\right \}\\
&=\left \{ \Lambda_T \ni \widehat{\mathbb X}|_{[0,t]}^{<\infty} \mapsto \langle v\word 1, \widehat{\mathbb X}_{0,t}^{<\infty}\rangle\;|\; v\in T((\mathbb R^2)^\ast)\right \}.
\end{align*}

The function $\langle v, \widehat{\mathbb X}_{0,t}^{<\infty}\rangle$ indicates the trading speed -- i.e. the amount of the underlying asset that will be bought or sold at each time $t$. For such a choice of trading speed, the trader's position on the asset at time $t$ will be $\int_0^t \langle v, \widehat{\mathbb X}_{0,u}^{<\infty}\rangle du$. Therefore, such trading strategies will be differentiable and hence they will not incur infinite transaction costs.

We will now introduce the following \textit{fixed} and \textit{proportional} quadratic costs for trading strategies in $\mathcal T_{speed}$.

\begin{definition}[Fixed quadratic transaction costs]\label{def:fixed quadratic transaction costs}
Consider a speed of trading $v\in T((\mathbb R^2)^\ast)$. We define the fixed quadratic costs with intensity $\alpha\geq 0$ incurred by $v$ along $\widehat{\mathbb X}^{LL, <\infty}\in \widehat \Omega_T $ as
$$C_\alpha^{\mathrm{fixed}}(v, \widehat{\mathbb X}^{LL, <\infty}) := \alpha \int_0^T |\langle v, \widehat{\mathbb X}_{0, u}^{<\infty}\rangle| ^ 2 du.$$
\end{definition}

\begin{definition}[Proportional quadratic transaction costs]
Consider a speed of trading $v\in T((\mathbb R^2)^\ast)$. The proportional quadratic costs with intensity $\alpha\geq 0$ incurred by $v$ along $\widehat{\mathbb X}^{LL,<\infty}\in \widehat \Omega_T$ is then defined as
$$C_\alpha^{\mathrm{prop}}(v, \widehat{\mathbb X}^{LL, <\infty}) := \alpha \int_0^T |\langle v, \widehat{\mathbb X}_{0,u}^{<\infty}\rangle X_u|^2du.$$
\end{definition}

Then, one can naturally modify \eqref{eq:sig version} to include \textit{fixed} quadratic costs,
\begin{equation}\label{eq:fixed transactions}
\inf_{v \in T((\mathbb R^2)^\ast)} \mathbb E  \left [ P\left ( \langle f, \widehat{\mathbb{X}}_{0, T}^{LL, <\infty} \rangle - p_0 - \int_0^T \int_0^t \langle v, \widehat{\mathbb{X}}_{0, u}^{<\infty}\rangle du\, dX_t + C_\alpha^{\mathrm{fixed}} (v, \widehat{\mathbb X}^{\infty})\right )\right ],
\end{equation}
or \textit{proportional} transaction costs,
\begin{equation}\label{eq:prop transactions}
\inf_{v \in T((\mathbb R^2)^\ast)} \mathbb E  \left [ P\left ( \langle f, \widehat{\mathbb{X}}_{0, T}^{LL, <\infty} \rangle - p_0 - \int_0^T \int_0^t \langle v, \widehat{\mathbb{X}}_{0, u}^{<\infty} \rangle du\, dX_t + C_\alpha^{\mathrm{prop}} (v, \widehat{\mathbb X}^{\infty})\right )\right ].
\end{equation}

We then have the following corollaries of Theorem \ref{th:sig hedging reduced}.

\begin{corollary}\label{cor:transactions}
The solution of the optimal hedging problem under fixed quadratic trading costs \eqref{eq:fixed transactions} is given by the solution of the following optimisation problem:
\begin{equation}\label{eq:optimal hedging transaction costs}
\inf_{v \in T((\mathbb R^2)^\ast)} \left \langle P^{\shuffle} (f - p_0\word \varnothing - v \word{14} + \alpha v^{\shuffle 2} \word 1), \mathbb E \left [ \widehat{\mathbb X}_{0,T}^{LL,<\infty}\right ]\right \rangle.
\end{equation}
Similarly, the solution of the optimal hedging under proportional transaction costs \eqref{eq:prop transactions} is given by
$$\inf_{v \in T((\mathbb R^2)^\ast)} \left \langle P^{\shuffle} (f - p_0\word \varnothing - v \word{14} + \alpha (v\shuffle (\word 2 + \word \varnothing))^{\shuffle 2} \word 1), \mathbb E \left [ \widehat{\mathbb X}_{0,T}^{LL,<\infty}\right ]\right \rangle.$$
\end{corollary}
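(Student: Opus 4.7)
The plan is to follow the same strategy that underlies Theorem~\ref{th:sig hedging reduced}: rewrite the entire argument of $P(\cdot)$ inside the expectation as a single pairing $\langle g, \widehat{\mathbb X}_{0,T}^{LL,<\infty}\rangle$, then use the shuffle product property (Lemma~\ref{lemma:shuffle product property}) iteratively together with Definition~\ref{def:polynomial shuffle} to push the polynomial through the pairing, obtaining $P\bigl(\langle g, \cdot\rangle\bigr)=\langle P^{\shuffle}(g), \cdot\rangle$, and finally take expectations. The only genuinely new ingredient compared to Theorem~\ref{th:sig hedging reduced} is that here the trading strategy is parametrised through its \emph{speed} $v$ rather than its position, and that the extra cost functional must itself be turned into a linear pairing.

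For the fixed-cost problem \eqref{eq:fixed transactions} I would treat four pieces separately. The signature payoff contributes $\langle f, \widehat{\mathbb X}_{0,T}^{LL,<\infty}\rangle$ directly, and the initial cash is $p_0 = \langle p_0 \word\varnothing, \widehat{\mathbb X}_{0,T}^{LL,<\infty}\rangle$. For the dynamic P\&L, the trader's position at time $t$ is $\theta_t=\int_0^t \langle v,\widehat{\mathbb X}_{0,u}^{<\infty}\rangle du$; since the first coordinate of $\widehat X$ is time, $du=d\widehat X_u^1$, and the standard integration-by-letter identity (immediate from Definition~\ref{def:sig}, see Example~\ref{ex:semimartingales}\,item\,5) gives $\theta_t=\langle v\word 1,\widehat{\mathbb X}_{0,t}^{<\infty}\rangle$. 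Lemma~\ref{lemma:ito-stratonovich} then yields $\int_0^T\theta_t\,dX_t=\langle v\word{14},\widehat{\mathbb X}_{0,T}^{LL,<\infty}\rangle$. For the cost, Lemma~\ref{lemma:shuffle product property} turns the square of the position-functional into $\langle v,\widehat{\mathbb X}_{0,u}^{<\infty}\rangle^2=\langle v^{\shuffle 2},\widehat{\mathbb X}_{0,u}^{<\infty}\rangle$, and integrating in $u$ concatenates by $\word 1$, giving $\alpha\langle v^{\shuffle 2}\word 1,\widehat{\mathbb X}_{0,T}^{<\infty}\rangle$. Collecting the four terms, the argument of $P$ equals $\langle g,\widehat{\mathbb X}_{0,T}^{LL,<\infty}\rangle$ with $g=f-p_0\word\varnothing-v\word{14}+\alpha v^{\shuffle 2}\word 1$, and taking $\mathbb E[\cdot]$ after applying $P^{\shuffle}$ produces \eqref{eq:optimal hedging transaction costs}.

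The proportional case follows exactly the same recipe, with one extra algebraic step to express $X_u$ itself as a signature functional: since $X_0=1$, one has $X_u=1+\int_0^u dX_s=\langle \word\varnothing+\word 2,\widehat{\mathbb X}_{0,u}^{<\infty}\rangle$. Then Lemma~\ref{lemma:shuffle product property} gives $\langle v,\widehat{\mathbb X}_{0,u}^{<\infty}\rangle X_u=\langle v\shuffle(\word\varnothing+\word 2),\widehat{\mathbb X}_{0,u}^{<\infty}\rangle$; squaring yields $\langle (v\shuffle(\word\varnothing+\word 2))^{\shuffle 2},\widehat{\mathbb X}_{0,u}^{<\infty}\rangle$; and integrating in $u$ concatenates by $\word 1$, producing precisely the cost expression stated. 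The rest of the argument is identical to the fixed-cost case.

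The main obstacle I anticipate is not conceptual but notational. One must be careful that the functional $v\in T((\mathbb R^2)^\ast)$ is consistently embedded into $T((\mathbb R^4)^\ast)$ via the inclusion $\{\word 1,\word 2\}\subset\{\word 1,\word 2,\word 3,\word 4\}$ of the lag-letters, and that the first two coordinates of the lead-lag path coincide at the path level with $\widehat X$, so that pairings of $v$ with $\widehat{\mathbb X}_{0,u}^{<\infty}$ and with $\widehat{\mathbb X}_{0,u}^{LL,<\infty}$ agree. Once this identification is fixed, shuffles and concatenations commute with it and every rewriting above is legal; the corollary then falls out as a bookkeeping application of the proof technique of Theorem~\ref{th:sig hedging reduced}.
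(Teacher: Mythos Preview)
Your proposal is correct and follows exactly the same approach as the paper's proof, which simply invokes Theorem~\ref{th:sig hedging reduced} together with the identities $\int_0^T |\langle v,\widehat{\mathbb X}_{0,u}^{<\infty}\rangle|^2\,du=\langle v^{\shuffle 2}\word 1,\widehat{\mathbb X}_{0,T}^{<\infty}\rangle$ and $\int_0^T |\langle v,\widehat{\mathbb X}_{0,u}^{<\infty}\rangle X_u|^2\,du=\langle (v\shuffle(\word 2+\word\varnothing))^{\shuffle 2}\word 1,\widehat{\mathbb X}_{0,T}^{<\infty}\rangle$. Your write-up is in fact more explicit than the paper's in deriving the $v\word{14}$ term and in flagging the $T((\mathbb R^2)^\ast)\hookrightarrow T((\mathbb R^4)^\ast)$ identification, but the underlying argument is identical.
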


\subsection{Liquidity constraints}\label{subsec:liquidity}

So far, we have implicitly assumed that the market for the underlying asset is perfectly liquid: we can, at any given time, take a long or short position of any size. However, this assumption may not be realistic for some assets, so that one has to impose certain liquidity constraints. We will do so by imposing a certain boundedness condition on the speed of trading that was introduced in Section \ref{subsec:transaction costs}. More specifically, we will consider all trading speeds $v\in T((\mathbb R^2)^\ast)$ such that $\lVert v \rVert \leq M$, for some \textit{illiquidity constant} $M\geq 0$. This parameter could be estimated from historical data of the asset, for example.

Under liquidity constraints, the unconstrained optimal hedging problem \eqref{eq:sig version} is transformed to the following:

\begin{definition}[Optimal hedging problem with liquidity constraints]
Given an illiquidity constant $M\geq 0$, the following problem is defined as the optimal hedging strategy with liquidity constraint $M$:
\begin{equation}\label{eq:liquidity constraint}
\inf_{\substack{v \in T((\mathbb R^2)^\ast) \\ \lVert v \rVert \leq M}} \mathbb E  \left [ P\left ( \langle f, \widehat{\mathbb{X}}_{0, T}^{LL,<\infty} \rangle - p_0 - \int_0^T\int_0^t \langle v, \widehat{\mathbb{X}}_{0, u}^{<\infty}\rangle du \,dX_t\right )\right ].
\end{equation}
\end{definition}

The solution of the optimal hedging problem with liquidity constraints is then a constrained optimisation problem:

\begin{corollary}
Let $M\geq 0$ be an illiquidity constant. The solution of the optimal hedging problem with liquidity constraint $M$ is given by
$$\inf_{\substack{v \in T((\mathbb R^2)^\ast) \\ \lVert v \rVert \leq M}} \left \langle P^{\shuffle} (f - p_0\word \varnothing - v\word{14}), \mathbb E \left [ \widehat{\mathbb X}_{0,T}^{LL,<\infty}\right ]\right \rangle.$$
\end{corollary}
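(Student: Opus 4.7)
The plan is to mirror the derivation of Theorem \ref{th:sig hedging reduced} (see also the transaction-cost variant in Corollary \ref{cor:transactions}), since the only change introduced by the liquidity constraint is that the infimum now ranges over the ball $\{v\in T((\mathbb R^2)^\ast):\|v\|\le M\}$ instead of all of $T((\mathbb R^2)^\ast)$. Everything in the reduction to a linear functional paired with the expected signature is pathwise in nature, so it commutes with imposing this additional constraint on $v$.

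First I would rewrite the P\&L pathwise as a single linear functional on the signature of the lead-lag path. Since the first coordinate of the augmented path $\widehat X_t=(t,X_t)$ is time, the standard identity that integration against the $i$-th coordinate equals right-concatenation by the letter $\word i$ gives $\int_0^t\langle v,\widehat{\mathbb X}_{0,u}^{<\infty}\rangle\,du = \langle v\word 1,\widehat{\mathbb X}_{0,t}^{<\infty}\rangle$. Substituting this into the outer It\^o integral and applying Lemma \ref{lemma:ito-stratonovich} yields
$$\int_0^T\!\!\int_0^t\langle v,\widehat{\mathbb X}_{0,u}^{<\infty}\rangle\,du\,dX_t=\langle v\word{14},\widehat{\mathbb X}_{0,T}^{LL,<\infty}\rangle.$$
Together with $p_0=\langle p_0\word\varnothing,\widehat{\mathbb X}_{0,T}^{LL,<\infty}\rangle$, the hedging error inside $P$ collapses to the single pairing $\langle f-p_0\word\varnothing-v\word{14},\widehat{\mathbb X}_{0,T}^{LL,<\infty}\rangle$. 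The shuffle product property (Lemma \ref{lemma:shuffle product property}), applied monomial by monomial to $P$, then converts the action of $P$ on this scalar into $\langle P^{\shuffle}(f-p_0\word\varnothing-v\word{14}),\widehat{\mathbb X}_{0,T}^{LL,<\infty}\rangle$, and linearity of expectation pulls the linear functional out of $\mathbb E$, producing exactly the objective appearing in the corollary.

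Since every admissible $v$ sits in some truncated algebra $T^{(K)}((\mathbb R^2)^\ast)$, the shuffle polynomial $P^{\shuffle}(f-p_0\word\varnothing-v\word{14})$ is supported at a bounded signature level, so its pairing with $\widehat{\mathbb X}_{0,T}^{LL,<\infty}$ is integrable under the standing assumption that the truncated expected signature is finite at every level; this justifies the exchange of $\mathbb E$ with the pairing. The key observation at the end is that the constraint $\|v\|\le M$ is never touched by the pathwise reduction, so it transfers verbatim from \eqref{eq:liquidity constraint} to the final optimisation and the two infima coincide. The main obstacle, such as it is, lies in the two algebraic identities (time-integration as concatenation by $\word 1$, and shuffle promotion of $P$); both are routine and already invoked in the proofs of Theorem \ref{th:sig hedging reduced} and Corollary \ref{cor:transactions}, so no genuinely new ingredient is required.
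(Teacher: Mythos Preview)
Your proposal is correct and follows essentially the same approach as the paper. The paper in fact does not supply an explicit proof for this corollary at all, treating it as an immediate consequence of Theorem~\ref{th:sig hedging reduced} (just as Corollary~\ref{cor:transactions} is proved in one line by citing that theorem together with the relevant concatenation identity); your argument spells out precisely those steps---time-integration as right-concatenation by $\word 1$, Lemma~\ref{lemma:ito-stratonovich} for the It\^o integral, the shuffle promotion of $P$, and linearity of expectation---and correctly observes that the ball constraint $\lVert v\rVert\le M$ is untouched by this pathwise reduction.
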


\subsection{Delayed hedging}\label{subsec:delayed hedging}

Suppose a trader wishes to hedge a certain derivative given by the payoff $F:\widehat \Omega_T^{LL}\to \mathbb R$, whose lifespan is $[0,T]$. However, the trader is at time $t>0$, so that she cannot follow the hedging strategy provided by Theorem \ref{th:sig hedging reduced}. What is the optimal strategy that the trader can carry?

Again, by Section \ref{subsec:general optimal hedging} we will assume that $F$ is a $f$-signature payoff for $f\in T((\mathbb R^4)^\ast)$. The objective is then to solve the following optimal hedging problem:

\begin{equation}\label{eq:delayed hedging}
\inf_{\ell \in T((\mathbb R^2)^\ast)} \mathbb E  \left [ P\left ( \langle f, \widehat{\mathbb{X}}_{0, T}^{<\infty} \rangle - p_0 - \int_0^T \langle \ell, \widehat{\mathbb{X}}_{0, t}^{<\infty} \rangle dX_t\right )\,\Big | \,\mathcal F_t\right ]
\end{equation}
where $p_t$ is $\mathcal F_t$-measurable and represents the cash held at time $t$. Notice that

\begin{align*}
\int_t^T \langle \ell, \widehat{\mathbb X}_{0,u}^{<\infty} \rangle dX_u &=\int_0^T \langle \ell, \widehat{\mathbb X}_{0,u}^{<\infty} \rangle dX_u - \int_0^t \langle \ell, \widehat{\mathbb X}_{0,u}^{<\infty} \rangle dX_u \\
&
= \langle \ell \word 4, \widehat{\mathbb X}_{0,T}^{LL,<\infty}\rangle - \langle \ell \word 4, \widehat{\mathbb X}_{0,t}^{LL,<\infty}\rangle.
\end{align*}
We then have:

\begin{corollary}
The solution of the optimal hedging problem \eqref{eq:delayed hedging} where the trader starts hedging at time $t>0$ is given by
\begin{equation}\label{eq:delayed hedging solution}
\inf_{\ell \in \mathcal{H}}  \left \langle P^{\shuffle}\left (f - \left (p_t + \langle \ell\word 4, \widehat{\mathbb{X}}_{0, t}^{\mathrm{LL}, <\infty} \rangle \right )\word\varnothing - \ell \word 4 \right ), \widehat{\mathbb{X}}_{0, t}^{\mathrm{LL}, <\infty} \otimes  \mathbb{E}^\mathbb{P} \left [\widehat{\mathbb{X}}_{t, T}^{\mathrm{LL}, <\infty} \big | \; \mathcal{F}_t \right ] \right \rangle.
\end{equation}
\end{corollary}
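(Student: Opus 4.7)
The plan is to mimic the proof of Theorem \ref{th:sig hedging reduced}, with the extra ingredient of Chen's identity to handle the conditioning on $\mathcal{F}_t$. First, I would rewrite the P\&L inside $P$ as a single linear functional on the signature at terminal time. Using Lemma \ref{lemma:ito-stratonovich} and the additivity of the integral,
\begin{equation*}
\int_t^T \langle \ell, \widehat{\mathbb{X}}_{0,u}^{<\infty}\rangle\,dX_u
= \langle \ell\word 4, \widehat{\mathbb{X}}_{0,T}^{LL,<\infty}\rangle - \langle \ell\word 4, \widehat{\mathbb{X}}_{0,t}^{LL,<\infty}\rangle,
\end{equation*}
as displayed just above the statement. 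Setting $c_\ell := p_t + \langle \ell\word 4,\widehat{\mathbb{X}}_{0,t}^{LL,<\infty}\rangle$, which is $\mathcal{F}_t$-measurable, the residual P\&L can be written as
\begin{equation*}
\langle f,\widehat{\mathbb{X}}_{0,T}^{LL,<\infty}\rangle - p_t - \int_t^T \langle \ell,\widehat{\mathbb{X}}_{0,u}^{<\infty}\rangle\,dX_u
= \langle f - c_\ell\word\varnothing - \ell\word 4,\; \widehat{\mathbb{X}}_{0,T}^{LL,<\infty}\rangle .
\end{equation*}

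Next, I would apply the shuffle product property (Lemma \ref{lemma:shuffle product property}), which says that any polynomial in a linear functional against the signature is itself a linear functional against the signature via the shuffle map $P^\shuffle$ (Definition \ref{def:polynomial shuffle}). Hence
\begin{equation*}
P\!\left(\langle f - c_\ell\word\varnothing - \ell\word 4,\; \widehat{\mathbb{X}}_{0,T}^{LL,<\infty}\rangle\right)
= \left\langle P^{\shuffle}\!\left(f - c_\ell\word\varnothing - \ell\word 4\right),\; \widehat{\mathbb{X}}_{0,T}^{LL,<\infty}\right\rangle.
\end{equation*}

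Now I would invoke Chen's identity for signatures of geometric rough paths, which gives the multiplicative decomposition
\begin{equation*}
\widehat{\mathbb{X}}_{0,T}^{LL,<\infty} = \widehat{\mathbb{X}}_{0,t}^{LL,<\infty}\otimes \widehat{\mathbb{X}}_{t,T}^{LL,<\infty}.
\end{equation*}
Since the prefactor $P^{\shuffle}(f - c_\ell\word\varnothing - \ell\word 4)$ and the early piece $\widehat{\mathbb{X}}_{0,t}^{LL,<\infty}$ are both $\mathcal{F}_t$-measurable, taking conditional expectations passes the $\mathcal{F}_t$-measurable factors through the pairing and yields
\begin{equation*}
\mathbb{E}\!\left[P\!\left(\cdot\right)\,\big|\,\mathcal{F}_t\right]
= \left\langle P^{\shuffle}\!\left(f - c_\ell\word\varnothing - \ell\word 4\right),\; \widehat{\mathbb{X}}_{0,t}^{LL,<\infty}\otimes \mathbb{E}^{\mathbb{P}}\!\left[\widehat{\mathbb{X}}_{t,T}^{LL,<\infty}\,\big|\,\mathcal{F}_t\right]\right\rangle,
\end{equation*}
which is exactly \eqref{eq:delayed hedging solution} after taking the infimum over $\ell$ (equivalently over the admissible class $\mathcal{H}$).

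The only genuinely delicate step is the interchange of the conditional expectation with the infinite pairing: one has to justify that the $\mathcal{F}_t$-measurable, $\ell$-dependent coefficient $P^{\shuffle}(f - c_\ell\word\varnothing - \ell\word 4)$ can be pulled outside the conditional expectation componentwise and that the resulting series against $\mathbb{E}[\widehat{\mathbb{X}}_{t,T}^{LL,<\infty}\mid\mathcal{F}_t]$ is well-defined. This is handled exactly as in Theorem \ref{th:sig hedging reduced}, using that $\ell$ and $f$ lie in the algebraic dual $T((\mathbb{R}^d)^\ast)$ (so only finitely many tensor levels contribute) together with the finiteness assumption on $\mathbb{E}[\widehat{\mathbb{X}}^{LL,\leq N}]$ imposed on the probability measure in the framework; after that the result follows from Chen's identity and the shuffle property as above. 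Everything else is a direct rewriting.
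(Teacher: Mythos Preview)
Your proposal is correct and follows exactly the approach the paper intends: the paper does not give an explicit proof of this corollary but presents it as an immediate consequence of the displayed integral decomposition together with the argument of Theorem~\ref{th:sig hedging reduced}, and the only additional ingredient needed is Chen's identity $\widehat{\mathbb{X}}_{0,T}^{LL,<\infty}=\widehat{\mathbb{X}}_{0,t}^{LL,<\infty}\otimes\widehat{\mathbb{X}}_{t,T}^{LL,<\infty}$ to factor out the $\mathcal F_t$-measurable part before taking the conditional expectation. Your write-up supplies precisely these steps in the expected order.
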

In other words, all the trader has to do is compute the signature up to time $t$ of the lead-lag process of the augmented price path, as well as the expected signature for the remaining interval $[t, T]$, and then solve the optimisation problem \eqref{eq:delayed hedging solution}.

\section{Pricing and hedging from market data: numerical experiment}\label{sec:market data}

In Section \ref{sec:optimal hedging} we showed that to solve the optimal hedging \eqref{eq:optimal hedging} it is sufficient to consider the linearised version of the problem \eqref{eq:sig version}. In Theorem \ref{th:sig hedging reduced}, the problem was then reduced to \eqref{eq:reduced}.

To solve \eqref{eq:reduced}, the only information that is needed about the underlying process is its expected signature. Therefore, an interesting question one could ask is whether it is possible to somehow estimate the expected signature from the market. In this section, we will study how we could use market data to estimate the expected signature of the market. More specifically, we will show that we can infer the expected signature that matches market prices of exotic payoffs -- namely, the \textit{implied expected signature}. Because we will be using prices of derivatives, we will be working under a risk-neutral measure instead of the objective probability measure. The implied expected signature will then be used in Section \ref{subsec:pricing IES} and Section \ref{subsec:hedging IES} to price and hedge payoffs, respectively.

\subsection{The implied expected signature}\label{subsec:implied ES}

The volatility of the underlying asset is a relevant quantity when describing risk-neutral measures. In some cases, volatility is all one needs to know to price certain options.

However, volatility on its own only captures some aspects of risk-neutral measures on path space, and it does not characterise them. One needs a much richer object if one attempts to determine risk-neutral measures. It turns out that, under certain conditions, the expected signature fully characterises probability measures on path space (\cite{ilya}). Therefore, one only needs to know the expected signature of a measure to completely describe it.

Let $F:\widehat{\Omega}_T^{LL}\to \mathbb R$ be a payoff whose price is observable in the market. Following \cite{signature_pricing} and by Proposition \ref{prop:density sig payoffs}, we approximate the price of a payoff $F:\widehat{\Omega}_T^{LL}\rightarrow \mathbb R$, which is given by $Z_T \mathbb E^\mathbb{Q} [F(\widehat{\mathbb X}^{LL, <\infty})]$ with $Z_T$ the discount factor for the interval $[0,T]$ and $\mathbb Q$ a risk-neutral measure, by a linear signature payoff (Definition \ref{def:signature payoff}):

$$Z_T \mathbb{E}^\mathbb{Q}[F(\widehat{\mathbb X}^{LL, <\infty})] \approx Z_T \left \langle \ell, \mathbb{E}^{\mathbb{Q}} \left [\widehat{\mathbb X}^{LL, <\infty}_{0,T}\right ]\right \rangle\quad\mbox{with }\ell \in T((\mathbb{R}^{4})^\ast).$$

Implied volatility is defined as the volatility of the underlying asset that makes the model prices of certain vanilla options match the prices observed in the market. Similarly, we may define the \textit{implied expected signature} as the expected signature that matches observed prices of exotic derivatives. The implied expected signature not only captures the implied volatility, but it also captures other aspects of the risk-neutral measure. Notice, moreover, that knowing the implied expected signature is equivalent to knowing the prices of all linear signature payoffs.

If one has access to market prices of a sufficiently varied range of payoffs, one can leverage this information to infer the implied expected signature. Indeed, assume that one has access to a family of pairs $\{(F_i, p_i)\}_i$ of payoffs $F_i$ with market prices $p_i$. We may replace each payoff $F_i$ by an approximating linear (truncated) signature payoff of order $N\in \mathbb N$ given by the functional $\ell_i \in T^N((\mathbb{R}^4)^\ast)$. Then, one has that

$$Z_T\left \langle \ell_i, \mathbb{E}^{\mathbb{Q}} \left [\widehat{\mathbb X}^{LL, \leq N}_{0,T}\right ]\right \rangle \approx p_i \quad \mbox{for each }i.$$

One may then apply linear regression to estimate the (discounted) implied expected signature $Z_T\mathbb{E}^{\mathbb{Q}} \left [\widehat{\mathbb X}^{LL, \leq N}_{0,T}\right ]$.

\subsection{Pricing with the implied expected signature}\label{subsec:pricing IES}

In order to apply the procedure described in the last subsection, one would need to be able to observe market prices of a rich-enough class of payoffs. Although certain vanilla options are exchange-traded, most exotic derivatives are not. Therefore, obtaining market prices of sufficient exotic derivatives to induce the implied expected signature may be a challenge.

However, there are multiple data providers that offer consensus market prices of a range of OTC (Over-The-Counter) exotic derivatives. These prices reflect the consensus prices from market participants, and they can be seen as market prices for these exotic derivatives. The authors have followed the procedure from Section \ref{subsec:implied ES} on these consensus market prices, but it was made clear by the data provider that ``(publishing the results) is not a permitted use case for the (...) data". Therefore, to illustrate the feasibility of the methodology presented in this section, we performed a numerical experiment where we assumed that we are exogeneously given (synthetic) prices of a number of derivatives. It is noteworthy to mention that results on the consensus market prices were similar to the results obtained on synthetic data.

In this experiment these prices come from a Heston model, which is completely unknown to the trader. In other words, although we have produced the payoff prices from a specific model, the agent or trader is completely ignorant of it and the only information she can leverage is the knowledge about the prices of a range of derivatives. This resembles the real-life situation where the trader is able to observe market prices for a variety of derivatives, but is ignorant of the market dynamics.

From the Heston model we simulated prices for 150 payoffs with maturity 1 year using an interest rate of $2\%$. The payoff types we considered were European options, barrier options and variance swaps (50 payoffs of each type were considered). We divided the set of 150 derivatives into a training set of 75 derivatives and a testing set of 75 derivatives. The size of the dataset, as well as the payoff types, were selected to make the dataset similar to the one offered by market consensus providers.

\begin{figure}
\centering
\includegraphics[width=0.75\linewidth]{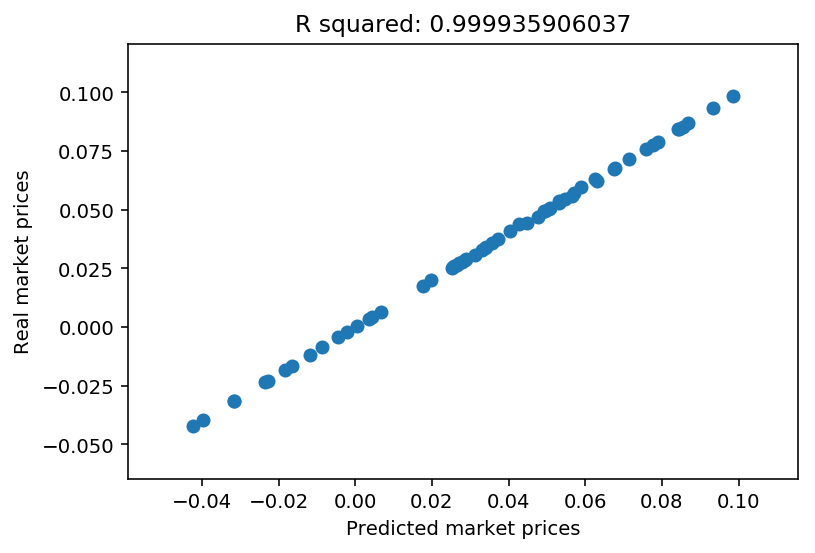}
\caption{Predicted market prices using the implied expected signature, and the real market prices. The predictions are very accurate, with an $R^2$ of 0.99994.}
\label{fig:modelfree}
\end{figure}

We used the training set to infer the discounted implied expected signature, following the procedure proposed in Section \ref{subsec:implied ES}. The order of the truncated signature that was considered was $N=5$. We then used the computed discounted implied expected signature to predict the market prices of the derivatives in the testing set. These predicted prices were then compared to the real market prices. The results are shown in Figure \ref{fig:modelfree}. The predictions of the market prices seem to be quite accurate, with an $R^2$ of 0.99994.

Notice that, for any risk-neutral measure $\mathbb Q$, we have $\left \langle \word \varnothing, Z_T \mathbb{E}^\mathbb Q\left [\widehat{\mathbb X}^{LL, \leq N}_{0,T}\right ] \right \rangle = Z_T$. Therefore, we can estimate the discount factor $Z_T$ from the discounted implied signature. From our dataset, we obtained the estimate $Z_T \approx 0.9802966$. This leads an estimation of the short rate of $-\log Z_T \approx 1.9900\%$, very close to the real short rate of $2\%$ that was used.

\subsection{Hedging with the implied expected signature}\label{subsec:hedging IES}

Once we obtained the implied expected signature and validating its accuracy at obtaining market prices for exotic payoffs, we proceeded to apply Algorithm \ref{algo:optimal hedge} with this implied expected signature on different payoffs. In all cases, we considered the mean-variance hedging problem by taking the polynomial $P(x):= x^2$ in Algorithm \ref{algo:optimal hedge}, and the initial capital $p_0$ was set to market price of each derivative. The signature order was set to 5, as in Section \ref{subsec:pricing IES}. Daily rebalancing was used.

Given that the Heston model is incomplete and we are only allowing daily rebalancing, we know that perfect hedging is not possible in general. Figure \ref{fig:holdings totem} shows the P\&L of the hedged portfolio corresponding to various payoffs.

\begin{figure*}
\begin{subfigure}{\linewidth}
\centering
\begin{subfigure}{.5\textwidth}
  \centering
\includegraphics[width=\linewidth]{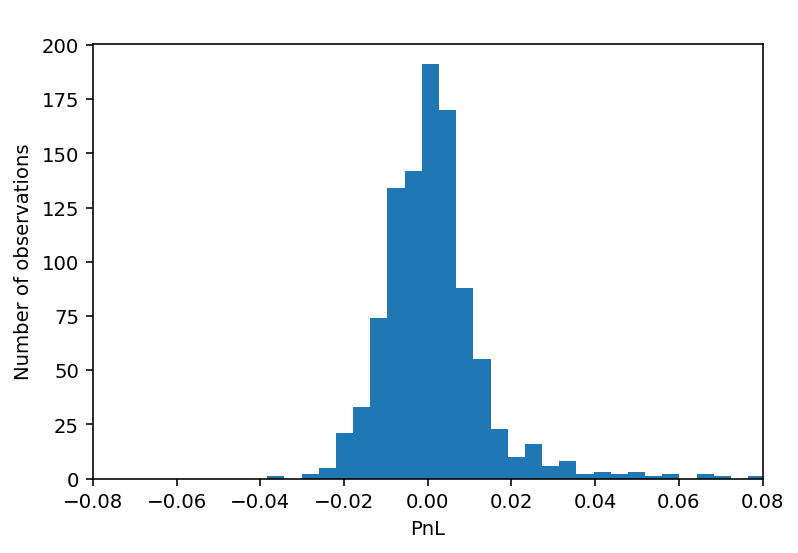}
  \captionof{figure}{Vanilla option}
\end{subfigure}%
\begin{subfigure}{.5\textwidth}
  \centering
\includegraphics[width=\linewidth]{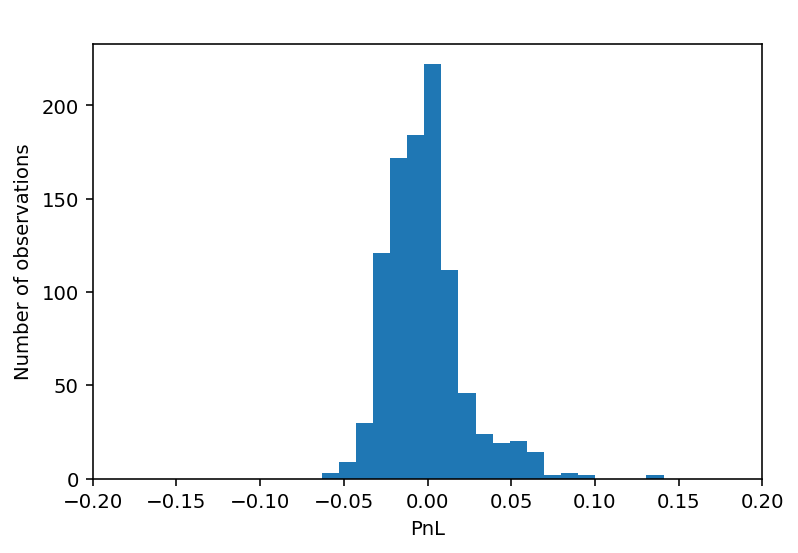}
  \captionof{figure}{Barrier option}
\end{subfigure}
\end{subfigure}

\begin{subfigure}{\linewidth}
\centering
\begin{subfigure}{.5\textwidth}
  \centering
\includegraphics[width=\linewidth]{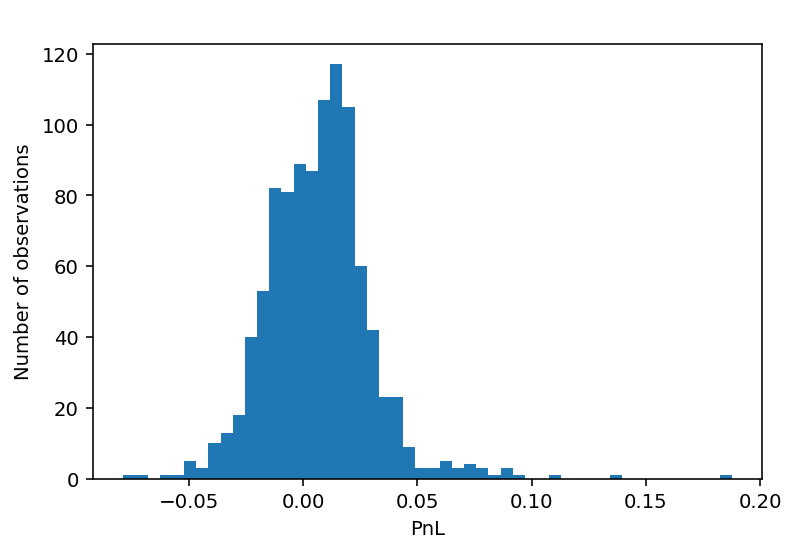}
  \captionof{figure}{Asian option}
\end{subfigure}%
\begin{subfigure}{.5\textwidth}
  \centering
\includegraphics[width=\linewidth]{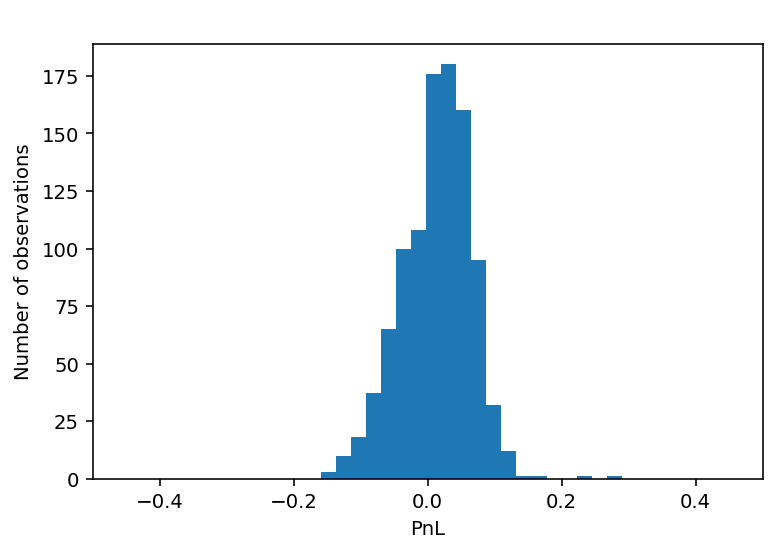}
  \captionof{figure}{Variance swap}
\end{subfigure}%
\end{subfigure}%

  \caption{P\&L of the hedged portfolio of various payoffs, obtained using the implied expected signature.}
  \label{fig:holdings totem}
\end{figure*}

\section{Experiments on synthetic data}\label{sec:numerical experiments}

Our methodology is intrinsically model-free, in the sense that we do not assume any particular model for the market dynamics, other than the price path follows a continuous semimartingale. The only information that is needed is the expected signature, which as shown in Section \ref{sec:market data} it can be estimated from market prices of exotic derivatives. However, in certain settings one does want to impose a model on the price path. For example, a bank may want to learn how to hedge an exotic payoff when the market dynamics are given by one of the internal models of the bank. As it was discussed in Section \ref{sec:solving}, if a particular model is used for the market dynamics one is then able to estimate the expected signature, and the methodology proposed in this paper can therefore be applied.

In this section we implement the proposed approach in a wide range of examples to show the effectiveness of the methodology on different market models. We will begin by considering in Section \ref{subsec:experiment simple case} a toy example with a simple payoff in a complete market, in order to compare the signature hedging strategy with the (known) replicating strategy. Then, in Section \ref{subsec:heston} we will implement our methodology on path-dependent payoffs in an incomplete market. In Section \ref{subsec:exponential experiment} we will consider the exponential hedging problem and finally in Section \ref{subsec:transaction experiment} we will study the hedging problem under transaction costs.

\subsection{Toy example}\label{subsec:experiment simple case}

First, we considered the simple case where we assume that $X$ follows a Black--Scholes model and we want to hedge the derivative with payoff $F(\widehat{\mathbb X}^{LL, <\infty})=X_T^2$ at terminal time $T$. This payoff, under this model, is attainable and we should therefore be able to perfectly hedge it. Given that we can explicitly find what the replicating strategy should be -- it is the delta hedge -- this example will be useful to determine whether the optimal signature strategy matches the replicating strategy.

We implemented Algorithm \ref{algo:optimal hedge} for the polynomial $P(x):=x^2$, so that we are considering the mean-variance hedging problem. The initial capital $p_0$ was taken to be the risk-neutral price for the payoff, and the signature order was set to 8. We fixed the maturity to 1 year $(T=1)$, and we assumed daily rebalancing.

Figure \ref{fig:realization square hedging} shows the optimal hedging strategy (i.e. the replicating strategy) and the signature hedging strategy provided by Algorithm \ref{algo:optimal hedge}. As we see, both strategies match very well. When we consider the P\&L of the replicating and signature strategies (see Figure \ref{fig:PnL square}) we observe that they have a very similar performance.

\begin{figure*}[h]
\begin{minipage}{.5\textwidth}
  \centering
\includegraphics[width=\linewidth]{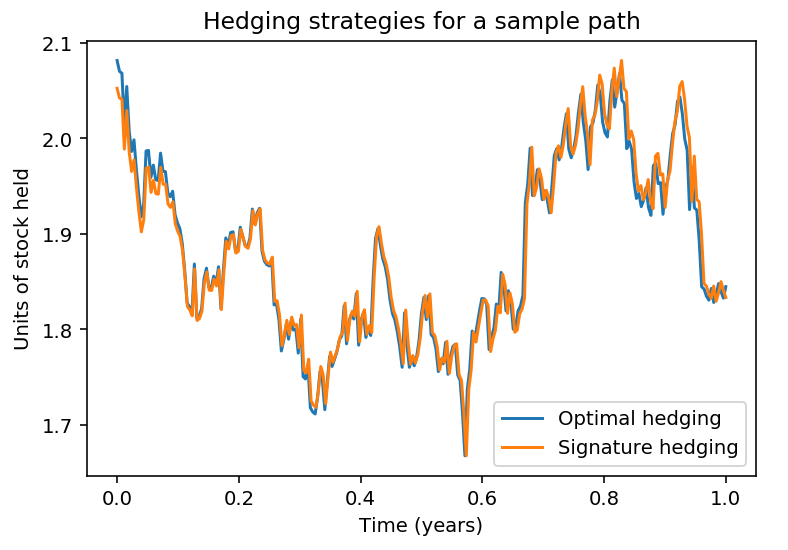}
\end{minipage}%
\begin{minipage}{.5\textwidth}
  \centering
\includegraphics[width=\linewidth]{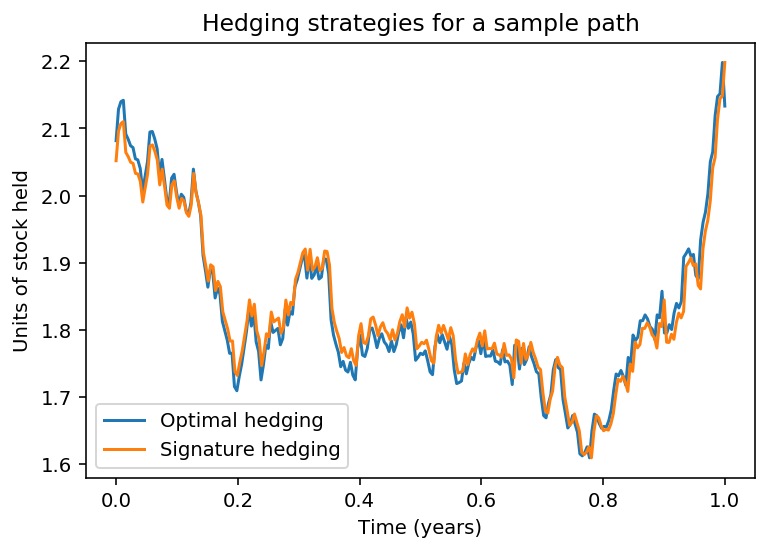}
\end{minipage}%
  \caption{Signature hedging and the optimal hedging on two realizations of the Black--Scholes model.}
  \label{fig:realization square hedging}
\end{figure*}

\begin{figure}
\centering
\includegraphics[width=0.75\linewidth]{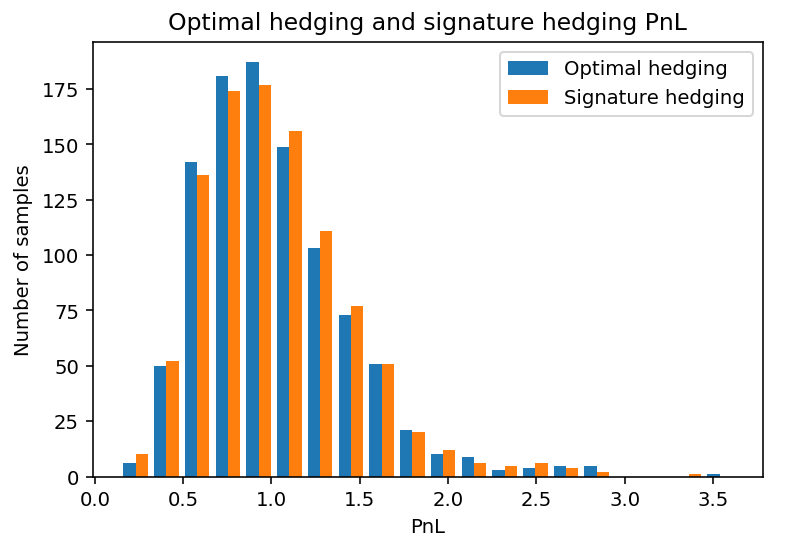}
\caption{P\&L of the replicating and signature hedging strategies.}
\label{fig:PnL square}
\end{figure}

\subsection{Path-dependent payoffs on the Heston model}\label{subsec:heston}

We now consider path-dependent payoffs on the Heston model, which is incomplete. The payoffs we took were Asian options, barrier call options, lookback options and variance swaps. As in the previous section, we considered the mean-variance hedging problem with maturity 1 year, and we set $p_0$ to be the risk-neutral price for each payoff. Again, the signature order we considered was 8.

Figure \ref{fig:PnL Heston} shows the P\&L of the hedged portfolio at maturity with daily rebalancing. Ideally, the payoffs would be perfectly hedged so that the P\&L of the hedged portfolio would be identically zero. However, given that the Heston model is incomplete and we are considering daily rebalancing, this is not possible in general.

\begin{figure*}[h]
\begin{subfigure}{\linewidth}
\centering
\begin{subfigure}{.5\textwidth}
  \centering
\includegraphics[width=\linewidth]{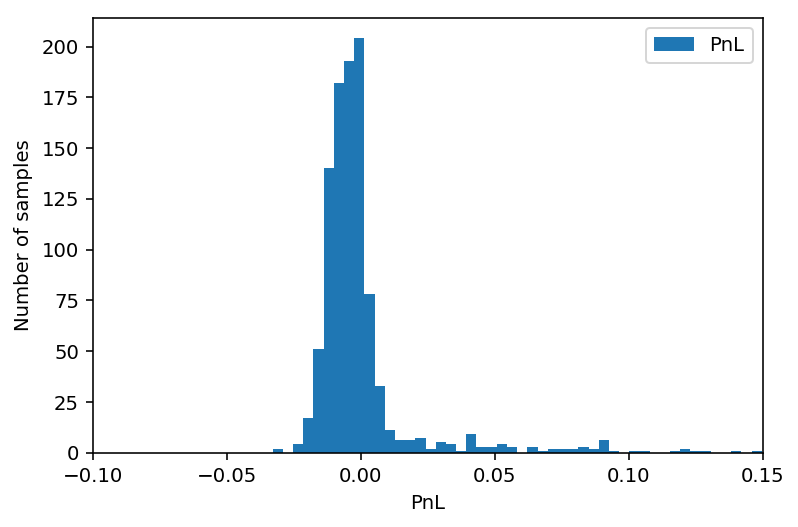}
  \captionof{figure}{Asian option}
  \label{fig:PnL Heston Asian}
\end{subfigure}%
\begin{subfigure}{.5\textwidth}
  \centering
\includegraphics[width=\linewidth]{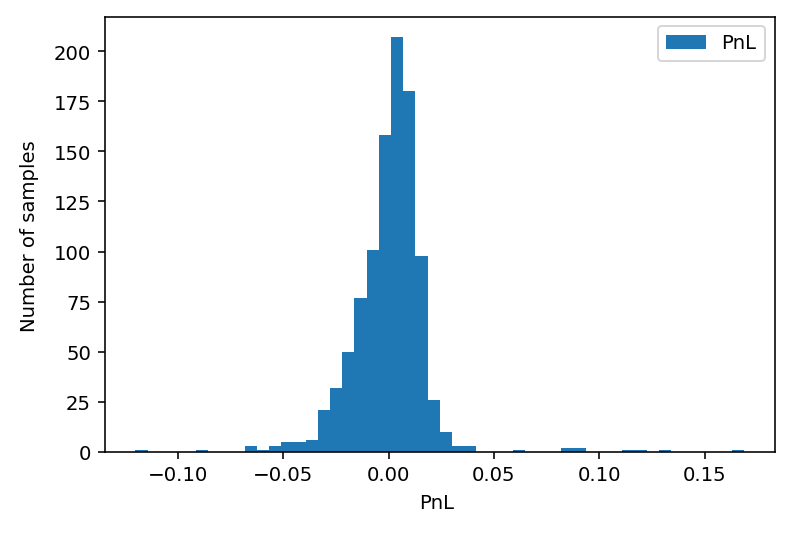}
  \captionof{figure}{Barrier option}
\end{subfigure}
\end{subfigure}

\begin{subfigure}{\linewidth}
\centering
\begin{subfigure}{.5\textwidth}
  \centering
\includegraphics[width=\linewidth]{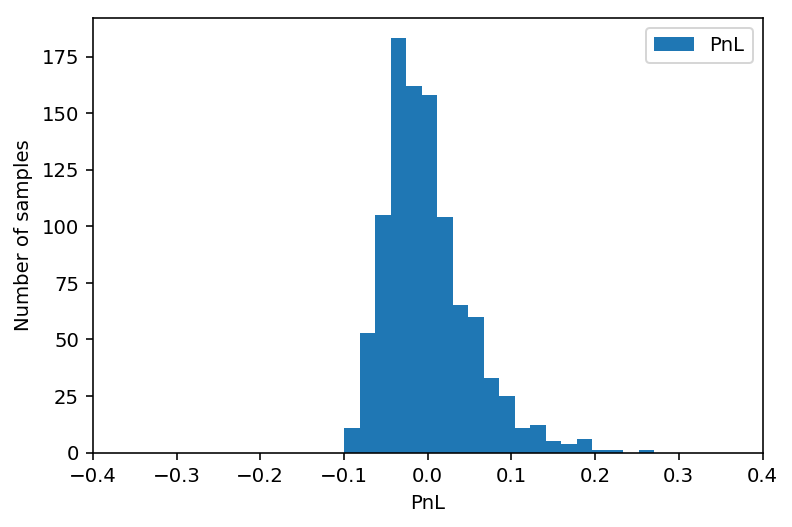}
  \captionof{figure}{Lookback}
\end{subfigure}%
\begin{subfigure}{.5\textwidth}
  \centering
\includegraphics[width=\linewidth]{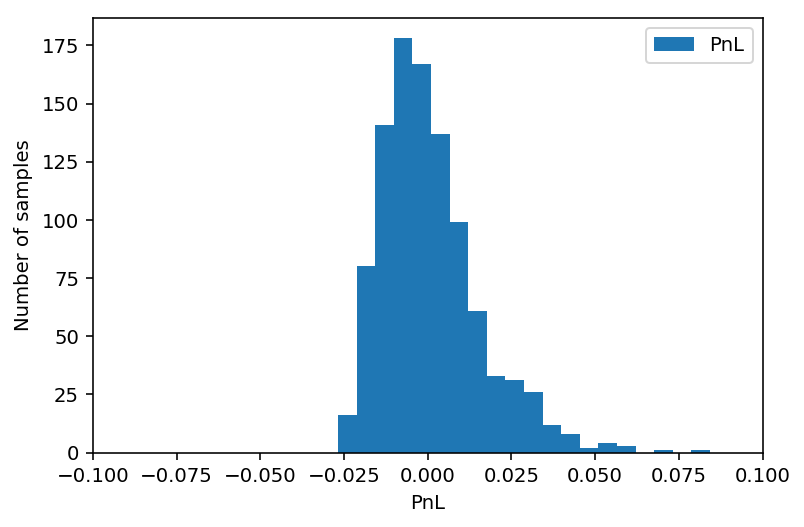}
  \captionof{figure}{Variance swap}
\end{subfigure}%
\end{subfigure}%

  \caption{P\&L of the hedged portfolio under the Heston model.}
  \label{fig:PnL Heston}
\end{figure*}

\subsection{Exponential hedging}\label{subsec:exponential experiment}

If we change the risk preferences of the trader in order to penalise losses but not profits, we may consider the exponential hedging problem rather than the mean-variance hedging problem. Following Section \ref{subsec:exponential hedging}, we approximate $x\mapsto \exp(-\lambda x)$ by polynomials. We then solve the optimal linear signature hedging problem for the Asian option payoff. Notice that this payoff is not a.s. bounded and it therefore does not satisfy the hypotheses of Proposition \ref{prop:exponential hedging}. However, we can overcome this issue by assuming that the payoff was truncated on $[-M, M]$, for $M>0$ large enough.

The performance of the signature hedging strategy is shown in Figure \ref{fig:PnL exponential}, where the risk parameter $\lambda=0.25$ was considered. Notice that this risk parameter has shifted the P\&L profile from Figure \ref{fig:PnL Heston Asian}, reflecting the change in the trader's risk preferences.

\begin{figure}
\centering
\includegraphics[width=0.75\linewidth]{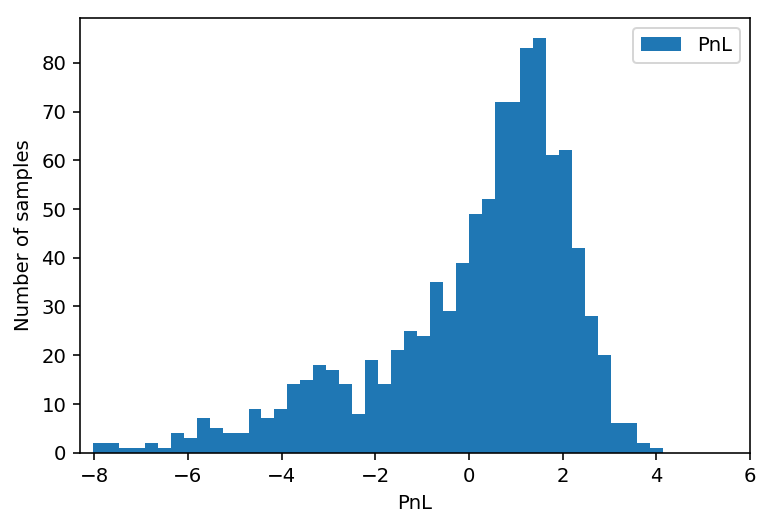}
\caption{P\&L of the hedged portfolio for an Asian option, obtained by solving the exponential hedging problem.}
\label{fig:PnL exponential}
\end{figure}

\subsection{Transaction costs}\label{subsec:transaction experiment}

To study the effect of transaction costs, we consider the payoff $F(\widehat{\mathbb{X}}^{LL, <\infty})=X_T^2$ that was studied in Section \ref{subsec:experiment simple case}. We added fixed quadratic transaction costs (Definition \ref{def:fixed quadratic transaction costs}) with $\alpha = 10^{-6}$ and we compared the performance of the signature hedging strategy, obtained by solving \eqref{eq:optimal hedging transaction costs}.

\begin{figure}
\centering
\includegraphics[width=0.75\linewidth]{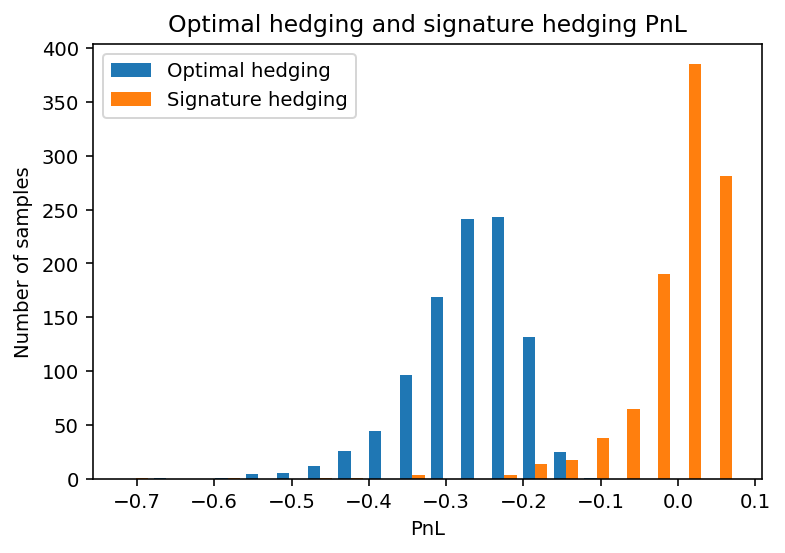}
\caption{P\&L of the replicating and signature hedging strategies for $F(\widehat{\mathbb X}^{LL,<\infty})=X_T^2$ with fixed quadratic transaction costs.}
\label{fig:PnL transaction costs}
\end{figure}

Figure \ref{fig:PnL transaction costs} shows that the P\&L of the replicating hedging strategy drops drastically when transaction costs are added, whereas the P\&L of the signature hedging strategy is much less affected by these transaction costs.

\section{Conclusion}

In this paper we introduce a family of primitive securities called signature payoffs (Definition \ref{def:signature payoff}). In the spirit of Arrow-Debreu, these payoffs approximate arbitrarily well other exotic, path-dependent derivatives. Because signature payoffs are defined as linear combinations of certain iterated integrals, the family of all signature derivatives includes a lot of information about the P\&L of all possible dynamic trading strategies.

In Section \ref{sec:optimal hedging}, we show that these signature payoffs can be used to reduce the original hard-to-solve optimal hedging problem \eqref{eq:optimal hedging} to a polynomial optimisation problem that is numerically easy to solve, \eqref{eq:reduced}. The only information about the underlying process that is needed to accomplish this is its expected signature -- which, in the case where a risk-neutral measure is used, is equivalent to knowing the prices of all signature payoffs. Moreover, our approach is intrinsically model-free -- we do not need to impose any particular model on the market dynamics.

We also demonstrated that our methodology can be used in practice by pricing and hedging certain payoffs from market data, using the implied expected signatures (Section \ref{sec:market data}). We also explore in Section \ref{sec:numerical experiments} the optimal hedging strategies produced by our methodology for different payoff functions when a particular market model is used.

\section*{Disclosure statement}

Opinions and estimates constitute our judgement as of the date of this Material, are for informational purposes only and are subject to change without notice. This Material is not the product of J.P. Morgans Research Department and therefore, has not been prepared in accordance with legal requirements to promote the independence of research, including but not limited to, the prohibition on the dealing ahead of the dissemination of
investment research. This Material is not intended as research, a recommendation, advice, offer or solicitation for the purchase or sale of any financial product or service, or to be used in any way for evaluating the merits of participating in any transaction. It is not a research report and is not intended as such. Past performance is not indicative of future results. Please consult your own advisors regarding legal, tax, accounting or any other aspects including suitability implications for your particular circumstances. J.P. Morgan disclaims any responsibility or liability whatsoever for the quality, accuracy or completeness of the information herein, and for any reliance on, or use of this material in any way.

Important disclosures at: \url{www.jpmorgan.com/disclosures}.

\begin{appendices}
\section{The lead-lag path: practical considerations}\label{appendix:leadlag}

\begin{figure}
\centering
\includegraphics[width=\linewidth]{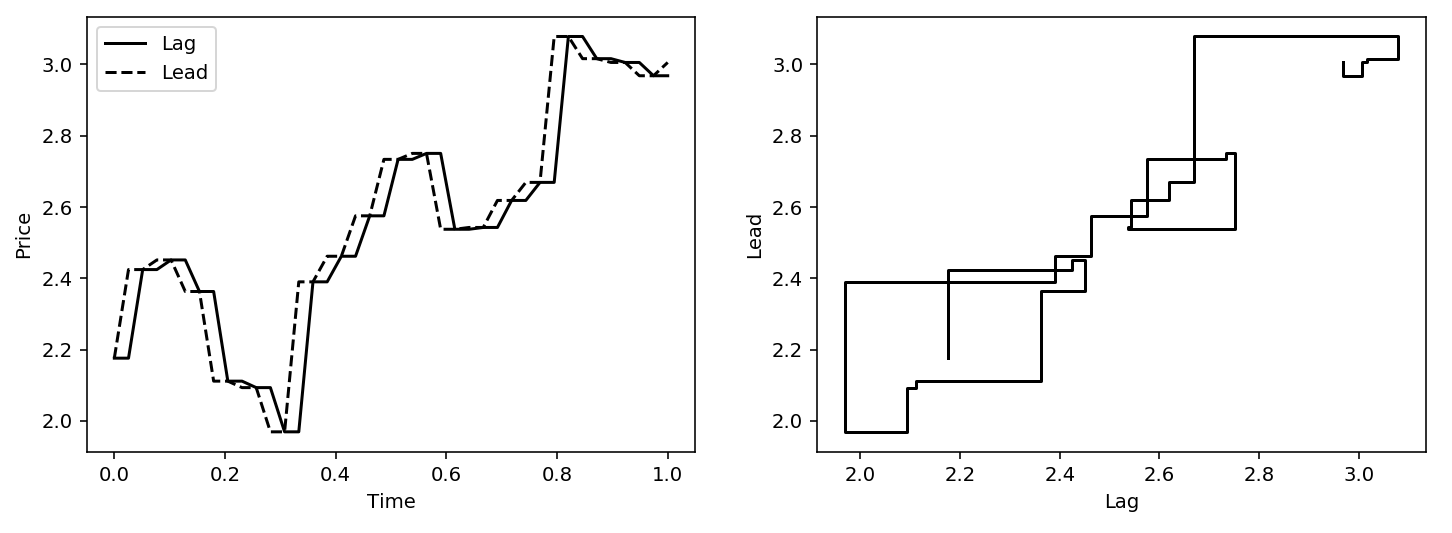}
\caption{Lead-lag transformation of a price path. The figure on the left shows the lead and lag components of the path, and the figure on the right shows the lag component plotted against the lead component.}
\label{fig:leadlag}
\end{figure}

In this appendix, we will discuss some practical considerations about how to compute the lead-lag path for discrete data, as well as for semimartingales.

Let $D=\{t_i\}_{i=0}^n\subset [0, T]$ be a finite partition, and let $Z:D\to \mathbb R^d$ be discrete path. The \textit{lead-lag} transformation of $Z$ is defined below.

\begin{definition}{[Lead-lag transformation, \cite[Definition 2.1]{guy}]}\label{def:leadlag transform}
The lead-lag transformation of $Z$ associated with $D$ is the $2d$-dimensional piecewise linear path $Z^{D, LL}:=(Z^{D, b}, Z^{D, f}):[0, T]\to \mathbb R^{2d}$ defined by
\begin{align*}
Z^{D, LL}_t
:&=\begin{cases}
\left(Z_{t_{k}},Z_{t_{k+1}}\right), & \gap t\in\left [\frac{2k}{2n}T,\frac{2k+1}{2n}T\right ),\\
\left(Z_{t_{k}},Z_{t_{k+1}}+2(t-(2k+1))\left(Z_{t_{k+2}}-Z_{t_{k+1}}\right)\right), & \gap t\in\left [\frac{2k+1}{2n}T,\frac{2k+3/2}{2n}T\right ),\\
\left(Z_{t_{k}}+2(t-(2k+\frac{3}{2}))\left(Z_{t_{k+1}}-Z_{t_{k}}\right),Z_{t_{k+2}}\right), & \gap t\in\left [\frac{2k+3/2}{2n}T,\frac{2k+2}{2n}T\right ).
\end{cases}
\end{align*}
The component $Z^{D, b}$ is the lag or backward component, and $Z^{D, f}$ is the lead or forward component. By taking the signature of this piecewise linear path, we obtain the signature of the lead-lag path $\mathbb Z^{D, LL, <\infty}$.
\end{definition}

The lead-lag transformation differentiates the role played by the \textit{past} and the \textit{future}. This is done by keeping track of the immediate past (the \textit{lag} component) and the immediate future (the \textit{lead} component).

In order to give an intuition of what the lead-lag transformation is, Figure \ref{fig:leadlag} shows the lead-lag transformation of a certain price path. As the name suggests, the lead component is \textit{leading} the lag component.

Now, let $Z:[0,T]\to \mathbb R^d$ be a continuous semimartingale with quadratic variation $\langle Z \rangle$. As discussed in Example \ref{ex:semimartingales ll}, this induces a lead-lag path $(\mathbb Z^{\leq 2}, \langle Z \rangle)$, whose signature is $\mathbb Z^{LL, <\infty}$. The lemma below provides a method to compute the signature of the lead-lag path of a semimartingale in practice: one can sample the semimartingale, compute the lead-lag transformation of the corresponding discrete path and then find its signature.

\begin{lemma}{[\cite[Theorem 4.1]{guy}]}
Let $Z:[0,T]\to \mathbb R^d$ be a continuous semimartingale. For each finite partition $D\subset [0, T]$, denote by $Z^D$ the corresponding lead-lag transformation and by $\mathbb Z^{D, LL, <\infty}$ its signature (Definition \ref{def:leadlag transform}). Let $\mathbb Z^{LL, <\infty}$ be the signature of the lead-lag path $(\mathbb Z^{\leq 2}, \langle Z \rangle)$ associated with the semimartingale (Definition \ref{def:leadlag}). Then,
$$\mathbb Z^{D, LL, <\infty} \longrightarrow \mathbb Z^{LL, <\infty}\quad \mbox{in probability as }|D|\to 0$$
where the limit is taken in under the $p$-variation distance (see \cite{guy}).
\end{lemma}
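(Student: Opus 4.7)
The plan is to follow the classical rough-path strategy of proving convergence at level 2 and then appealing to the continuity of the signature extension map (Lyons' extension theorem, Theorem 3.7 in \cite{lyonsbook}). Once we know that the truncated level-2 lead-lag signatures converge in $p$-variation (for some $p\in (2,3)$) in probability, the full signatures $\mathbb Z^{D,LL,<\infty}\to\mathbb Z^{LL,<\infty}$ in $p$-variation follow automatically, since the extension is continuous in the $p$-variation topology. So the entire problem reduces to analysing the first two levels of the signature of $Z^{D,LL}$.

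At the first level, the lead and lag components of $Z^{D,LL}$ are piecewise linear paths that interpolate the sampled values of $Z$, hence their increments over $[0,T]$ are just $Z_T-Z_0$, which is trivially the same as the first level of $\mathbb Z^{LL}$. At the second level one must identify the four $d\times d$ blocks corresponding to lag$\otimes$lag, lead$\otimes$lead, lag$\otimes$lead and lead$\otimes$lag. A direct computation on the piecewise linear definition shows that the diagonal blocks (lag$\otimes$lag and lead$\otimes$lead) are Riemann-sum-type approximations of $\int Z\otimes dZ$ in the Stratonovich sense and therefore converge in probability to $\mathbb Z^2_{0,T}$, while the two off-diagonal blocks pick up an additional contribution of $\pm\tfrac12 \sum_{t_k\in D}(Z_{t_{k+1}}-Z_{t_k})\otimes (Z_{t_{k+1}}-Z_{t_k})$, which converges in probability to $\pm \tfrac12 \langle Z\rangle_{0,T}$ by the standard quadratic-variation theorem for semimartingales. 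Combining these four pieces reproduces exactly the level-2 tensor
$$\left (\mathbb Z^2,\;\mathbb Z^2-\tfrac12\langle Z\rangle;\;\mathbb Z^2+\tfrac12\langle Z\rangle,\;\mathbb Z^2\right )$$
prescribed by Definition \ref{def:leadlag}, so pointwise-in-$(s,t)$ convergence holds in probability.

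To upgrade pointwise convergence to convergence in $p$-variation, I would use the classical rough-path interpolation trick: show first that for $p\in(2,3)$ the $p$-variation norms $\lVert \mathbb Z^{D,LL,\leq 2}\rVert_{p\text{-var};[0,T]}$ are uniformly bounded in probability as $|D|\to 0$, and then invoke the standard interpolation inequality which says that if a sequence of rough paths has bounded $p$-variation and converges pointwise to a limit of finite $p$-variation, then convergence takes place in $p'$-variation for every $p'>p$. The uniform bound at level 2 follows from the BDG inequality applied to the Stratonovich iterated integrals plus the uniform control of $\sum (Z_{t_{k+1}}-Z_{t_k})^{\otimes 2}$ by $\langle Z\rangle_{0,T}$, and at level 1 from the $p$-variation estimates for semimartingales in \cite{lyonsbook}.

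The main obstacle is the level-2 off-diagonal convergence, because this is precisely where the It\^o-versus-Stratonovich ambiguity lives: the lead-lag construction is engineered so that the asymmetric part of $Z^{D,LL}\otimes Z^{D,LL}$ at the second level produces the quadratic variation on the nose rather than zero, and one has to be careful to see that the \emph{discrete} lead-lag defined in Definition \ref{def:leadlag transform} really yields the Riemann sum $\sum(\Delta Z)^{\otimes 2}$ rather than some shifted variant. Once this combinatorial identification is in place, the remainder of the argument is a routine combination of BDG, the quadratic-variation theorem, and the Lyons extension theorem.
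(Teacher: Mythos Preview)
The paper does not supply its own proof of this lemma: the statement is quoted verbatim from \cite[Theorem 4.1]{guy} and is included in Appendix~\ref{appendix:leadlag} only as a practical tool for computing lead-lag signatures, with no argument given. So there is nothing in the paper to compare your proposal against.

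That said, your sketch is the correct strategy and is essentially the one carried out in \cite{guy}: reduce to the level-$2$ problem via the continuity of the Lyons extension map, identify the four $d\times d$ blocks of the second-level iterated integral of the piecewise-linear lead-lag path, observe that the diagonal blocks are Stratonovich-type Riemann sums while the off-diagonal blocks differ from them by exactly $\pm\tfrac12\sum_k(\Delta_k Z)^{\otimes 2}$, and then upgrade pointwise convergence to $p$-variation convergence by a tightness/interpolation argument. Your identification of the ``main obstacle'' is also accurate: the combinatorics of Definition~\ref{def:leadlag transform} are set up precisely so that the antisymmetric part of the level-$2$ tensor produces the discrete quadratic variation, and checking this carefully (over all subintervals $[s,t]$, not just $[0,T]$) is where the work lies. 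The only point I would tighten is the uniform $p$-variation bound: in \cite{guy} this is obtained not directly via BDG on the iterated integrals but through a L\'epingle-type $p$-variation estimate for the underlying semimartingale combined with explicit control of the lead-lag area increments; your BDG route would give moment bounds but not immediately the pathwise $p$-variation control needed for the interpolation step.
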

\section{Proofs}\label{appendix:proofs}

\begin{lemma}[Signature of a perturbed rough path]\label{lemma:x infinity}
Let $\mathbb{X}\in G\Omega_p([0, T], \mathbb{R}^d)$ be a $p$-rough path with $p\in [2, 3)$, and let $\varphi:C([0, T]; \mathfrak{so}(d))$ be of bounded variation. Define the second-level perturbation $\mathbb{Y}:=\mathbb{X} + \varphi \in G\Omega_p([0, T]; \mathbb{R}^d)$.   Define $z^1 := X,z^2 := \varphi$. Given $\word I =\word{ i_1 \ldots i_k}  \in \{\word 1, \word 2\}^k$, let $$a_{s,t}^{\word{I}} := \int_{s\leq u_1\leq \ldots\leq u_k\leq t} dz_{u_1}^{\word{i_1}}\otimes \ldots \otimes dz_{u_k}^{\word{i_k}}.$$

\noindent Then, for each $N\geq 1$ the level $N$ signature of $\mathbb{Y}$ is given by

$$\mathbb{Y}^N = \sum_k \sum_{\substack{\word{I}=\word{i_1\ldots i_k}\in \{\word{1}, \word{2}\}^k\\ i_1 + \ldots + i_k = N}}a^{\word{I}}.$$
\end{lemma}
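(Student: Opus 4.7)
The strategy is to apply Lyons' extension theorem: for a $p$-rough path with $p \in [2,3)$, the levels $1$ and $2$ uniquely determine a multiplicative extension $\mathbb{Y}^{<\infty}$ at higher levels, subject to the $p/N$-variation regularity at each level $N$. It therefore suffices to show that the right-hand side of the claimed formula, which I shall denote
$$\widetilde{\mathbb{Y}}^N_{s,t} := \sum_{k \geq 1} \sum_{\substack{\word{I} = \word{i_1\ldots i_k} \in \{\word{1},\word{2}\}^k \\ i_1+\ldots+i_k=N}} a^{\word{I}}_{s,t},$$
agrees with $\mathbb{Y}$ on levels $1$ and $2$, satisfies Chen's identity across all levels, and has the appropriate finite $p/N$-variation at each level $N$.

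For the base cases, level $1$ reduces to the single word $\word{1}$ of weight $1$, so $\widetilde{\mathbb{Y}}^1_{s,t} = a^{\word{1}}_{s,t} = X_t - X_s = \mathbb{Y}^1_{s,t}$, and level $2$ contains the two words $\word{11}$ and $\word{2}$, giving $\widetilde{\mathbb{Y}}^2_{s,t} = \mathbb{X}^2_{s,t} + \varphi_{s,t} = \mathbb{Y}^2_{s,t}$ by the definition of the second-level perturbation. Skew-symmetry of $\varphi$ ensures that the symmetric part of $\widetilde{\mathbb{Y}}^2$ is unchanged, so $\widetilde{\mathbb{Y}}^{\leq 2}$ is indeed a geometric $p$-rough path.

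Chen's identity is the structural heart of the argument. I would show that each mixed iterated integral enjoys the splitting
$$a^{\word{I}}_{s,t} = \sum_{j=0}^{|\word{I}|} a^{\word{i_1 \ldots i_j}}_{s,u} \otimes a^{\word{i_{j+1}\ldots i_{|\word{I}|}}}_{u,t}$$
for any intermediate $u \in [s,t]$, obtained by partitioning the ordered simplex $\{s \le u_1 \le \cdots \le u_k \le t\}$ according to how many of the $u_\ell$ fall in $[s,u]$. Summing this identity over all words $\word{I}$ of weight $N$ and reindexing by the pair (prefix $\word{J}$, suffix $\word{K}$) whose concatenation equals $\word{I}$ and whose weights sum to $N$, yields the multiplicative relation
$$\widetilde{\mathbb{Y}}^N_{s,t} = \sum_{M=0}^N \widetilde{\mathbb{Y}}^M_{s,u} \otimes \widetilde{\mathbb{Y}}^{N-M}_{u,t},$$
which is Chen's identity component-by-component.

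The main technical obstacle will be verifying the $p/N$-variation regularity of each term $a^{\word{I}}$ and confirming that the candidate integrals are well-defined. These are mixed iterated integrals against the rough path $z^1 = X$ (of finite $p$-variation) and the bounded-variation path $z^2 = \varphi$. Every appearance of $d\varphi$ alongside a rough segment is handled by Young integration since $1/p + 1 > 1$ for $p < 3$, while consecutive $z^1$-integrations appeal to the rough extension of $\mathbb{X}$. A careful bookkeeping shows that a word $\word{I}$ with $n_1$ occurrences of $\word{1}$ and $n_2$ occurrences of $\word{2}$, so of weight $N = n_1 + 2n_2$, has $p/N$-variation bounded by a product of $\|X\|_{p\text{-var}}^{n_1}$ and $\|\varphi\|_{1\text{-var}}^{n_2}$. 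The occurrences of $\word{2}$ in fact give better regularity than required, so the overall level-$N$ component still satisfies the $p/N$-variation bound demanded by Lyons' extension theorem. Uniqueness of that extension then identifies $\widetilde{\mathbb{Y}}$ with $\mathbb{Y}^{<\infty}$, completing the proof.
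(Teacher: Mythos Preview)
Your argument is sound, but it takes a genuinely different route from the paper's. The paper proceeds by induction on $N$: it invokes the fact that $\mathbb{Y}^{\leq N}$ satisfies the linear rough differential equation $d\mathbb{Y}^{\leq N} = \mathbb{Y}^{\leq N}\otimes d\mathbb{Y}$, rewrites this via a cited RDE-with-drift result (Friz--Victoir) as
\[
\mathbb{Y}^N_{s,t} = \int_s^t \mathbb{Y}^{N-1}_{s,u}\otimes dX_u + \int_s^t \mathbb{Y}^{N-2}_{s,u}\otimes d\varphi_u,
\]
and then applies the inductive hypothesis for $\mathbb{Y}^{N-1}$ and $\mathbb{Y}^{N-2}$ to obtain the claimed word expansion. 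Your approach instead builds the candidate $\widetilde{\mathbb{Y}}$ directly, verifies Chen's identity by simplex-splitting, checks the $p/N$-variation bounds, and appeals to the uniqueness half of Lyons' extension theorem. The paper's route is shorter because the analytic burden (well-definedness of the mixed integrals and their regularity) is entirely absorbed into the cited RDE-with-drift theorem; your route is more self-contained and makes the multiplicative structure explicit, but you pay for this by having to justify the mixed iterated integrals by hand---in particular, runs of consecutive $\word 1$'s following a $\word 2$ (e.g.\ $a^{\word{211}}$) are not Young integrals and genuinely require controlled-path machinery, which you correctly flag but would need to spell out.
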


\begin{proof}

If $N=1$, the sum above is reduced to $a^{(1)} = \int_0^T dX_{0, t} = X_{0, T} = Y_{0, T}$, so that the claim holds. Assume that the statement is true for $N-1$. We will show that it also holds for $N\geq 2$.

By \cite[Lemma 4.6]{korea}, $\mathbb{Y}^{\leq N}$ satisfies the rough differential equation
\begin{align*}
d\mathbb{Y}^{\leq N} &= \sum_{i=1}^d\mathbb{Y}^{\leq N}\otimes e_i dY_t^i, \quad\mathbb Y_0^{\leq N} = 1\in T^N(\mathbb{R}^d).
\end{align*}

By \cite[Theorem 12.16]{frizvictoir}, the solution of the above rough differential equation is also the solution of the rough differential equation with drift

$$d\mathbb{Y}^{\leq N} = \sum_{i=1}^d\mathbb{Y}^{\leq N} \otimes e_i dX^i + \sum_{1 \leq i< j \leq d} \mathbb{Y}^{\leq N} \otimes [e_i, e_j] d\varphi^{i, j}, \quad \mathbb Y_0^{\leq N} = 1 \in T^N(\mathbb R^d)$$ where $[e_i, e_j] := e_i\otimes e_j - e_j \otimes e_i$ denotes the Lie bracket. Hence, the level $N$ projection of $\mathbb{Y}^{\leq N}$ will satisfy

\begin{align*}
\mathbb{Y}_{s, t}^{N} &= \int_s^t \mathbb{Y}_{s,u}^{N-1}  \otimes dX_u + \sum_{1\leq i< j \leq d} \int_s^t\mathbb{Y}^{N-2}_{s,u} \otimes [e_i, e_j] d\varphi_u^{i, j}\\
&=\int_s^t \mathbb{Y}_{s,u}^{N-1}  \otimes dX_u +  \int_s^t\mathbb{Y}^{N-2}_{s,u} \otimes  d\varphi_u
\end{align*} because $\varphi$ is antisymmetric. By induction hypothesis,

\begin{align*}&\mathbb{Y}^{N-1} = \sum_k \sum_{\substack{\word{I}=\word{i_1\ldots i_k}\in \{\word{1}, \word{2}\}^k\\ i_1 + \ldots + i_k = N-1}}a^\word{I},\\
&\mathbb{Y}^{N-2} = \sum_k \sum_{\substack{\word{I}=\word{i_1\ldots i_k}\in \{\word{1}, \word{2}\}^k\\ i_1 + \ldots + i_k = N-2}}a^\word{I}.
\end{align*}

\noindent Hence,

$$\mathbb{Y}^N =\sum_k \sum_{\substack{\word{I}=\word{i_1\ldots i_k}\in \{\word{1}, \word{2}\}^k\\ i_1 + \ldots + i_k = N}}a^{\word{I}},$$ as desired.

\end{proof}

\begin{customlemma}{\ref{lemma:ito-stratonovich}}
Let $X$ be a $d$-dimensional continuous semimartingale. Let $\ell\in T((\mathbb R^2)^\ast)$. Then, we have:
$$\int_0^T \langle \ell, \widehat{\mathbb X}_{0,t}^{<\infty}\rangle dX_t = \langle \ell \word 4, \widehat{\mathbb X}_{0,T}^{LL,<\infty}\rangle,$$
where the integral is in the sense of It\^o, the notation $\ell \word 4\in T((\mathbb R^4)^\ast)$ means the concatenation of the word associated to $\ell$ with the letter $\word 4$ (introduced in Section \ref{subsec:tensor algebra}) and $\widehat{\mathbb X}_{0,T}^{LL,<\infty}$ is the signature of the (4-dimensional) lead-lag process, as defined in Definition \ref{def:leadlag}.
\end{customlemma}

\begin{proof}
For a $d$-dimensional path $Z$, define the $2d$-dimensional path $\overline Z:=(Z, Z)$, with the corresponding signature $\overline{\mathbb Z}^{<\infty}$. It was shown in \cite{guy} that $\widehat{\mathbb X}^{LL,<\infty}$ is the signature of the perturbed rough path $\widehat{\mathbb X}^{LL, \leq 2} := \overline{\widehat{\mathbb X}}^{\leq 2} + \psi$, with
$$\psi_{s, t} := \begin{pmatrix} 0 & -\frac{1}{2}[ X ]_{s,t} \\ \frac{1}{2} [ X ]_{s, t}&0\end{pmatrix},\quad 0 \leq s \leq t \leq T.$$

Let $N\geq 1$. We have

\begin{align*}
\int_0^T \left \langle \ell, \widehat{\mathbb{X}}_{0, T}^{\leq N} \right \rangle  dX_t& = \int_0^T \left \langle \ell, \widehat{\mathbb{X}}_{0, t}^{\leq N}\right \rangle  \circ dX_t - \dfrac{1}{2}\left [ \left \langle \ell , \int_0^\cdot \widehat{\mathbb{X}}_{0, u}^{\leq N-1}\otimes d\widehat{X}_u \right \rangle, X_\cdot \right ]_T\\
&=\left \langle \ell \word 4, \overline{\widehat{\mathbb{X}}}_{0, T}^{\leq N+1}  \right \rangle - \dfrac{1}{2}\left \langle \ell \word 4, \int_0^T \overline{\widehat{\mathbb{X}}}_{0, u}^{\leq N-1}\otimes d\left [\,\overline{\widehat{X}} \,\right ]_u \right \rangle.
\end{align*}

\noindent Therefore, we have to show that

\begin{equation}\label{eq:projections}
\left \langle \ell\word 4, \widehat{\mathbb{X}}_{0, T}^{\mathrm{LL}, \leq N+1}\right \rangle = \left \langle \ell\word 4, \overline{\widehat{\mathbb{X}}}_{0, T}^{\leq N+1}  - \dfrac{1}{2} \int_0^T \overline{\widehat{\mathbb{X}}}_{0, u}^{\leq N-1}\otimes d\left [\,\overline{\widehat{X}} \,\right ]_u \right \rangle.
\end{equation}

\noindent We proceed by induction. If $N=1$, we have

$$\widehat{\mathbb{X}}_{0, T}^{\mathrm{LL}, \leq 2} = \overline{\widehat{\mathbb{X}}}_{0, T}^{\leq 2} + \psi_{0, T}.$$

\noindent Since $\langle \ell\word 4, \psi\rangle = -\dfrac{1}{2}\left  \langle\ell\word 4, \langle \overline{\widehat{X}}\rangle \right \rangle$, it follows that \eqref{eq:projections} holds for $N=1$.

\noindent Assume that \eqref{eq:projections} holds for $N$, we will show that it also holds true for $N+1$. By induction hypothesis, \eqref{eq:projections} is reduced to:

$$\langle \ell \word 4, \widehat{\mathbb{X}}_{0, T}^{\mathrm{LL}, N+2}\rangle = \left \langle \ell \word 4, \overline{\widehat{\mathbb{X}}}_{0, T}^{ N+2}  - \dfrac{1}{2} \int_0^T \overline{\widehat{\mathbb{X}}}_{0, u}^{ N}\otimes d\left [\,\overline{\widehat{X}}\, \right ]_u \right \rangle.$$

\noindent By Lemma \ref{lemma:x infinity},

$$\widehat{\mathbb{X}}_{0, T}^{\mathrm{LL}, N+2} = \sum_k \sum_{\substack{\word{I}=\word{i_1\ldots i_k}\in \{\word{1}, \word{2}\}^k\\ i_1 + \ldots + i_k = N+2}} a_{0, T}^\word{I}.$$
Notice that $\langle \ell \word 4, a^\word{I}\rangle$ is nonzero only for $\word{I_1}=\underbrace{\word{11\ldots1}}_{N+2}$ and $\word{I_2}=\underbrace{\word{11\ldots 1}}_{N}\word{2}$. Hence,

\begin{align*}
\langle \ell \word 4, \widehat{\mathbb{X}}_{0, T}^{\mathrm{LL}, N+2}\rangle &= \langle \ell \word 4, a_{0, T}^{\word{I_1}}\rangle + \langle \ell \word 4, a_{0, T}^{\word{I_2}}\rangle = \left \langle \ell\word 4, \overline{\widehat{\mathbb{X}}}_{0, T}^{ N+2} \right \rangle + \left \langle \ell \word 4,  \int_0^T \overline{\widehat{\mathbb{X}}}_{0, u}^{ N}\otimes d\psi_{0, u} \right \rangle\\
&=\left \langle \overline{\widehat{\mathbb{X}}}_{0, T}^{ N+2}  - \dfrac{1}{2} \int_0^T \overline{\widehat{\mathbb{X}}}_{0, u}^{ N}\otimes d\left [\,\overline{\widehat{X}}\, \right ]_u \right \rangle
\end{align*} as desired.

\end{proof}

\begin{customtheorem}{\ref{th:sig hedging reduced}}
Let $f\in T((\mathbb R^4)^\ast)$ and $p_0\in \mathbb R$. Let $P\in \mathbb R[x]$ be a polynomial of one variable. Then, the solution of the optimal linear signature hedging problem \eqref{eq:sig version} is given by the solution of the following polynomial optimisation problem:
\begin{equation}
\inf_{\ell \in T((\mathbb R^2)^\ast)} \left \langle P^{\shuffle} (f - p_0\word \varnothing - \ell \word 4), \mathbb E \left [ \widehat{\mathbb X}_{0,T}^{LL,<\infty}\right ]\right \rangle.
\end{equation}
\end{customtheorem}

\begin{proof}
By Lemma \ref{lemma:ito-stratonovich}, \eqref{eq:sig version} will be given by:
\begin{align*}
&\inf_{\ell \in \mathcal{H}} \mathbb{E}^\mathbb{P} \left [ P\left ( \langle f, \widehat{\mathbb{X}}_{0, T}^{LL, <\infty} \rangle - p_0 - \int_0^T \langle \ell, \widehat{\mathbb{X}}_{0, t}^{<\infty} \rangle dX_t\right )\right ]\\
&=\inf_{\ell \in \mathcal{H}} \mathbb{E}^\mathbb{P} \left [ P \left ( \langle f, \widehat{\mathbb{X}}_{0, T}^{\mathrm{LL}, <\infty} \rangle - p_0 - \langle \ell\word 4, \widehat{\mathbb{X}}_{0, T}^{\mathrm{LL}, <\infty}\rangle \right )\right ]\\
&=\inf_{\ell \in \mathcal{H}} \mathbb{E}^\mathbb{P} \left [ P \left ( \langle f - p_0\word\varnothing - \ell \word 4, \widehat{\mathbb{X}}_{0, T}^{\mathrm{LL}, <\infty} \rangle\right )\right ]\\
&\overset{(\star)}{=}\inf_{\ell \in \mathcal{H}} \mathbb{E}^\mathbb{P} \left [  \langle P^{\shuffle}\left ( f - p_0\word\varnothing - \ell \word 4\right ), \widehat{\mathbb{X}}_{0, T}^{\mathrm{LL}, <\infty} \rangle\right ]\\
&=\inf_{\ell \in \mathcal{H}} \left  \langle P^{\shuffle} \left ( f - p_0\word\varnothing - \ell \word 4 \right ), \mathbb{E}^\mathbb{P} \left [\widehat{\mathbb{X}}_{0, T}^{\mathrm{LL}, <\infty}\right ] \right \rangle,
\end{align*} where $(\star)$ follows by the shuffle product property (Lemma \ref{lemma:shuffle product property}).

\end{proof}

\begin{customproposition}{\ref{prop:exponential hedging}}
Let
$$a:=\inf_{\theta \in \mathcal{T}^q(\Lambda_T)} \mathbb{E} \left [ \exp \left (-\lambda \left ( p_0 + \int_0^T \theta(\widehat {\mathbb X}|_{[0, t]}^{<\infty}) dX_t - F(\widehat{\mathbb{X}}^{LL, <\infty}) \right )\right )\right ]$$
be the infimum of the optimal exponential hedging problem. Given any $\varepsilon>0$, there exists a polynomial $P_\varepsilon \in \mathbb R[x]$, a compact set $\mathcal K_\varepsilon\subset \widehat \Omega_T$, a linear signature payoff given by $f\in T((\mathbb R^4)^\ast)$ and a linear signature trading strategy given by $\ell\in T((\mathbb R^2)^\ast)$ such that:
\begin{enumerate}
\item $P_\varepsilon \xrightarrow{\varepsilon \rightarrow 0} \exp(-\lambda\, \cdot)$ uniformly on compacts,
\item $\mathbb P[\mathcal K_\varepsilon] > 1-\varepsilon$,
\item $|F(\widehat{\mathbb X}^{LL,<\infty}) - \langle f, \widehat{\mathbb X}_{0,T}^{LL,<\infty}\rangle |<\varepsilon\quad \forall\, \widehat{\mathbb X}\in \mathcal K_\varepsilon$,
\item $|\theta(\widehat{\mathbb X}|_{[0,t]}^{<\infty}) - \langle \ell, \widehat{\mathbb X}_{0,t}^{<\infty}\rangle | < \varepsilon \quad \forall\,\widehat{\mathbb X}^{<\infty} \in \mathcal K_\varepsilon$ and $t\in [0, T]$,
\item $|a_\varepsilon - a| \leq \varepsilon$, where
$$a_\varepsilon := \mathbb E \left [ P_\varepsilon\left (p_0 + \int_0^T \langle \ell, \widehat{\mathbb X}_{0,t}^{<\infty}\rangle dX_t - \langle f, \widehat{\mathbb X}_{0,T}^{LL,<\infty}\rangle\right )\;;\;\mathcal K_\varepsilon\right ].$$
\end{enumerate}
\end{customproposition}

\begin{proof}
Let $I\subset \RR$ be a compact interval such that $p_0 + \int_0^T \theta(\widehat {\mathbb X}|_{[0, t]}^{<\infty}) dX_t - F(\widehat{\mathbb{X}}^{LL,<\infty})\in I$ a.s. Let $P_\varepsilon$ be the Taylor expansion of $\exp(-\lambda \; \cdot)$ around the origin of degree large enough so that $\lVert P_\varepsilon - \exp(-\lambda \; \cdot)\rVert_{L^\infty(I)} < \varepsilon$.

Let $\mathcal{K}_\varepsilon\subset \widehat{\Omega}_p$ compact be such that $\mathbb P[\mathcal K_\varepsilon]>1-\varepsilon$ and $$\left |\mathbb{E}^{\mathbb{P}}\left  [P_\varepsilon \left ( p_0 + \int_0^T \theta(\widehat {\mathbb X}|_{[0, t]}^{<\infty}) dX_t - F(\widehat{\mathbb{X}}^{LL,<\infty})\right ) \; ; \; \mathcal{K}_\varepsilon^c \right ] \right | < \varepsilon.$$ Take $\ell\in T((\mathbb R^2)^\ast)$, $f\in T((\mathbb R^4)^\ast)$ such that 3. and 4. hold, which we can do due to Proposition \ref{prop:density sig payoffs} and Proposition \ref{prop:density sig strategies}. Then, we have:

\begin{align*}
&\left |\mathbb{E}\left [P_\varepsilon \left ( p_0 + \int_0^T \langle \ell, \widehat{\mathbb X}_{0,t}^{<\infty}\rangle dX_t - \langle f, \widehat{\mathbb{X}}_{0,T}^{LL, <\infty} \rangle \right ) - \exp\left  ( -\lambda \left ( p_0 + \int_0^T \theta_t dX_t - F(\widehat{\mathbb{X}}^{LL, <\infty})\right )\right ) \right ] \right |\\
&\leq \left |\mathbb{E}\left [P_\varepsilon \left ( p_0 + \int_0^T \langle \ell, \widehat{\mathbb X}_{0,t}^{<\infty}\rangle dX_t - \langle f, \widehat{\mathbb{X}}_{0,T}^{LL, <\infty} \rangle \right ) - P_\varepsilon \left ( p_0 + \int_0^T \theta_t dX_t - F(\widehat{\mathbb{X}}^{LL, <\infty})\right ) \right ] \right |\\
& + \left |\mathbb{E}\left [P_\varepsilon \left ( p_0 + \int_0^T \theta_t dX_t - F(\widehat{\mathbb{X}}^{LL, <\infty}) \right ) - \exp\left  ( -\lambda \left ( p_0 + \int_0^T \theta_t dX_t - F(\widehat{\mathbb{X}}^{LL, <\infty})\right )\right ) \right ] \right |\\
&=: (\star)+(\star \star)
\end{align*}

By Propositions \ref{prop:density sig payoffs} and \ref{prop:density sig strategies}, $(\star) < \varepsilon$. Moreover, because $\lVert P_\varepsilon - \exp(-\lambda\;\cdot)\rVert_{L^\infty(I)}<\varepsilon$, we have $(\star \star)<\varepsilon$, and the proof follows.
\end{proof}

\begin{customcorollary}{\ref{cor:transactions}}
The solution of the optimal hedging problem under fixed quadratic trading costs \eqref{eq:fixed transactions} is given by the solution of the following optimisation problem:
\begin{equation}\label{eq:optimal hedging transaction costs appendix}
\inf_{v \in T((\mathbb R^2)^\ast)} \left \langle P^{\shuffle} (f - p_0\word \varnothing - v \word{14} + \alpha v^{\shuffle 2} \word 1), \mathbb E \left [ \widehat{\mathbb X}_{0,T}^{LL,<\infty}\right ]\right \rangle.
\end{equation}
Similarly, the solution of the optimal hedging under proportional transaction costs \eqref{eq:prop transactions} is given by
\begin{equation}\label{eq:optimal hedging transaction costs appendix 2}
\inf_{v \in T((\mathbb R^2)^\ast)} \left \langle P^{\shuffle} (f - p_0\word \varnothing - v \word{14} + \alpha (v\shuffle (\word 2 + \word \varnothing))^{\shuffle 2} \word 1), \mathbb E \left [ \widehat{\mathbb X}_{0,T}^{LL,<\infty}\right ]\right \rangle.
\end{equation}
\end{customcorollary}

\begin{proof}
\eqref{eq:optimal hedging transaction costs appendix} follows from Theorem \ref{th:sig hedging reduced} and from the fact that
$$\int_0^T |\langle v, \widehat{\mathbb X}_{0, u}^{<\infty}\rangle| ^ 2 du=v^{\shuffle 2}\word 1.$$

Similarly, \eqref{eq:optimal hedging transaction costs appendix 2} follows from Theorem \ref{th:sig hedging reduced} and the fact that
$$\int_0^T |\langle v, \widehat{\mathbb X}_{0,u}^{<\infty}\rangle X_u|^2du = (v\shuffle (\word 2 + \word \varnothing))^{\shuffle 2}\word 1.$$
\end{proof}

\section{}\label{appendix:densities}

In Section \ref{subsec:general optimal hedging}, we justified why solving the linear signature hedging problem \eqref{eq:sig version} allows us to numerically estimate the solution of the polynomial hedging problem \eqref{eq:optimal hedging}.

The objective of this section is to give the technical details of why such an approximation is justified.

\begin{customproposition}{\ref{prop:density sig payoffs}}
Let $F:\widehat \Omega_T^{LL} \to \mathbb R$ be a continuous payoff. Given any $\varepsilon>0$, there exists a compact set $\mathcal K_\varepsilon\subset \widehat \Omega_T$ (which does not depend on $F$) and $f\in T((\mathbb R^4)^\ast)$ such that:
\begin{enumerate}
\item $\mathbb P[\mathcal K_\varepsilon] > 1-\varepsilon$,
\item $|F(\widehat{\mathbb X}^{LL, <\infty}) - \langle f, \widehat{\mathbb X}_{0,T}^{<\infty}\rangle |<\varepsilon\quad \forall\, \widehat{\mathbb X}^{LL, <\infty}\in \mathcal K_\varepsilon.$
\end{enumerate}
\end{customproposition}

\begin{proof}
Let $f,g\in T((\mathbb R^4)^\ast)$. Then, by the shuffle product, we have:
$$\langle f, \widehat{\mathbb X}_{0,T}^{<\infty}\rangle \langle g, \widehat{\mathbb X}_{0,T}^{<\infty}\rangle = \langle f\shuffle g, \widehat{\mathbb X}_{0,T}^{<\infty}\rangle\quad \forall \, \mathbb X^{LL,<\infty}\in \widehat \Omega_T.$$
Therefore, linear signature payoffs form an algebra. Moreover, the uniqueness of the signature (Corollary \ref{cor:uniqueness signature}) implies that the family of linear signature payoffs separate points. Also, they trivially contain constants.

Let $\varepsilon>0$. Given that $\widehat \Omega_T$ is separable, there exists $\mathcal K_\varepsilon \subset \widehat \Omega_T$ compact such that $\mathbb P[\mathcal K_\varepsilon] > 1-\varepsilon$. Because the family of linear signature payoffs forms an algebra, separates points and contains constants, by Stone--Weierstrass theorem there exists a $\ell$-linear signature payoff, with $f\in T((\mathbb R^4)^\ast)$, such that
$|F(\widehat{\mathbb X}^{LL, <\infty}) - \langle f, \widehat{\mathbb X}_{0,T}^{LL, <\infty}\rangle |<\varepsilon\quad \forall\, \widehat{\mathbb X}^{LL,<\infty}\in \mathcal K_\varepsilon.$
\end{proof}

\begin{proposition}\label{prop:density payoffs appendix}
Set $1\leq q < \infty$, and let $F:\widehat{\Omega}_T^{LL}\rightarrow \mathbb{R}$ be an $L^q$-payoff. Given any $\varepsilon > 0$, there exists a compact set $\mathcal{K}_\varepsilon \subset \widehat{\Omega}_p$ and a $f$-linear signature payoff with $f\in T((\mathbb R^4)^\ast)$ such that:

\begin{enumerate}
\item $\mathbb{P}[\mathcal{K}_\varepsilon] > 1 - \varepsilon$,
\item $\mathbb E[|F(\widehat{\mathbb X}^{LL, <\infty}) - \langle f, \widehat{\mathbb X}_{0,T}^{LL, <\infty}\rangle|^q\;;\;\mathcal K_\varepsilon]<\varepsilon.$
\end{enumerate}
\end{proposition}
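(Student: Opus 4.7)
The plan is to reduce this $L^q$-version to the continuous-payoff statement already established in Proposition \ref{prop:density sig payoffs}, by first producing a bounded continuous proxy for $F$ on a compact set of large $\mathbb P$-measure via Lusin's theorem and Tietze extension. Since $F$ is merely Borel measurable, we cannot invoke Stone--Weierstrass on it directly, but on such a compact set we may replace $F$ with a globally continuous function to which the earlier density result applies.

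Concretely, as $\widehat\Omega_T^{LL}$ is Polish, the measure $\mathbb P$ is automatically Radon. Given $\eta>0$, Lusin's theorem furnishes a compact set $\mathcal K_0\subset\widehat\Omega_T^{LL}$ with $\mathbb P[\mathcal K_0^c]<\eta$ on which $F$ is continuous, hence bounded by some $M<\infty$. Since $\widehat\Omega_T^{LL}$ is metrisable (hence normal), Tietze's extension theorem yields a continuous $\tilde F:\widehat\Omega_T^{LL}\to\mathbb R$ with $\tilde F|_{\mathcal K_0}=F|_{\mathcal K_0}$ and $\|\tilde F\|_\infty\le M$. Applying Proposition \ref{prop:density sig payoffs} to the continuous payoff $\tilde F$, for any $\eta'>0$ we obtain a compact $\mathcal K_1\subset\widehat\Omega_T^{LL}$ with $\mathbb P[\mathcal K_1^c]<\eta'$ and $f\in T((\mathbb R^4)^\ast)$ such that
$$|\tilde F(\widehat{\mathbb X}^{LL,<\infty})-\langle f,\widehat{\mathbb X}_{0,T}^{LL,<\infty}\rangle|<\eta'\qquad\forall\,\widehat{\mathbb X}^{LL,<\infty}\in\mathcal K_1.$$

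Setting $\mathcal K_\varepsilon:=\mathcal K_0\cap\mathcal K_1$, which is compact with $\mathbb P[\mathcal K_\varepsilon^c]<\eta+\eta'$, we have $F\equiv\tilde F$ on $\mathcal K_\varepsilon$, so the pointwise bound above holds with $F$ in place of $\tilde F$. Integrating gives
$$\mathbb E\bigl[|F(\widehat{\mathbb X}^{LL,<\infty})-\langle f,\widehat{\mathbb X}_{0,T}^{LL,<\infty}\rangle|^q\,;\,\mathcal K_\varepsilon\bigr]\le (\eta')^q,$$
and choosing $\eta,\eta'$ with $\eta+\eta'<\varepsilon$ and $(\eta')^q<\varepsilon$ yields both required conclusions.

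There is no serious obstacle here: the argument is a standard Lusin--Tietze reduction to the continuous case, and the real content has already been distilled into Proposition \ref{prop:density sig payoffs}. The $L^q$-integrability hypothesis is in fact not actively used in the above argument (boundedness of $F$ on $\mathcal K_0$ is automatic from compactness of $\mathcal K_0$ and continuity of the restriction), but it is the natural framework in which to state the conclusion, since without it one could not meaningfully control the $L^q$-norm on the approximating region.
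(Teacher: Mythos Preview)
Your proof is correct and follows a route close to, but not identical with, the paper's. The paper first approximates $F$ in $L^q$-norm by a continuous payoff $G$ (invoking density of continuous functions in $L^q$), then applies Proposition~\ref{prop:density sig payoffs} to uniformly approximate $G$ by a signature payoff on a large compact set, and finally combines the two errors via a triangle inequality in $L^q$. You instead use Lusin--Tietze to manufacture a continuous $\tilde F$ that \emph{coincides exactly} with $F$ on a large compact set, so that after the uniform signature approximation no error-splitting is needed at all. Both are standard reductions to the continuous case; your version has the minor advantage of sidestepping the $q$-th-power triangle inequality entirely (the paper's version, as written, omits the constant $2^{q-1}$ needed when $q>1$), while the paper's route is slightly more economical in quoting a single $L^q$-density fact rather than Lusin and Tietze separately. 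Your closing observation that the $L^q$-integrability of $F$ plays no active role in your argument is accurate, whereas in the paper's proof it is what makes the initial $L^q$-approximation by a continuous $G$ available.
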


\begin{proof}
Let $\varepsilon>0$. Since continuous functions are dense in $L^q$, there exists a continuous payoff $G:\widehat \Omega_T \to \mathbb R$ such that $\lVert F-G\rVert_{L^q}^q<\varepsilon / 2$. By Proposition \ref{prop:density payoffs appendix} we may pick $\mathcal K_\varepsilon\subset \widehat \Omega$ and $f\in T((\mathbb R^4)^\ast)$ such that $\mathbb P[\mathcal K_\varepsilon] > 1-\varepsilon$, $\mathbb E[|G(\widehat{\mathbb X}^{LL,<\infty})|^q; \mathcal K_\varepsilon^c]<\varepsilon$ and
$$|G(\widehat {\mathbb X}^{<\infty}) - \langle f, \widehat{\mathbb X}_{0,T}^{LL,<\infty}\rangle|^q<\varepsilon/2 \quad\forall\,\widehat{\mathbb X}^{LL,<\infty}\in \mathcal K_\varepsilon.$$
Then,
\begin{align*}
\mathbb E[|F(\widehat{\mathbb X}^{LL, <\infty}) - \langle f, \widehat{\mathbb X}_{0,T}^{LL, <\infty}\rangle|^q\;;\;\mathcal K_\varepsilon] &\leq \mathbb E[|F(\widehat{\mathbb X}^{LL, <\infty}) - G(\widehat{\mathbb X}^{LL, <\infty})|^q\;;\;\mathcal K_\varepsilon]\\
& + \mathbb E[|G(\widehat{\mathbb X}^{LL, <\infty}) - \langle f, \widehat{\mathbb X}_{0,T}^{LL, <\infty}\rangle|^q\;;\;\mathcal K_\varepsilon]\\
&<\varepsilon/2 + \varepsilon/2 = \varepsilon.
\end{align*}

\end{proof}

\begin{customproposition}{\ref{prop:density sig strategies}}
Let $\mathcal K\subset \Lambda_T$ be a compact set. Then, given any trading strategy $\theta\in \mathcal T$, there exists $\ell \in T((\mathbb R^2)^\ast)$ such that
$$|\theta(\widehat{\mathbb X}|_{[0,t]}^{<\infty}) - \langle \ell, \widehat{\mathbb X}_{0,t}^{<\infty}\rangle | < \varepsilon \quad \forall\,\widehat{\mathbb X}|_{[0,t]}^{<\infty} \in \mathcal K.$$
\end{customproposition}

\begin{proof}
Given $\theta_1,\theta_2\in \mathcal{T}_{\mathrm{sig}}(\mathcal{K})$, there exist $\ell_1,\ell_2\in T((\mathbb R^2)^\ast)$ such that $\theta_i(\widehat{\mathbb{X}}|_{[0, t]}) = \langle \ell_i, \widehat{\mathbb{X}}_{0, t}^{<\infty}\rangle$ for each $\widehat{\mathbb{X}}|_{[0, t]}\in \mathcal{K}$ and $i=1, 2$. Define $\theta_{1, 2}(\mathbb{X}|_{[0, t]}) := \langle \ell_1 \shuffle \ell_2, \mathbb{X}_{0, t}^{<\infty}\rangle$. Then,
\begin{align*}
\theta_1(\widehat{\mathbb{X}}|_{[0, t]}) \theta_2(\widehat{\mathbb{X}}|_{[0, t]}) &= \langle \ell_1, \widehat{\mathbb{X}}_{0, t}^{<\infty}\rangle\langle \ell_2, \widehat{\mathbb{X}}_{0, t}^{<\infty}\rangle \\
&= \langle \ell_1 \shuffle \ell_2, \widehat{\mathbb{X}}_{0, t}^{<\infty}\rangle\\
&= \theta_{1, 2}(\widehat{\mathbb{X}}|_{[0, t]}).
\end{align*} Hence, $\mathcal{T}_{\mathrm{sig}}(\mathcal{K})$ is an algebra. Given that it also separates points (\cite{horatio}) and contains constants, it follows by Stone--Weierstrass theorem that given any trading strategy $\theta\in \mathcal T$, there exists $\ell \in T((\mathbb R^2)^\ast)$ such that
$$|\theta(\widehat{\mathbb X}|_{[0,t]}^{<\infty}) - \langle \ell, \widehat{\mathbb X}_{0,t}^{<\infty}\rangle | < \varepsilon \quad \forall\,\widehat{\mathbb X}|_{[0,t]}^{<\infty} \in \mathcal K.$$
\end{proof}

\begin{theorem}
Let
$$a:=\inf_{\theta \in \mathcal{T}^q(\Lambda_T)} \mathbb{E} \left [ P \left ( F(\widehat{\mathbb{X}}^{LL,<\infty}) - p_0 - \int_0^T \theta(\widehat {\mathbb X}|_{[0, t]}^{<\infty}) dX_t\right )\right ]$$
be the infimum of the optimal polynomial hedging problem \eqref{eq:optimal hedging}. Given any $\varepsilon>0$, there exists a compact set $\mathcal K_\varepsilon\subset \widehat \Omega_T$, a linear signature payoff given by $f\in T((\mathbb R^4)^\ast)$ and a linear signature trading strategy given by $\ell\in T((\mathbb R^2)^\ast)$ such that:
\begin{enumerate}
\item $\mathbb P[\mathcal K_\varepsilon] > 1-\varepsilon$,
\item $|F(\widehat{\mathbb X}^{LL,<\infty}) - \langle f, \widehat{\mathbb X}_{0,T}^{LL, <\infty}\rangle |<\varepsilon\quad \forall\, \widehat{\mathbb X}^{LL,<\infty}\in \mathcal K_\varepsilon$,
\item $|\theta(\widehat{\mathbb X}|_{[0,t]}^{<\infty}) - \langle \ell, \widehat{\mathbb X}_{0,t}^{<\infty}\rangle | < \varepsilon \quad \forall\,\widehat{\mathbb X}^{<\infty} \in \mathcal K_\varepsilon$ and $t\in [0, T]$,
\item $|a_\varepsilon - a| \leq \varepsilon$, where
$$a_\varepsilon := \mathbb E \left [ P\left (\langle f, \widehat{\mathbb X}_{0,T}^{LL, <\infty}\rangle - p_0 - \int_0^T \langle \ell, \widehat{\mathbb X}_{0,t}^{<\infty}\rangle dX_t\right )\;;\;\mathcal K_\varepsilon\right ].$$
\end{enumerate}
\end{theorem}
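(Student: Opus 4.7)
The plan is to chain together three approximations: (i) approximate a near-minimiser $\theta^\ast$ of \eqref{eq:optimal hedging} by a linear signature strategy $\langle \ell, \cdot\rangle$, (ii) approximate $F$ by a linear signature payoff $\langle f, \cdot\rangle$, and (iii) control the resulting error in the $P$-expectation using the local Lipschitz property of the polynomial $P$ on bounded sets. Throughout, the compact set $\mathcal K_\varepsilon$ is the device that turns the uniform density statements of Proposition \ref{prop:density sig payoffs} and Proposition \ref{prop:density sig strategies} into approximations in $L^q$.

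More precisely, first I would fix $\eta>0$ (to be calibrated in terms of $\varepsilon$, $q$ and bounds on $F$, $\theta^\ast$, $X$) and choose $\theta^\ast \in \mathcal T^q(\Lambda_T)$ such that the $P$-expectation associated to $\theta^\ast$ is within $\eta$ of $a$. Next I would invoke tightness of $\mathbb P$ on the Polish space $\widehat \Omega_T^{LL}$ to select a compact set $\mathcal K_\varepsilon \subset \widehat \Omega_T^{LL}$ with $\mathbb P[\mathcal K_\varepsilon] > 1-\varepsilon$ and, simultaneously, such that the tail contributions $\mathbb E[|F|^q;\mathcal K_\varepsilon^c]$ and $\mathbb E[|\int_0^T \theta^\ast(\widehat{\mathbb X}|_{[0,t]}^{<\infty})dX_t|^q;\mathcal K_\varepsilon^c]$ are smaller than $\eta$. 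Enlarging $\mathcal K_\varepsilon$ if necessary, I would also ensure that the set of stopped paths $\{\widehat{\mathbb X}|_{[0,t]}^{<\infty}: \widehat{\mathbb X}^{LL,<\infty}\in \mathcal K_\varepsilon,\, t\in[0,T]\}$ is a compact subset $\mathcal K_\varepsilon^\Lambda\subset \Lambda_T$. Applying Proposition \ref{prop:density sig payoffs} on $\mathcal K_\varepsilon$ yields $f\in T((\mathbb R^4)^\ast)$ satisfying 2., and applying Proposition \ref{prop:density sig strategies} on $\mathcal K_\varepsilon^\Lambda$ to the continuous strategy $\theta^\ast$ yields $\ell \in T((\mathbb R^2)^\ast)$ satisfying 3.

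The remaining task is to compare the two $P$-expectations on $\mathcal K_\varepsilon$. On this event the integrands $\theta^\ast(\widehat{\mathbb X}|_{[0,t]}^{<\infty})$ and $\langle \ell, \widehat{\mathbb X}_{0,t}^{<\infty}\rangle$ differ by at most $\varepsilon$ uniformly in $t$, so by Itô isometry (for $q=2$) or the Burkholder--Davis--Gundy inequality (for general $q$) the corresponding stochastic integrals against $X$ differ in $L^q(\mathcal K_\varepsilon)$ by at most $C\,\varepsilon$, where $C$ depends on moments of $\langle X\rangle_T$ that are finite by the assumption on $\mathbb E[\widehat{\mathbb X}^{LL,\leq N}]$. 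Combining this with 2., all four quantities inside $P$ lie in a bounded region of $\mathbb R$ on $\mathcal K_\varepsilon$ (up to a set of small $L^q$-measure absorbed into $\eta$), so the local Lipschitz constant of $P$ on that bounded region gives a pointwise estimate $|P(\cdot) - P(\cdot)| \leq C'\varepsilon$ on $\mathcal K_\varepsilon$. Taking expectations and recalling the tail bounds from the choice of $\mathcal K_\varepsilon$, one concludes $|a_\varepsilon - a|\leq \varepsilon$ after recalibrating $\eta$ in terms of $\varepsilon$, $q$, $C$ and $C'$.

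The main obstacle is the stochastic integral comparison in step 4: uniform closeness of integrands on a compact set of stopped paths only gives pathwise closeness of the integrands on the event $\{\widehat{\mathbb X}^{LL,<\infty}\in \mathcal K_\varepsilon\}$, and one must be careful to ensure (a) the stopped-path compact set $\mathcal K_\varepsilon^\Lambda$ does capture all the stopped versions of paths in $\mathcal K_\varepsilon$, and (b) the BDG-type passage to $L^q$ closeness of the integrals is valid despite the localisation to $\mathcal K_\varepsilon$. This is essentially a technical accounting exercise but is where the truncation by the compact set becomes most delicate.
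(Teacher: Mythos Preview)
Your approach is essentially the same as the paper's: both proofs combine Proposition~\ref{prop:density sig payoffs} (in its $L^q$ form, Proposition~\ref{prop:density payoffs appendix}) and Proposition~\ref{prop:density sig strategies} and then invoke a triangle-inequality-type comparison. In fact the paper's proof is a single sentence to that effect, whereas your sketch actually spells out the mechanics---choosing a near-minimiser, building the compact set of stopped paths, and passing from uniform closeness of integrands to $L^q$ closeness of the stochastic integrals via BDG. The technical caveat you flag about the localisation to $\mathcal K_\varepsilon$ in the BDG step is a genuine detail the paper does not address, so in that sense your proposal is more careful than the published proof rather than different from it.
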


\begin{proof}
Follows from Proposition \ref{prop:density payoffs appendix} and Proposition \ref{prop:density sig strategies}, and the triangle inequality.
\end{proof}

\end{appendices}

\bibliographystyle{apalike}
\bibliography{references}

\newcommand{\etalchar}[1]{$^{#1}$}
\begin{thebibliography}{PASG{\etalchar{+}}18}

\bibitem[AC17]{rama1}
A.~Ananova and R.~Cont.
\newblock Pathwise integration with respect to paths of finite quadratic
  variation.
\newblock {\em Journal de Math{\'e}matiques Pures et Appliqu{\'e}es},
  107(6):737--757, 2017.

\bibitem[Arr73]{arrow}
K.~J. Arrow.
\newblock The role of securities in the optimal allocation of risk-bearing.
\newblock In {\em Readings in Welfare Economics}, pages 258--263. Palgrave,
  London, 1973.

\bibitem[ASL98]{arrowdebreuempirical}
Y.~A\"it-Sahalia and A.~W. Lo.
\newblock Nonparametric estimation of state-price densities implicit in
  financial asset prices.
\newblock {\em The Journal of Finance}, 53(2):499--547, 1998.

\bibitem[BCC{\etalchar{+}}16]{rama3}
Vlad Bally, Lucia Caramellino, Rama Cont, Frederic Utzet, and Josep Vives.
\newblock {\em Stochastic integration by parts and functional It{\^o}
  calculus}.
\newblock Springer, 2016.

\bibitem[BCH{\etalchar{+}}17]{promel}
Mathias Beiglb{\"o}ck, Alexander~MG Cox, Martin Huesmann, Nicolas Perkowski,
  and David~J Pr{\"o}mel.
\newblock Pathwise superreplication via vovk’s outer measure.
\newblock {\em Finance and Stochastics}, 21(4):1141--1166, 2017.

\bibitem[BGLY16]{horatio}
H.~Boedihardjo, X.~Geng, T.~Lyons, and D.~Yang.
\newblock The signature of a rough path: uniqueness.
\newblock {\em Advances in Mathematics}, 293:720--737, 2016.

\bibitem[BGTW19]{deephedging}
Hans Buehler, Lukas Gonon, Josef Teichmann, and Ben Wood.
\newblock Deep hedging.
\newblock {\em Quantitative Finance}, pages 1--21, 2019.

\bibitem[BL78]{putcall}
D.~T. Breeden and R.~H. Litzenberger.
\newblock Prices of state-contingent claims implicit in option prices.
\newblock {\em Journal of business}, pages 621--651, 1978.

\bibitem[BS73]{bs}
F.~Black and M.~Scholes.
\newblock The pricing of options and corporate liabilities.
\newblock {\em Journal of political economy, 81}, 3:637--654, 1973.

\bibitem[CF13]{rama2}
R.~Cont and D.~A. Fourni{\'e}.
\newblock Functional it{\^o} calculus and stochastic integral representation of
  martingales.
\newblock {\em The Annals of Probability}, 41(1):109--133, 2013.

\bibitem[CL16]{ilya}
I.~Chevyrev and T.~Lyons.
\newblock Characteristic functions of measures on geometric rough paths.
\newblock {\em The Annals of Probability}, 44(6):4049--4082, 2016.

\bibitem[Deb87]{debreu}
G.~Debreu.
\newblock {\em Theory of value: An axiomatic analysis of economic equilibrium}.
\newblock Yale University Press, No. 17), 1987.

\bibitem[DFW98]{localvolempirical}
B.~Dumas, J.~Fleming, and R.~E. Whaley.
\newblock Implied volatility functions: Empirical tests.
\newblock {\em The Journal of Finance}, 53(6):2059--2106, 1998.

\bibitem[DGR{\etalchar{+}}02]{entropic1}
F.~Delbaen, P.~Grandits, T.~Rheinl{"a}nder, D.~Samperi, M.~Schweizer, and
  C.~Stricker.
\newblock Exponential hedging and entropic penalties.
\newblock {\em Mathematical finance, 12}, 2:99--123, 2002.

\bibitem[DMKR95]{meanvariance4}
Giovanni~B Di~Masi, Yu~M Kabanov, and Wolfgang~J Runggaldier.
\newblock Mean-variance hedging of options on stocks with markov volatilities.
\newblock {\em Theory of Probability \& Its Applications}, 39(1):172--182,
  1995.

\bibitem[DR91]{meanvariance2}
D.~Duffie and H.~R. Richardson.
\newblock Mean-variance hedging in continuous time.
\newblock {\em The Annals of Applied Probability}, 1(1):1--15, 1991.

\bibitem[Dup09]{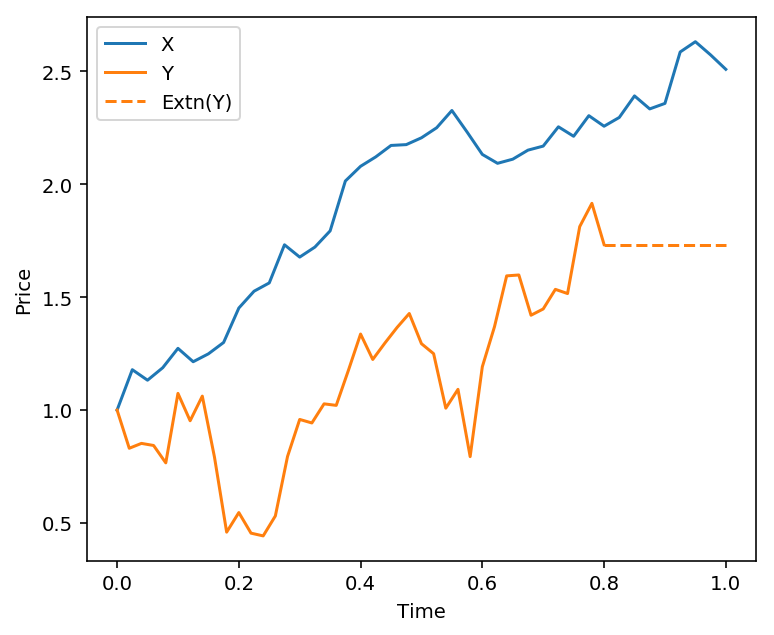}
B.~Dupire.
\newblock Functional It{\^o} calculus, 2009.

\bibitem[FHL16]{guy}
G.~Flint, B.~Hambly, and T.~Lyons.
\newblock Discretely sampled signals and the rough hoff process.
\newblock {\em Stochastic Processes and their Applications}, 126(9):2593--2614,
  2016.

\bibitem[Foe85]{foellmer}
H.~S. Foellmer.
\newblock Hedging of non-redundant contingent claims.
\newblock {\em (No}, 3, 1985.

\bibitem[FV10]{frizvictoir}
P.~K. Friz and N.~B. Victoir.
\newblock {\em Multidimensional stochastic processes as rough paths: theory and
  applications}, volume 120).
\newblock Cambridge University Press, 2010.

\bibitem[{Gal}94]{bookstopped}
Jean-François~Le {Gall}.
\newblock A path-valued markov process and its connections with partial
  differential equations.
\newblock pages 185--212, 1994.

\bibitem[GH02]{entropic2}
M.~R. {Grasselli} and T.~R. {Hurd}.
\newblock A monte carlo method for exponential hedging of contingent claims.
\newblock {\em arXiv preprint arXiv:math/0211383}, 2002.

\bibitem[GOR14]{goutte}
S.~Goutte, N.~Oudjane, and F.~Russo.
\newblock Variance optimal hedging for continuous time additive processes and
  applications.
\newblock {\em Stochastics An International Journal of Probability and
  Stochastic Processes, 86}, 1:147--185, 2014.

\bibitem[Gra13]{ml1}
B.~Graham.
\newblock Sparse arrays of signatures for online character recognition. arxiv.
\newblock preprint, 2013.

\bibitem[HKK06]{levy}
F.~Hubalek, J.~Kallsen, and L.~Krawczyk.
\newblock Variance-optimal hedging for processes with stationary independent
  increments.
\newblock {\em The Annals of Applied Probability, 16}, 2:853--885, 2006.

\bibitem[HLP94]{lo}
J.~M. Hutchinson, A.~W. Lo, and T.~Poggio.
\newblock A nonparametric approach to pricing and hedging derivative securities
  via learning networks.
\newblock {\em The Journal of Finance, 49}, 3:851--889, 1994.

\bibitem[JMSS12]{bsde}
M.~Jeanblanc, M.~Mania, M.~Santacroce, and M.~Schweizer.
\newblock Mean-variance hedging via stochastic control and bsdes for general
  semimartingales.
\newblock {\em The Annals of Applied Probability}, 22(6):2388--2428, 2012.

\bibitem[LCL07]{lyonsbook}
T.~J. Lyons, M.~Caruana, and T.~L{\'e}vy.
\newblock {\em Differential equations driven by rough paths}.
\newblock Springer, Berlin, 2007.

\bibitem[LJY17]{ml4}
Songxuan Lai, Lianwen Jin, and Weixin Yang.
\newblock Online signature verification using recurrent neural network and
  length-normalized path signature descriptor.
\newblock In {\em 2017 14th IAPR International Conference on Document Analysis
  and Recognition (ICDAR)}, volume~1, pages 400--405. IEEE, 2017.

\bibitem[Lyo98]{lyonsoriginal}
Terry~J Lyons.
\newblock Differential equations driven by rough signals.
\newblock {\em Revista Matem{\'a}tica Iberoamericana}, 14(2):215--310, 1998.

\bibitem[{Lyo}14]{korea}
Terry {Lyons}.
\newblock Rough paths, signatures and the modelling of functions on streams.
\newblock {\em arXiv preprint arXiv:1405.4537}, 2014.

\bibitem[LZJ17]{ml3}
Chenyang Li, Xin Zhang, and Lianwen Jin.
\newblock Lpsnet: a novel log path signature feature based hand gesture
  recognition framework.
\newblock In {\em Proceedings of the IEEE International Conference on Computer
  Vision}, pages 631--639, 2017.

\bibitem[Mer73]{merton}
R.~C. Merton.
\newblock Theory of rational option pricing.
\newblock {\em The Bell Journal of economics and management science}, pages
  141--183, 1973.

\bibitem[Ni12]{hao}
H.~Ni.
\newblock {\em The expected signature of a stochastic process}.
\newblock Doctoral dissertation, University of Oxford, 2012.

\bibitem[PA18]{signature_pricing}
Imanol Perez~Arribas.
\newblock Derivatives pricing using signature payoffs.
\newblock {\em arXiv preprint arXiv:1809.09466}, 2018.

\bibitem[PASG{\etalchar{+}}18]{ml6}
I.~Perez~Arribas, K.~Saunders, G.~Goodwin, J.~Geddes, and T.~Lyons.
\newblock {\em A signature-based machine learning model for bipolar disorder
  and borderline personality disorder}.
\newblock To appear Translational Psychiatry, 2018.

\bibitem[Rig16]{riga}
Candia Riga.
\newblock A pathwise approach to continuous-time trading.
\newblock {\em arXiv preprint arXiv:1602.04946}, 2016.

\bibitem[Sch92]{meanvariance3}
M.~Schweizer.
\newblock Mean-variance hedging for general claims.
\newblock {\em The annals of applied probability}, pages 171--179, 1992.

\bibitem[Sch10]{meanvariance1}
M.~Schweizer.
\newblock {\em Mean--Variance Hedging}.
\newblock Encyclopedia of Quantitative Finance, 2010.

\bibitem[XSJ{\etalchar{+}}18]{ml2}
Z.~Xie, Z.~Sun, L.~Jin, H.~Ni, and T.~Lyons.
\newblock Learning spatial-semantic context with fully convolutional recurrent
  network for online handwritten chinese text recognition.
\newblock {\em IEEE transactions on pattern analysis and machine intelligence},
  40(8):1903--1917, 2018.

\bibitem[YLN{\etalchar{+}}17]{ml5}
Weixin {Yang}, Terry {Lyons}, Hao {Ni}, Cordelia {Schmid}, Lianwen {Jin}, and
  Jiawei {Chang}.
\newblock Leveraging the path signature for skeleton-based human action
  recognition.
\newblock {\em arXiv preprint arXiv:1707.03993}, 2017.

\end{thebibliography}

\end{document}